\documentclass{acm_proc_article-sp}
\usepackage{amsmath,amsfonts,amssymb}
\usepackage[amsmath,thmmarks,hyperref]{ntheorem}
\usepackage{mathtools}
\usepackage{enumitem}
\usepackage{xspace}
\usepackage{nag}
\usepackage{comment}




\newcommand{\R}{\ensuremath{\mathbb{R}}}
\newcommand{\T}{\ensuremath{\mathbb{T}}}
\newcommand{\Z}{\ensuremath{\mathbb{Z}}}



\DeclarePairedDelimiter\inner{\langle}{\rangle}
\DeclarePairedDelimiter\abs{\lvert}{\rvert}
\DeclarePairedDelimiter\set{\{}{\}}
\DeclarePairedDelimiter\parens{(}{)}

\DeclarePairedDelimiter\bracks{[}{]}

\DeclarePairedDelimiter\floor{\lfloor}{\rfloor}

\DeclarePairedDelimiter\length{\lVert}{\rVert}

\newcommand{\matA}{\ensuremath{\mathbf{A}}}
\newcommand{\matB}{\ensuremath{\mathbf{B}}}
\newcommand{\matC}{\ensuremath{\mathbf{C}}}

\newcommand{\matG}{\ensuremath{\mathbf{G}}}

\newcommand{\matI}{\ensuremath{\mathbf{I}}}

\newcommand{\matR}{\ensuremath{\mathbf{R}}}

\newcommand{\matU}{\ensuremath{\mathbf{U}}}

\newcommand{\veca}{\ensuremath{\mathbf{a}}}
\newcommand{\vecb}{\ensuremath{\mathbf{b}}}
\newcommand{\vecc}{\ensuremath{\mathbf{c}}}

\newcommand{\vece}{\ensuremath{\mathbf{e}}}
\newcommand{\vecf}{\ensuremath{\mathbf{f}}}

\newcommand{\vecs}{\ensuremath{\mathbf{s}}}

\newcommand{\vecu}{\ensuremath{\mathbf{u}}}
\newcommand{\vecv}{\ensuremath{\mathbf{v}}}

\newcommand{\vecx}{\ensuremath{\mathbf{x}}}
\newcommand{\vecy}{\ensuremath{\mathbf{y}}}
\newcommand{\vecz}{\ensuremath{\mathbf{z}}}
\newcommand{\veczero}{\ensuremath{\mathbf{0}}}

\theoremstyle{plain}            
\newtheorem{theorem}{Theorem}[section]
\newtheorem{lemma}[theorem]{Lemma}
\newtheorem{corollary}[theorem]{Corollary}

\newtheorem{claim}[theorem]{Claim}

\theoremstyle{definition}       
\newtheorem{definition}[theorem]{Definition}

\theoremstyle{remark}           

\numberwithin{equation}{section}


\DeclareMathOperator{\poly}{poly}





\newcommand{\problem}[1]{\ensuremath{\mathsf{#1}}\xspace}




\newcommand{\lat}{\mathcal{L}}

\newcommand{\smooth}{\eta}
\newcommand{\smootheps}{\smooth_{\epsilon}}



\newcommand{\gs}[1]{\ensuremath{\widetilde{#1}}}










\newcommand{\gapsvp}{\problem{GapSVP}}

\newcommand{\usvp}{\problem{uSVP}}
\newcommand{\sivp}{\problem{SIVP}}

\newcommand{\bdd}{\problem{BDD}}

\newcommand{\sis}{\problem{SIS}}

\newcommand{\lwe}{\problem{LWE}}
\newcommand{\rlwe}{\problem{RLWE}}

\newcommand{\extlwe}{\problem{extLWE}}

\newcommand{\calZ}{\mathcal{Z}}


\usepackage{amsbsy}
\usepackage[mathscr]{eucal}















































\newif\ifnotes\notesfalse

\ifnotes
\usepackage{color}
\definecolor{mygrey}{gray}{0.50}
\newcommand{\notename}[2]{{\textcolor{mygrey}{\footnotesize{\bf (#1:} {#2}{\bf ) }}}}
\newcommand{\noteswarning}{{\begin{center} {\Large WARNING: NOTES ON}\end{center}}}

\else

\newcommand{\notename}[2]{{}}
\newcommand{\noteswarning}{{}}

\fi

\newcommand{\cnote}[1]{{\notename{Chris}{#1}}}
\newcommand{\znote}[1]{{\notename{Zvika}{#1}}}
\newcommand{\onote}[1]{{\notename{Oded}{#1}}}

\newcommand{\dnote}[1]{{\notename{Damien}{#1}}}

\newcommand{\eps}{\varepsilon}

\renewcommand{\epsilon}{\varepsilon}

\newcommand{\myparagraph}[1]{\paragraph{{#1}.}}




\def\binset{\{0,1\}}
\def\bbZ{{\mathbb Z}}
\def\bbN{{\mathbb N}}
\def\bbT{{\mathbb T}}
\def\cA{{\cal A}}
\def\cB{{\cal B}}
\def\cZ{{\cal Z}}

\def\getsr{\gets}
\def\getsd{{:=}}

\newcommand{\mx}[1]{\mathbf{{#1}}}
\newcommand{\vc}[1]{\mathbf{{#1}}}

\newcommand{\adv}{\mathrm{Adv}}

\newcommand{\zolwe}{\mathsf{binLWE}}
\def\mAh{\hat{\mx{A}}}

\renewcommand{\notename}[2]{{}}
\renewcommand{\noteswarning}{{}}

\renewcommand{\vec}{\mathbf}
\newcommand{\calD}{\mathcal{D}}


\begin{document}

\title{Classical Hardness of Learning with Errors}
\subtitle{[Extended Abstract]}

\numberofauthors{5}

\author{
\alignauthor Zvika Brakerski\titlenote{Supported by a Simons Postdoctoral Fellowship and DARPA.}\\
	\affaddr{Stanford University} \\
%
\alignauthor Adeline Langlois \\
	\affaddr{Laboratoire LIP, ENS de Lyon
    }\\
%
\alignauthor Chris Peikert\titlenote{  This material is based upon work supported by the National Science
  Foundation under CAREER Award~CCF-1054495, by DARPA under
  agreement number FA8750-11-C-0096, and by the Alfred P.~Sloan
  Foundation.  Any opinions, findings, and conclusions or
  recommendations expressed in this material are those of the
  author(s) and do not necessarily reflect the views of the National
  Science Foundation, DARPA or the U.S.~Government, or the Sloan
  Foundation.  The U.S. Government is authorized to reproduce and
  distribute reprints for Governmental purposes notwithstanding any
  copyright notation thereon.} \\
	\affaddr{School of Computer Science, Georgia Institute of Technology}
%
  \and
\alignauthor Oded Regev\titlenote{Supported by a European Research Council (ERC)
  Starting Grant. Part of the work done while the author was with
  the CNRS, DI, ENS, Paris.}\\
  	\affaddr{Courant Institute,}\\
 	\affaddr{New York University}
\alignauthor Damien Stehl\'e \titlenote{The author was
  partly supported by the Australian Research Council Discovery Grant
  DP110100628.} \\
	\affaddr{Laboratoire LIP, ENS de Lyon
 	} \\
}

\maketitle

\begin{abstract}
  We show that the Learning with Errors (LWE) problem is \emph{classically} at least as hard as standard worst-case lattice problems, even with polynomial modulus. 
Previously this was only known under \emph{quantum} reductions.

Our techniques capture the tradeoff between the dimension and the modulus of LWE instances, leading to a much better understanding of the
landscape of the problem. The proof is inspired by techniques from
several recent cryptographic constructions, most notably fully
homomorphic encryption schemes.


\end{abstract}

\noteswarning

\section{Introduction}
\label{sec:introduction}

Over the last decade, lattices have emerged as a very attractive
foundation for cryptography.  The appeal of lattice-based primitives
stems from the fact that their security can be based on
\emph{worst-case} hardness assumptions, that they appear to remain
secure even against \emph{quantum} computers, that they can be
quite efficient, and that, somewhat surprisingly, for certain advanced
tasks such as fully homomorphic encryption no other cryptographic
assumption is known to suffice.

Virtually all recent lattice-based cryptographic schemes are based
directly upon one of two natural average-case problems that have
been shown to enjoy worst-case hardness guarantees: the
\emph{short integer solution} ($\sis$) problem and the
\emph{learning with errors} ($\lwe$) problem.
The former dates back to Ajtai's groundbreaking
work~\cite{Ajtai96}, who showed that it is at least as
hard as approximating several worst-case lattice problems, such as the
(decision version of the) shortest vector problem, known as $\gapsvp$, to within a
polynomial factor in the lattice dimension. This hardness result
was tightened in followup work (e.g.,~\cite{DBLP:journals/siamcomp/MicciancioR07}),
leading to a somewhat satisfactory understanding of the hardness of the $\sis$ problem.
The $\sis$ problem
has been the foundation for
one-way~\cite{Ajtai96} and
collision-resistant hash
functions~\cite{DBLP:journals/eccc/ECCC-TR96-042}, identification
schemes~\cite{DBLP:conf/crypto/MicciancioV03,DBLP:conf/pkc/Lyubashevsky08,DBLP:conf/asiacrypt/KawachiTX08},
and digital
signatures~\cite{DBLP:conf/stoc/GentryPV08,DBLP:conf/eurocrypt/CashHKP10,DBLP:conf/pkc/Boyen10,DBLP:conf/eurocrypt/MicciancioP12,DBLP:conf/eurocrypt/Lyubashevsky12}. 

Our focus in this paper is on the latter problem, learning with errors.
In this problem our goal is to distinguish with some non-negligible advantage between the following
two distributions:
\[
\parens{\parens{ \veca_i, \inner{\vc{a}_i, \vc{s}}+e_i \bmod q}}_{i}
\quad \text{and} \quad
\parens{\parens{ \veca_i, u_i }}_{i}~,
\]
where $\vecs$ is chosen uniformly from $\Z_q^n$
and so are the $\veca_i \in \Z_q^n$,  $u_i$ are chosen uniformly from $\Z_q$,
and the ``noise'' $e_i \in \Z$ is sampled from some distribution supported on small numbers, typically a (discrete) Gaussian
distribution with standard deviation $\alpha q$ for $\alpha = o(1)$.

The $\lwe$ problem has proved to be amazingly versatile, serving as
the basis for a multitude of cryptographic constructions:
secure public-key encryption under both
chosen-plaintext~\cite{DBLP:journals/jacm/Regev09,DBLP:conf/crypto/PeikertVW08,DBLP:conf/ctrsa/LindnerP11}
and
chosen-ciphertext~\cite{DBLP:conf/stoc/PeikertW08,DBLP:conf/stoc/Peikert09,DBLP:conf/eurocrypt/MicciancioP12}
attacks, oblivious transfer~\cite{DBLP:conf/crypto/PeikertVW08},
identity-based
encryption~\cite{DBLP:conf/stoc/GentryPV08,DBLP:conf/eurocrypt/CashHKP10,DBLP:conf/eurocrypt/AgrawalBB10,DBLP:conf/crypto/AgrawalBB10},
various forms of leakage-resilient cryptography (e.g.,
\cite{DBLP:conf/tcc/AkaviaGV09,DBLP:conf/crypto/ApplebaumCPS09,DBLP:conf/innovations/GoldwasserKPV10}),
fully homomorphic
encryption~\cite{DBLP:conf/focs/BrakerskiV11,DBLP:conf/innovations/BrakerskiGV12,DBLP:conf/crypto/Brakerski12}
(following the seminal work of Gentry~\cite{DBLP:conf/stoc/Gentry09}),
and much more. It was also used to show hardness of learning problems~\cite{DBLP:conf/focs/KlivansS06}.

Contrary to the $\sis$ problem, however, the hardness of $\lwe$ is not
sufficiently well understood. The main hardness reduction for
$\lwe$~\cite{DBLP:journals/jacm/Regev09} is similar to the one for $\sis$
mentioned above, except that it is \emph{quantum}. This means that
the existence of an efficient algorithm for $\lwe$, even
a classical (i.e., non-quantum) one, only implies the existence
of an efficient \emph{quantum} algorithm for lattice problems.
This state of affairs is quite unsatisfactory: even though
one might conjecture that efficient quantum algorithms for lattice
problems do not exist, our understanding of quantum algorithms
is still at its infancy. It is therefore highly desirable to
come up with a \emph{classical} hardness reduction for $\lwe$.

Progress in this direction was made
by~\cite{DBLP:conf/stoc/Peikert09} (with some simplifications in the
followup by Lyubashevsky and
Micciancio~\cite{DBLP:conf/crypto/LyubashevskyM09}).  The main result
there is that $\lwe$ with \emph{exponential} modulus is as hard as
some standard lattice problems using a
classical reduction.  
As that hardness result crucially relies on the
exponential modulus, the open question remained as to whether $\lwe$
is hard for smaller moduli, in particular polynomial moduli. In
addition to being an interesting question in its own right, this question is of special importance since many cryptographic applications, as well as the learning theory
result of Klivans and Sherstov~\cite{DBLP:conf/focs/KlivansS06}, are instantiated in this setting.
Some additional evidence that reducing the
modulus is a fundamental question comes from the Learning Parity with
Noise (LPN) problem, which can be seen as $\lwe$ with modulus $2$
(albeit with a different error distribution), and whose hardness is a
long-standing open question.  We remark
that~\cite{DBLP:conf/stoc/Peikert09} does include a classical hardness
of $\lwe$ with polynomial modulus, albeit one based on a
non-standard lattice problem, whose hardness is arguably as debatable
as that of the $\lwe$ problem itself.

To summarize, prior to our work, the existence of
an efficient algorithm for $\lwe$ with polynomial modulus was only known to imply an efficient \emph{quantum}
algorithm for lattice problems, or an efficient classical algorithm for a
non-standard lattice problem. While both consequences are unlikely,
they are arguably not as earth-shattering as an efficient classical
algorithm for lattice problems. Hence, some concern about the hardness of
$\lwe$ persisted, tainting the plethora of cryptographic applications
based on it.

\myparagraph{Main result}
We provide the first classical hardness reduction of $\lwe$ with polynomial modulus. Our reduction is
the first to show that the existence of an efficient classical algorithm for $\lwe$
with any subexponential modulus would indeed have earth-shattering consequences: it would imply an efficient algorithm
for worst-case instances of standard lattice problems.

\onote{should we add somewhere a formal statement of our main result? at least a combination of sections 3 and 4?}
\begin{theorem}[Informal]\label{thm:informal}
Solving $n$-dimensional $\lwe$ with $\poly(n)$ modulus
implies an equally efficient solution to a worst-case lattice problem in dimension $\sqrt{n}$.
\end{theorem}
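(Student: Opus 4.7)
The plan is to combine two reductions. The first is the classical worst-case-to-LWE reduction of~\cite{DBLP:conf/stoc/Peikert09}, which already gives that $k$-dimensional $\gapsvp$ reduces classically to $k$-dimensional $\lwe$ but \emph{only} for a modulus $Q = 2^{\Theta(k)}$. The second, and this is the core new contribution, is a \emph{modulus--dimension tradeoff} showing that $\lwe$ in small dimension $\sqrt n$ with large modulus $Q = q^\ell$ is no harder than $\lwe$ in larger dimension $\sqrt n \cdot \ell = n$ with small modulus $q = \poly(n)$. Chaining these two reductions, with $k = \sqrt n$ and $\ell = \Theta(\sqrt n / \log n)$ so that $q^\ell = 2^{\Theta(\sqrt n)}$, is exactly what the theorem demands.

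The heart of the argument is the tradeoff. Starting from an $\lwe$ sample $(\vec a, b = \inner{\vec a, \vec s} + e \bmod q^\ell)$ with $\vec a \in \Z_{q^\ell}^{\sqrt n}$ and $\vec s \in \Z_{q^\ell}^{\sqrt n}$, I would decompose the secret into its base-$q$ digits, $\vec s = \sum_{j=0}^{\ell-1} q^j \vec s_j$ with $\vec s_j \in \Z_q^{\sqrt n}$, producing an expanded short secret $\vec s' = (\vec s_0, \ldots, \vec s_{\ell-1}) \in \Z_q^n$. Then $b = \inner{\vec a', \vec s'} + e \bmod q^\ell$ for the $\vec a' \in \Z_{q^\ell}^n$ obtained by concatenating the scaled copies $q^j \vec a$ for $j = 0, \ldots, \ell-1$. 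The two remaining tasks are (i) to rerandomize $\vec a'$ so that its marginal becomes statistically close to uniform over $\Z_q^n$, and (ii) to switch the modulus from $q^\ell$ down to $q$. For (i), I would combine a leftover-hash/noise-flooding argument exploiting the short structure of $\vec s'$ with the equivalence between uniform-secret and short-secret $\lwe$ of~\cite{DBLP:conf/crypto/ApplebaumCPS09}, so that the new secret can be treated as a freshly sampled short $\lwe$ secret in dimension $n$. Step~(ii) is a Gaussian rounding that adds a small rerandomization term to the noise.

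The main obstacle will be keeping the noise polynomially bounded throughout. A naive modulus switch from $q^\ell$ to $q$ would blow up the error by a factor of $q^{\ell-1}$, which is fatal. To avoid this I would borrow an FHE-inspired ``noise-is-added-last'' viewpoint: carry the reduction through as an essentially noise-free linear relation over the decomposed secret, introducing Gaussian noise of the target polynomial scale only at the final step via a standard Gaussian-sampling subroutine over a suitable lattice. A second delicate point is preserving the \emph{decisional} form of $\lwe$ across the tradeoff: this will likely require a hybrid argument with one hybrid per digit of the base-$q$ decomposition, with consecutive hybrids indistinguishable under an instance of $n$-dimensional decisional $\lwe$ with modulus $q$. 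Once these pieces are assembled, composition with the classical exponential-modulus reduction of~\cite{DBLP:conf/stoc/Peikert09} yields the claimed reduction from $\sqrt n$-dimensional worst-case lattice problems.
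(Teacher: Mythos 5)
Your high-level plan coincides with the paper's: compose the classical exponential-modulus hardness of~\cite{DBLP:conf/stoc/Peikert09} with a step that trades dimension for modulus while keeping $n\log q$ fixed. But the way you implement that step has a genuine gap. First, decomposing the secret into base-$q$ digits gives $\vec{s}'$ with entries up to $q-1$, i.e.\ $\length{\vec{s}'}\approx q\sqrt{n}$; switching the modulus from $q^{\ell}$ down to $q$ then adds error of rate roughly $\length{\vec{s}'}/q\approx\sqrt{n}>1$ (cf.\ the condition $\beta^{2}\geq\alpha^{2}+\Theta(\log n)(B/q')^{2}$ in Corollary~\ref{cor:mod-reduction}), which destroys the instance; the secret must be \emph{binary} (or otherwise of norm $\approx\sqrt n$), not merely ``one digit per coordinate.'' Second, and more fundamentally, the gadget-expanded $\vec{a}'$ built from scaled copies $q^{j}\vec{a}$ is highly structured, and the device you invoke to fix this---rerandomization by leftover hash plus \emph{noise flooding}, together with the normal form of~\cite{DBLP:conf/crypto/ApplebaumCPS09}---is exactly the approach of Goldwasser et al.\ that the paper rejects: the flooding noise must exceed the faults by a factor depending on the target algorithm's advantage, so the modulus and the worst-case approximation factor grow with the adversary's running time, and you can no longer fix a single $\poly(n)$ modulus as the theorem requires. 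Your ``noise-is-added-last'' remark does not resolve this, because the faults you must hide are data-dependent (they involve the structured $\vec{a}'$ and the unknown secret), not just rounding terms you control. Finally, bridging your per-digit hybrids ``under $n$-dimensional decisional $\lwe$ with modulus $q$'' is circular: that is precisely the target problem whose hardness is being established; hybrids may only be bridged statistically or via the source problem.

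The paper's actual route avoids all of this by never decomposing the original secret. It reduces $\lwe_{k,m,q,\alpha}$ (with $k=\sqrt n$ and $q$ exponential) to $\zolwe_{n,m,q,\le\sqrt{10n}\,\alpha}$ with a \emph{fresh} uniform binary secret $\vec{z}\in\bit^{n}$: in the hybrid argument of Lemma~\ref{lem:zolwe} the public matrix is replaced by $q\matC^{T}\matB+\matN$, the leftover hash lemma makes $\matC\vec{z}$ act as the original uniform secret, and the residual correlation term $\matN^{T}\vec{z}$ is handled by a new hardness proof for extended LWE (Lemma~\ref{lem:redtoextlwe}, via the first-is-errorless variant and unimodular matrices of bounded quality), so all losses are small and additive and the modulus stays fixed. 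Only then, because the secret is binary and hence $(\sqrt n,0)$-bounded, does Gaussian-rounding modulus switching (Theorem~\ref{thm:mod-dim-switch}, Corollary~\ref{cor:mod-reduction}) bring $q$ down to $\poly(n)$ with only a constant-factor noise increase. To repair your proposal you would need a substitute for the flooding step with only additive advantage loss---which is precisely the extended-LWE machinery of Section~\ref{sec:extlwe}.
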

As a result, we establish the hardness of all known applications of polynomial-modulus $\lwe$ based on
classical worst-case lattice problems, previously only known under a quantum assumption.

\myparagraph{Techniques}
Even though our main theorem has the flavor of a statement in computational complexity,
its proof crucially relies on a host of ideas coming from recent progress in cryptography,
most notably recent breakthroughs in the construction of fully homomorphic encryption schemes.

At a high level, our main theorem is a ``modulus reduction'' result:
we show a reduction from $\lwe$ with large modulus $q$ and dimension $n$ to
$\lwe$ with (small) modulus $p=\poly(n)$ and dimension $n \log_2 q$. Theorem~\ref{thm:informal}
now follows from the main result in~\cite{DBLP:conf/stoc/Peikert09}, which shows that
the former problem with $q=2^n$ is as hard as $n$-dimensional $\gapsvp$.
We note that the increase in dimension from $n$ to $n \log_2 q$ is to be expected,
as it essentially preserves the number of possible secrets (and hence the running time
of the naive brute-force algorithm).

Very roughly speaking,
the main idea in modulus reduction is to map $\Z_q$ into $\Z_p$ through the naive mapping
that sends any $a \in \{0,\ldots,q-1\}$ to $\floor{pa/q} \in \{0,\ldots,p-1\}$.
This basic idea is confounded by two issues. The first is that if carried out naively, this
transformation introduces rounding artifacts into $\lwe$, ruining the distribution
of the output. We resolve this issue by using a more careful Gaussian randomized rounding procedure (Section~\ref{sec:modexpansion}).
A second serious issue is that in order for the rounding errors not to be
amplified when multiplied by the $\lwe$ secret $\vecs$, it is essential to assume that $\vecs$
has small coordinates. A major part of our reduction (Section~\ref{sec:shortsecret}) is therefore dedicated to showing
a reduction from $\lwe$ (in dimension $n$) with arbitrary secret in $\Z_q^n$ to $\lwe$ (in dimension $n \log_2 q$)
with a secret chosen uniformly over $\{0,1\}$. This follows from a careful hybrid argument (Section~\ref{sec:theshortsecretreduction}) combined with a
hardness reduction to the so-called ``extended-$\lwe$'' problem, which is a variant of $\lwe$ in which we have some control
over the error vector (Section~\ref{sec:extlwe}).

We stress that even though our proof is inspired by and has analogues
in the cryptographic literature, the details of the reductions are very different.
In particular, the idea of modulus reduction plays a key role in recent work on fully homomorphic
encryption schemes, giving a way to control the noise growth during homomorphic
operations~\cite{DBLP:conf/focs/BrakerskiV11,DBLP:conf/innovations/BrakerskiGV12,DBLP:conf/crypto/Brakerski12}.
However, since the goal there is merely to preserve the functionality of the scheme,
their modulus reduction can be performed in a rather naive way similar to the one outlined above,
and so the output of their procedure does not constitute a valid $\lwe$ instance.
In our reduction we need to perform a much more delicate modulus reduction,
which we do using Gaussian randomized rounding, as mentioned above.

The idea of reducing $\lwe$ to have a $\{0,1\}$ secret also exists already in the cryptographic literature:
precisely such a reduction was shown by 
Goldwasser et al.~\cite{DBLP:conf/innovations/GoldwasserKPV10}
who were motivated by questions in leakage-resilient cryptography.
Their reduction, however, incurred a severe blow-up in the noise rate, making it useless for our purposes. 
In more detail, not being able to
faithfully reproduce the $\lwe$ distribution in the output, they resort to 
hiding the faults in the output distribution under a huge independent 
fresh noise, in order to make it close to the correct one. The trouble 
with this ``noise flooding'' approach
is that the amount of noise
one has to add depends on the running time of the algorithm solving
the target $\{0,1\}$-$\lwe$ problem, which in turn forces the modulus
to be equally big.
So while in principle we could use the reduction
from~\cite{DBLP:conf/innovations/GoldwasserKPV10} (and shorten our proof
by about a half), this would lead to a qualitatively much weaker result:
the modulus and the approximation ratio for the worst-case lattice problem
would both grow with the running time of the $\{0,1\}$-$\lwe$ algorithm.
In particular, we would not be able to show that for some fixed
polynomial modulus, $\lwe$ is a hard problem; instead, in order to capture all polynomial time algorithms,
we would have to take a super-polynomial
modulus, and rely on the hardness of worst-case lattice problem to within super-polynomial
approximation factors.
In contrast, with our reduction,
the modulus and the approximation ratio both remain fixed independently of the target $\{0,1\}$-$\lwe$ algorithm.

As mentioned above, our alternative to the reduction
in~\cite{DBLP:conf/innovations/GoldwasserKPV10} is based
on a hybrid argument combined with a new hardness reduction
for the ``extended LWE'' problem, which is a variant of $\lwe$ in which in addition to the LWE samples, we also get to see the inner product
of the vector of error terms with a vector $\vecz$ of our 
choosing. This problem has
its origins in the cryptographic literature, namely
in the work of O'Neill, Peikert, and Waters~\cite{DBLP:conf/crypto/ONeillPW11} on (bi)deniable encryption
and the later work of Alperin-Sheriff and Peikert~\cite{DBLP:conf/pkc/Alperin-SheriffP12}
on key-dependent message security. The hardness reductions included in those
papers are not sufficient for our purposes, as they cannot handle
large moduli or error terms, which is crucial in our setting. We therefore
provide an alternative reduction which is conceptually much simpler, and
essentially subsumes both previous reductions.
Our reduction works equally well with
exponential moduli and correspondingly long error vectors, a case earlier reductions could not handle.

\myparagraph{Broader perspective}
As a byproduct of the proof of Theorem~\ref{thm:informal}, we obtain several results that
shed new light on the hardness of $\lwe$. Most notably, our modulus reduction result
in Section~\ref{sec:modexpansion} is actually far more general, and can be used to show
a ``modulus expansion/dimension reduction'' tradeoff. Namely, it shows a reduction
from $\lwe$ in dimension $n$ and modulus $p$ to $\lwe$ in dimension $n/k$ and modulus $p^k$ (see Corollary~\ref{cor:mod-dim-tradeoff}).
Combined with our modulus reduction, this has the following interesting consequence:
the hardness of $n$-dimensional $\lwe$ with modulus $q$ is a function of the quantity $n \log_2 q$.
In other words, varying $n$ and $q$ individually while keeping $n \log_2 q$ fixed
essentially preserves the hardness of $\lwe$.

Although we find this statement quite
natural (since $n \log_2 q$ represents the number of bits in the secret), it has
some surprising consequences. One is that $n$-dimensional $\lwe$ with modulus $2^n$
is essentially as hard as $n^2$-dimensional $\lwe$ with polynomial modulus.
As a result, $n$-dimensional $\lwe$ with modulus $2^n$, which was shown
in~\cite{DBLP:conf/stoc/Peikert09} to be as hard as $n$-dimensional lattice
problems using a classical reduction, is actually as hard as $n^2$-dimensional
lattice problems using a quantum reduction. The latter is presumably a much
harder problem, requiring $\exp(\widetilde\Omega(n^2))$ time to solve.
This corollary highlights an inherent quadratic loss in the classical reduction
of~\cite{DBLP:conf/stoc/Peikert09} (and as a result also our Theorem~\ref{thm:informal})
compared to the quantum one in~\cite{DBLP:journals/jacm/Regev09}.

\onote{for the final version, consider adding precise statements for this and the next point}
A second interesting consequence is that $1$-dimensional $\lwe$ with
modulus $2^n$ is essentially as hard as $n$-dimensional $\lwe$ with
polynomial modulus. The $1$-dimensional version of $\lwe$ is closely
related to the Hidden Number Problem of Boneh and
Venkatesan~\cite{DBLP:conf/crypto/BonehV96}. It is also essentially equivalent to the
Ajtai-Dwork-type~\cite{DBLP:conf/stoc/AjtaiD97} cryptosystem in~\cite{DBLP:journals/jacm/Regev04}, as follows from simple reductions
similar to the one in the appendix of~\cite{RegevSurvey}.
Moreover, the $1$-dimensional version can be seen as a special case of
the Ring-$\lwe$ problem introduced
in~\cite{DBLP:conf/eurocrypt/LyubashevskyPR10} (for ring dimension~1,
i.e., ring equal to~$\Z$). This allows us, via the ring switching
technique from~\cite{DBLP:conf/scn/GentryHPS12}, to obtain the first
hardness proof of Ring-$\lwe$, with arbitrary ring dimension and
exponential modulus, under the hardness of problems on general lattices (as
opposed to just ideal lattice problems). In addition, this leads to the
first hardness proof for the Ring-$\sis$
problem~\cite{DBLP:conf/icalp/LyubashevskyM06,DBLP:conf/tcc/PeikertR06}
with exponential modulus under the hardness of general lattice problems, via the standard
$\lwe$-to-$\sis$ reduction. (We note that since both results are obtained by
scaling up from a ring of dimension $1$, the hardness does not improve
as the ring dimension increases.)

A final interesting consequence of our reductions is that (the
decision form of) \lwe is hard with an arbitrary huge modulus, e.g., a
prime; see Corollary~\ref{cor:mod-reduction-2}.  Previous results
(e.g.,~\cite{DBLP:journals/jacm/Regev09,DBLP:conf/stoc/Peikert09,DBLP:conf/crypto/MicciancioM11,DBLP:conf/eurocrypt/MicciancioP12})
required the modulus to be \emph{smooth}, i.e., all its prime divisors
had to be polynomially bounded.

\onote{for final version: decide what to do with real-LWE}

\onote{for final version, consider applying these ideas to SIS, especially the 1-dim case which is like subset sum, and more generally, modulus/dimension tradeoff for SIS. Chris said:
The idea is to do a similar transformation as in Lemma 4.4, but on SIS samples, going from a in $\Z_q^n$ to $a' \in \Z_{q'}^{n'}$.  Then given a good SIS solution $x$ to $A'x = 0$, we also have $G^T A' x = 0$ and therefore $A x \approx 0$.  Thus $x$ is a solution to the original SIS instance, because it suffices to find a preimage of "nearly" 0.  This is because if $Ax = e \approx 0$, then we get the homogeneous solution $[I | A] (e; x) = 0$, and the instance $[I | A]$ is just SIS in (hermite) normal form.}

\dnote{Something that I find interesting is that the complexity of the
  best known attack is~$2^{O(n \log q/ \log^2 \alpha)}$, which is
  constant wrt $n \log q$. This was not reflected by our previous
  understanding of the complexity of \lwe, and we manage to close the
  gap between the hardness results and the algorithms. By the way,
  this~$2^{O(n \log q/ \log^2 \alpha)}$ bound suggests it may be
  possible to transfer some of~$n$ and~$\log q$ into~$\log \alpha$. Do
  we know \lwe self-reductions from, say, $q,\alpha$ to~$p,\beta$ such
  that~$\log q/ |\log \alpha| = \log p/|\log \beta|$? or from
  $n,\alpha$ to~$n',\beta$ such that~$n/ |\log \alpha| = n'/|\log
  \beta|$}

\onote{maybe mention Ajtai's 2005 paper using Dirichlet's problem?}

\myparagraph{Open questions}
As mentioned above, our Theorem~\ref{thm:informal} inherits from~\cite{DBLP:conf/stoc/Peikert09}
a quadratic loss in the dimension, which does not exist in the quantum reduction~\cite{DBLP:journals/jacm/Regev09}
nor in the known hardness reductions for $\sis$. 
At a technical level, this quadratic loss stems from the fact that the reduction in~\cite{DBLP:conf/stoc/Peikert09} is not iterative. In contrast,
the quantum reduction in~\cite{DBLP:journals/jacm/Regev09} as well as the reductions for $\sis$ are iterative, and as a result do not incur
the quadratic loss. We note that an additional side effect of the non-iterative reduction is that the hardness in Theorem~\ref{thm:informal} 
and~\cite{DBLP:conf/stoc/Peikert09} is based only on the worst-case lattice problem \gapsvp (and the essentially equivalent \bdd and \usvp~\cite{DBLP:conf/crypto/LyubashevskyM09}), and not on problems like \sivp, which the quantum reduction of~\cite{DBLP:journals/jacm/Regev09} and the hardness
reductions for \sis can handle. One case where this is very significant is when dealing with ideal lattices, as in the hardness 
reduction for Ring-\lwe, since \gapsvp turns out to be an easy problem there. 

We therefore believe that it is important to understand whether there exists a classical
reduction that does not incur the quadratic loss inherent in~\cite{DBLP:conf/stoc/Peikert09}
and in Theorem~\ref{thm:informal}.
In other words, is $n$-dimensional $\lwe$ with polynomial modulus classically as hard as $n$-dimensional
lattice problems (as opposed to $\sqrt{n}$-dimensional)? This would constitute
the first full dequantization of the quantum reduction in~\cite{DBLP:journals/jacm/Regev09}.

While it is natural to conjecture that the answer to this question is positive,
a negative answer would be quite tantalizing. In particular, it is
conceivable that there exists a (classical) algorithm for $\lwe$ with polynomial modulus running in time
$2^{O(\sqrt{n})}$. Due to the quadratic expansion in Theorem~\ref{thm:informal}, this
would not lead to a faster classical algorithm for lattice problems; it would,
however, lead to a $2^{O(\sqrt{n})}$-time \emph{quantum} algorithm for
lattice problems using the reduction in~\cite{DBLP:journals/jacm/Regev09}.
The latter would be a major progress in quantum algorithms, yet is not
entirely unreasonable; in fact, a $2^{O(\sqrt{n})}$-time quantum algorithm
for a somewhat related quantum task was discovered by
Kuperberg~\cite{DBLP:journals/siamcomp/Kuperberg05} (see also~\cite{DBLP:journals/siamcomp/Regev04}).


\section{Preliminaries}
\label{sec:preliminaries}

Let $\T =
\R/\Z$ denote the cycle, i.e., the additive group of reals modulo
$1$. We also denote by $\T_q$ its cyclic subgroup of order $q$, i.e.,
the subgroup given by $\{0,1/q,\ldots,(q-1)/q\}$.

For two probability distributions $P,Q$ over some discrete domain, we
define their statistical distance as $\sum |P(i)-Q(i)|/2$ where~$i$
ranges over the distribution domain, and extend this to continuous
distributions in the obvious way. We recall the following easy fact
(see, e.g.,~\cite[Eq.~(2.3)]{AldousDiaconis87} for a proof).
\begin{claim}\label{clm:separationdist}
  If $P$ and $Q$ are two probability distributions such that~$P(i) \ge
  (1-\eps)Q(i)$ holds for all~$i$, then the statistical distance
  between $P$ and $Q$ is at most~$\eps$.
\end{claim}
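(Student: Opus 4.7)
The plan is to rewrite the statistical distance as a one-sided sum and then apply the hypothesis directly. Concretely, let $S = \{i : Q(i) > P(i)\}$ be the set of indices on which $Q$ dominates $P$. Since both $P$ and $Q$ are probability distributions, $\sum_i (Q(i) - P(i)) = 0$, so the positive and negative parts of $Q - P$ have equal total mass. This lets me express the statistical distance as
\[
\tfrac{1}{2}\sum_i |P(i)-Q(i)| \;=\; \sum_{i \in S} \bigl(Q(i) - P(i)\bigr).
\]

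Next I would use the assumption $P(i) \ge (1-\eps) Q(i)$, which rearranges to $Q(i) - P(i) \le \eps\, Q(i)$ pointwise. Plugging this into the previous display gives
\[
\sum_{i \in S} \bigl(Q(i) - P(i)\bigr) \;\le\; \eps \sum_{i \in S} Q(i) \;\le\; \eps,
\]
where the last inequality uses $\sum_{i \in S} Q(i) \le 1$. This completes the argument.

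There is no real obstacle here: the only mild subtlety is to avoid bounding $|P(i) - Q(i)|$ directly term by term (which would give a factor of $2$ loss), and instead to exploit the fact that $\sum_i (Q(i)-P(i)) = 0$ so that one only needs to control the set $S$ where $Q \ge P$. The continuous case is identical after replacing sums by integrals.
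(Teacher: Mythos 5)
Your proof is correct, and it is the standard argument: the paper does not prove this claim itself but only cites~\cite{AldousDiaconis87} (the relation between separation distance and total variation), where essentially the same one-sided computation appears. Your use of $\sum_i (Q(i)-P(i))=0$ to reduce the statistical distance to $\sum_{i\in S}(Q(i)-P(i)) \le \eps \sum_{i \in S} Q(i) \le \eps$ is exactly the right way to avoid the factor-of-$2$ loss.
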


We will use the following immediate corollary of the leftover hash
lemma~\cite{DBLP:journals/siamcomp/HastadILL99}.

\begin{lemma}
  \label{lem:lhl}
  Let $k, n, q \ge 1$ be integers, and $\epsilon > 0$ be such that $n
  \ge k \log_2 q + 2 \log_2 (1/\epsilon)$. For $\mx{H} \gets \T_q^{k \times
    n}$, $\vc{z} \gets \binset^n$, $\vc{u} \gets \bbT_q^k$, the
  distributions of $(\mx{H}, \mx{H}\vc{z})$ and $(\mx{H}, \vc{u})$ are
  within statistical distance at most $\epsilon$.
\end{lemma}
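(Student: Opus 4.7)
The plan is to deduce this statement from the standard Leftover Hash Lemma of H{\aa}stad, Impagliazzo, Levin, and Luby by identifying $\mx{H} \mapsto (\vc{z} \mapsto \mx{H}\vc{z})$ as a $2$-universal family of hash functions from $\binset^n$ to $\bbT_q^k$. Concretely, I would set $X = \binset^n$, $Y = \bbT_q^k$, and let $\calH = \{f_{\mx{H}} : \vc{z} \mapsto \mx{H}\vc{z} \mid \mx{H} \in \bbT_q^{k \times n}\}$, so that sampling $\mx{H}$ uniformly from $\bbT_q^{k \times n}$ is the same as picking a uniform member of $\calH$.

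The first step is to verify $2$-universality, i.e.\ that for every fixed $\vc{z}_1 \ne \vc{z}_2$ in $\binset^n$ one has $\Pr_{\mx{H}}[\mx{H}\vc{z}_1 = \mx{H}\vc{z}_2] = 1/\abs*{\bbT_q^k} = q^{-k}$. Setting $\vc{w} = \vc{z}_1 - \vc{z}_2 \in \{-1,0,1\}^n \setminus \{\vc{0}\}$, this amounts to showing $\Pr_{\mx{H}}[\mx{H}\vc{w} = \vc{0}] = q^{-k}$. Fix any index $i$ with $w_i \in \{-1, +1\}$ and condition on the other columns of $\mx{H}$; since $w_i$ acts as a bijection on the finite abelian group $\bbT_q^k$ (the subgroup of $\bbT^k$ of order $q^k$), the random variable $w_i \mx{H}_{*,i} + \sum_{j\ne i} w_j \mx{H}_{*,j}$ is uniform on $\bbT_q^k$, and therefore hits $\vc{0}$ with probability exactly $q^{-k}$. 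This is the only step that uses anything about the underlying group, and it is the main (very mild) ``obstacle'': one must observe that $\bbT_q$ behaves as a finite cyclic group of order~$q$, so the usual universality argument over $\Zq$ transfers unchanged.

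The second step is to apply the standard LHL: for any $2$-universal family from $X$ to $Y$ and for $\vc{z}$ uniform on $X$, the pair $(\mx{H}, \mx{H}\vc{z})$ is within statistical distance at most $\tfrac{1}{2}\sqrt{\abs{Y}/\abs{X}}$ of the uniform distribution on $\calH \times Y$. With $\abs{X} = 2^n$ and $\abs{Y} = q^k$ this bound reads $\tfrac{1}{2} \cdot 2^{-(n - k \log_2 q)/2}$, and the hypothesis $n \ge k \log_2 q + 2 \log_2 (1/\epsilon)$ immediately yields a bound of at most $\epsilon$, completing the proof. The only thing worth double-checking is the marginal of $\mx{H}$ on the two sides, which is uniform on $\bbT_q^{k \times n}$ in both cases, so the joint bound matches the statement exactly.
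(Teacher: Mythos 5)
Your proposal is correct and matches the paper's intent: the paper states this lemma as an immediate corollary of the leftover hash lemma of H{\aa}stad, Impagliazzo, Levin, and Luby without further proof, and your argument is exactly the standard derivation — checking $2$-universality of $\vc{z} \mapsto \mx{H}\vc{z}$ using that $\T_q$ is a finite cyclic group of order $q$ (so multiplication by $\pm 1$ is a bijection), then applying the bound $\tfrac{1}{2}\sqrt{q^k/2^n} \le \epsilon$ under the hypothesis $n \ge k\log_2 q + 2\log_2(1/\epsilon)$.
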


A \emph{distinguishing problem} $P$ is defined by two distributions $P_0$ and $P_1$, and a solution to the problem is the ability to distinguish between these distributions. The \emph{advantage} of an algorithm $\cA$ with binary output on $P$ is defined as
\[
\adv[\cA] = \abs{\Pr[\cA(P_0)] - \Pr[\cA(P_1)]}~.
\]
A reduction from a problem~$P$ to a problem~$Q$ is an efficient (i.e.,
polynomial-time) algorithm~$\cA^{\cB}$ that solves~$P$ given
access to an oracle~$\cB$ that solves~$Q$. Most of our reductions
(in fact all except the one in Lemma~\ref{lem:lelwe}) are what we 
call ``transformation reductions:'' these reductions perform some transformation
to the input and then apply the oracle to the result.

\subsection{Lattices}
\label{sec:lattices}

An $n$-dimensional (full-rank) lattice $\Lambda \subseteq \R^n$ is the
set of all integer linear combinations of some set of $n$ linearly
independent \emph{basis} vectors $\matB = \set{\vecb_{1}, \ldots,
  \vecb_{n}} \subseteq \R^n$,
\[ \Lambda = \lat(\matB) = \set[\Big]{
  \sum_{i \in [n]} z_i \vecb_i\; :\; \vecz \in \mathbb{Z}^n }. \]
The \emph{dual lattice} of $\Lambda \subset \R^n$ is defined as
$\Lambda^* = \set{ \vecx \in \R^n : \inner{\Lambda,\vecx} \subseteq \Z}$.

The \emph{minimum distance} (or \emph{first successive minimum}) $\lambda_1(\Lambda)$ of a lattice
$\Lambda$ is the length of a shortest nonzero lattice vector, i.e., $\lambda_1(\Lambda) = \min_{\veczero \neq
  \vecx \in \Lambda} \length{\vecx}$.  For an approximation ratio $\gamma = \gamma(n) \ge 1$, the $\gapsvp_\gamma$ is the problem of deciding,
	given a basis $\matB$ of an $n$-dimensional lattice $\Lambda = \lat(\matB)$ and a number $d$, between the case where $\lambda_1(\lat(\matB)) \le d$ 
	and the case where $\lambda_1(\lat(\matB)) > \gamma d$.
We refer to~\cite{KhotLLL25,RegevLLL25} for a recent account 
on the computational complexity of $\gapsvp_\gamma$. \dnote{I added that sentence}

\subsection{Gaussian measures}
\label{sec:gaussian_measures}

For $r > 0$, the $n$-dimensional Gaussian function $\rho_r : \R^{n} \to (0,1]$ is
defined as
\[ \rho_r(\vecx) := \exp(-\pi
\length{\vecx}^{2}/r^2).
\]
We extend this definition to sets, i.e., $\rho_r(A) = \sum_{\vecx \in A} \rho_r(\vecx) 
\in [0,+\infty]$ for any~$A \subseteq \R^{n}$.
The (spherical) continuous Gaussian distribution~$D_r$ is the distribution with density
function proportional to $\rho_r$. More generally, for a matrix $\matB$,
we denote by $D_\matB$ the distribution of $\matB \vecx$ where $\vecx$ is sampled
from $D_1$. When $\matB$ is nonsingular, its probability density function is
proportional to
\[
\exp(-\pi \vecx^T (\matB \matB^T)^{-1} \vecx).
\]
A basic fact is that for any matrices $\matB_1,\matB_2$, the sum of a sample from $D_{\matB_1}$ and an independent sample
from~$D_{\matB_2}$ is distributed like $D_{\matC}$ for $\matC = (\matB_1^{} \matB_1^T + \matB_2^{} \matB_2^T)^{1/2}$.

For an $n$-dimensional lattice $\Lambda$ and a vector $\vecu \in \R^n$, we define the \emph{discrete Gaussian distribution}~$D_{\Lambda+\vecu,r}$
as the discrete distribution with support on the coset $\Lambda+\vecu$ whose probability mass function is proportional to
$\rho_r$. There exists an efficient procedure that samples within negligible statistical distance of any (not too narrow) discrete Gaussian distribution 
(\cite[Theorem~4.1]{DBLP:conf/stoc/GentryPV08}; see also~\cite{DBLP:conf/crypto/Peikert10}). In the next lemma, proved in Section~\ref{sec:exactgpv},
we modify this sampler so that the output is distributed exactly as a discrete Gaussian. This also allows us to sample from
slightly narrower Gaussians. Strictly speaking, the lemma is not needed for our results,
and we could use instead the original sampler from \cite{DBLP:conf/stoc/GentryPV08}. Using our exact sampler leads to slightly
cleaner proofs as well as a (miniscule) improvement in the parameters of our reductions, and we include it here mainly in the hope that it finds
further applications in the future.

\begin{lemma}\label{lem:gpv}
There is a probabilistic polynomial-time algorithm that, given a basis $\matB$ of an $n$-dimensional
lattice $\Lambda = \lat(\matB)$, $\vecc \in \R^n$, and a parameter $r \ge \|\widetilde \matB\| \cdot \sqrt{\ln(2n+4)/\pi}$,
outputs a sample distributed according to $D_{\Lambda+\vecc,r}$. 
\end{lemma}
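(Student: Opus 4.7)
The plan is to take the standard (approximate) GPV sampler of~\cite{DBLP:conf/stoc/GentryPV08} and append a rejection step that removes all residual bias, yielding an exact sample. First, I run the usual iterative GPV procedure with the basis $\matB$ and its Gram--Schmidt vectors $\gs\vecb_1,\ldots,\gs\vecb_n$: sample integers $z_n, z_{n-1}, \ldots, z_1$ one at a time, where each $z_i$ is drawn from the one-dimensional discrete Gaussian $D_{\Z, r_i, \mu_i}$ with $r_i := r/\|\gs\vecb_i\|$ and $\mu_i$ the real-valued center determined by $\vecc$ and the previously sampled coordinates, then build $\vecy \in \Lambda + \vecc$ from the $z_i$'s in the usual way. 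The Gram--Schmidt identity $\|\vecy - \vecc\|^2 = \sum_i (z_i - \mu_i)^2 \|\gs\vecb_i\|^2$ lets us write the output distribution as
\[
q(\vecy) \;=\; \prod_{i=1}^n \frac{\rho_{r_i}(z_i - \mu_i)}{\rho_{r_i}(\Z - \mu_i)} \;=\; \frac{\rho_r(\vecy-\vecc)}{\prod_{i=1}^n \rho_{r_i}(\Z - \mu_i)}.
\]

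Since the target is $D_{\Lambda+\vecc,r}(\vecy) \propto \rho_r(\vecy - \vecc)$, the ratio $D_{\Lambda+\vecc,r}(\vecy)/q(\vecy)$ is proportional (with a $\vecy$-independent normalizer) to $\prod_i \rho_{r_i}(\Z - \mu_i)$, which is computable from the sampled $\mu_i$'s. Poisson summation shows that $\rho_r(\Z + c)$, viewed as a function of $c$, is a Fourier series with nonnegative coefficients, and therefore $\rho_r(\Z + c) \le \rho_r(\Z)$ for every $r > 0$ and $c \in \R$. I can thus apply rejection sampling with acceptance probability
\[
p(\vecy) \;:=\; \prod_{i=1}^n \frac{\rho_{r_i}(\Z - \mu_i)}{\rho_{r_i}(\Z)} \;\in\; [0,1],
\]
so that accepted samples have distribution exactly proportional to $\rho_r(\vecy - \vecc)$, i.e., exactly $D_{\Lambda+\vecc,r}$.

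To guarantee polynomial expected running time I must lower-bound $p(\vecy)$ by a constant. The hypothesis $r \ge \|\gs\matB\|\sqrt{\ln(2n+4)/\pi}$ translates into $r_i \ge \sqrt{\ln(2(1+1/\epsilon))/\pi}$ with $\epsilon := 1/(n+1)$, which is the standard upper bound for the smoothing parameter $\eta_\epsilon(\Z)$. A routine Poisson-summation estimate then gives $\rho_{r_i}(\Z - \mu_i)/\rho_{r_i}(\Z) \ge (1-\epsilon)/(1+\epsilon)$ for every $i$, so $p(\vecy) \ge \bigl(\tfrac{n}{n+2}\bigr)^n = \Omega(1)$, and the overall procedure terminates in expected $O(1)$ iterations.

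The main obstacle is producing \emph{exact} samples in the presence of the transcendental quantities $\rho_{r_i}(\Z)$ and $\rho_{r_i}(\Z - \mu_i)$ appearing in the acceptance test. I handle this by the standard device: to decide whether $u \le p(\vecy)$ for a uniform $u \in [0,1]$, reveal successive bits of $u$ while maintaining successively tighter rational enclosures of $p(\vecy)$ (the relevant infinite sums have geometric tails), stopping as soon as the comparison is forced; this terminates with probability one in expected polynomial time. The same trick makes the inner one-dimensional discrete Gaussian samplers exact, e.g., by rejection from a simple proposal distribution on a polynomial-size window around $\mu_i$ together with an analogous tail-rejection correction.
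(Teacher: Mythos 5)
Your proposal is correct and follows essentially the same route as the paper's own proof: run the Klein/GPV randomized nearest-plane sampler and cancel the run-dependent normalizers by a rejection step with acceptance probability $\prod_i \rho_{r_i}(\Z-\mu_i)/\rho_{r_i}(\Z)$, which the hypothesis $r \ge \|\gs{\matB}\|\sqrt{\ln(2n+4)/\pi}$ (i.e., $r_i \ge \eta_{1/(n+1)}(\Z)$ for every coordinate) keeps bounded below by $(n/(n+2))^n = \Omega(1)$, so exactness follows with constant expected overhead. One small notational slip: the paper's $D_{\Lambda+\vecc,r}$ assigns mass proportional to $\rho_r(\vecy)$, not $\rho_r(\vecy-\vecc)$, so the one-dimensional centers $\mu_i$ should be computed with target $\veczero$ on the coset $\Lambda+\vecc$ (equivalently, sample the lattice Gaussian centered at $-\vecc$ and add $\vecc$); the argument is otherwise unchanged.
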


Here, $\widetilde \matB$ denotes the Gram-Schmidt orthogonalization of $\matB$,
and $\|\widetilde \matB\|$ is the length of the longest vector in it.
We recall the definition of the \emph{smoothing parameter} from~\cite{DBLP:journals/siamcomp/MicciancioR07}.

\begin{definition}
  \label{def:smoothing}
  For a lattice $\Lambda$ and positive real $\epsilon > 0$, the
  smoothing parameter $\smootheps(\Lambda)$ is the smallest real $s >
  0$ such that $\rho_{1/s}(\Lambda^* \setminus \set{\veczero}) \leq
  \epsilon$.
\end{definition}

\begin{lemma}[{{\cite[Lemma~3.1]{DBLP:conf/stoc/GentryPV08}}}]\label{lem:boundonsmoothing}
For any $\eps>0$ and $n$-dimensional lattice $\Lambda$ with basis $\matB$,
\[
\smootheps(\Lambda) \le \|\widetilde \matB \| \sqrt{\ln(2n(1+1/\eps))/\pi}.
\]
\end{lemma}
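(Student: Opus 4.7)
The plan is to bound the Gaussian sum $\rho_{1/r}(\Lambda^* \setminus \set{\veczero})$ directly for $r := \|\gs{\matB}\| \sqrt{\ln(2n(1+1/\eps))/\pi}$, exploiting a triangular relationship between the Gram--Schmidt decomposition of $\matB$ and a natural dual basis of $\Lambda^*$. After rotating coordinates, I may assume that $\gs{\vecb}_i$ lies along the $i$-th coordinate axis. In these coordinates $\matB$ is upper-triangular with diagonal entries $\|\gs{\vecb}_i\|$, and the dual basis $\vecd_1, \ldots, \vecd_n$ of $\Lambda^*$ (the columns of $(\matB^{-1})^T$) is lower-triangular with diagonal entries $1/\|\gs{\vecb}_i\|$; in particular, $\vecd_i$ vanishes in coordinates $1, \ldots, i-1$.

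I would then partition $\Lambda^* \setminus \set{\veczero} = \bigsqcup_{k=1}^n A_k$, where $A_k$ collects those vectors whose \emph{first} nonzero dual-basis coefficient is $z_k$. Each $\vecx \in A_k$ decomposes as $z_k \vecd_k + \vecw$, with $z_k \in \Z \setminus \set{0}$ and $\vecw$ in the $(n-k)$-dimensional sublattice $\Lambda^*_{>k} := \spn_\Z(\vecd_{k+1}, \ldots, \vecd_n)$; since each $\vecd_i$ with $i > k$ vanishes in coordinates $1, \ldots, k$, so does $\vecw$. Hence the $k$-th coordinate of $\vecx$ equals exactly $z_k / \|\gs{\vecb}_k\|$, and Pythagoras gives $\|\vecx\|^2 = z_k^2 / \|\gs{\vecb}_k\|^2 + \|z_k (\vecd_k)^{>k} + \vecw\|^2$, where $(\cdot)^{>k}$ denotes the restriction to coordinates $> k$. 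Summing over $\vecw \in \Lambda^*_{>k}$ and invoking the shift inequality $\rho_s(\Lambda + \vecc) \le \rho_s(\Lambda)$ (a standard consequence of Poisson summation) yields
\[
\rho_{1/r}(A_k) \;\le\; \Bigl(\sum_{z \in \Z \setminus \set{0}} e^{-\pi r^2 z^2 / \|\gs{\vecb}_k\|^2}\Bigr) \cdot \rho_{1/r}(\Lambda^*_{>k}).
\]

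To conclude, plug in our choice of $r$ and use $\|\gs{\vecb}_k\| \le \|\gs{\matB}\|$: then $e^{-\pi r^2 / \|\gs{\vecb}_k\|^2} \le 1/(2n(1+1/\eps))$, and the 1D Gaussian tail sum is bounded by $2 e^{-\pi r^2/\|\gs{\matB}\|^2}/(1-e^{-\pi r^2/\|\gs{\matB}\|^2}) \le \eps/(n(1+\eps))$. Combining with the obvious bound $\rho_{1/r}(\Lambda^*_{>k}) \le 1 + \rho_{1/r}(\Lambda^* \setminus \set{\veczero})$ closes a self-referential inequality: writing $X := \rho_{1/r}(\Lambda^* \setminus \set{\veczero}) = \sum_{k=1}^n \rho_{1/r}(A_k)$, we obtain $X \le (\eps/(1+\eps))(1+X)$, which solves to $X \le \eps$, exactly the smoothing condition. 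The main care is the constant-level book-keeping that ensures the factor $(1+1/\eps)$ baked into $r$ absorbs both the ``$+1$'' from $\rho_{1/r}(\Lambda^*_{>k})$ and the $2/(1-e^{-\pi r^2/\|\gs{\matB}\|^2})$ prefactor in the 1D tail-sum estimate; everything else is essentially mechanical.
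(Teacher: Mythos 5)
Your decomposition is sound, and it is essentially the argument behind the result the paper imports (the paper gives no proof of this lemma, citing GPV): pass to the Gram--Schmidt frame, use the triangular dual basis so that a nonzero dual vector whose first nonzero coefficient sits in position $k$ has $k$-th coordinate exactly $z_k/\|\gs{\vecb}_k\|$, split off that coordinate by Pythagoras, and control the remaining sum by the shift bound $\rho_s(L+\vecc)\le\rho_s(L)$. Up to and including the displayed per-slab estimate, everything is correct.

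The closing bookkeeping, however, does not close as written. Set $q_0:=e^{-\pi r^2/\|\gs{\matB}\|^2}=\tfrac{\eps}{2n(1+\eps)}$ and $t:=\eps/(1+\eps)$. Your one-dimensional tail estimate is $\tfrac{2q_0}{1-q_0}$, and the inequality you assert, $\tfrac{2q_0}{1-q_0}\le\tfrac{\eps}{n(1+\eps)}=2q_0$, is false for every $q_0>0$; moreover the overshoot is not an artifact of sloppy bounding on your side alone, since for $\Lambda=\Z^n$ all $\|\gs{\vecb}_k\|$ equal $\|\gs{\matB}\|$ and the exact per-slab factor $2\sum_{z\ge1}q_0^{z^2}$ already exceeds $2q_0$. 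Consequently the self-referential inequality you actually obtain is $X\le\tfrac{t}{1-t/(2n)}(1+X)$, whose solution is strictly larger than $\eps$: the additive bound $\rho_{1/r}(\Lambda^*_{>k})\le 1+X$ combined with a geometric series can only certify the lemma with a slightly worse constant inside the logarithm, whereas the paper uses the constant exactly as stated. The fix stays inside your framework but must be multiplicative rather than additive: iterate the slab estimate on the full sublattices $\spn_\Z(\vecd_k,\dots,\vecd_n)$ to get $\rho_{1/r}(\Lambda^*)\le\prod_{k=1}^n\bigl(1+T_k\bigr)$ with $T_k=\sum_{z\ne0}e^{-\pi r^2 z^2/\|\gs{\vecb}_k\|^2}$ (equivalently, your argument shows $\smootheps(\Lambda)\le\|\gs{\matB}\|\cdot\smootheps(\Z^n)$ and one bounds $\rho(\Z^n)$ via the product structure of its theta function), and then compare logarithms: $n\ln(1+T_k)\le nT_k\le\tfrac{t}{1-q_0}\le\tfrac{t}{1-t/2}=\sum_{i\ge0}t^{i+1}/2^{i}\le\sum_{i\ge0}t^{i+1}/(i+1)=\ln(1+\eps)$. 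The higher-order terms of $\ln(1+\eps)$, which your linearization to $\eps/(1+\eps)$ throws away, are precisely what absorbs the $1/(1-q_0)$ prefactor and the $q_0^4+q_0^9+\cdots$ corrections, and with them the stated constant does come out.
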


We now collect some known facts on Gaussian distributions and lattices.

\begin{lemma}[{{\cite[Lemma 4.1]{DBLP:journals/siamcomp/MicciancioR07}}}]
  \label{lem:smoothingcontinuous}
  For any $n$-dimensional lattice $\Lambda$, $\epsilon>0$, $r \geq
  \eta_{\epsilon}(\Lambda)$, the distribution of $\vecx \bmod \Lambda$ where $\vecx \leftarrow D_{r}$ is
	within statistical distance $\eps/2$ of the uniform distribution on cosets of $\Lambda$.
\end{lemma}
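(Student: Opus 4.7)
The plan is to write down the density of $\vecx \bmod \Lambda$ explicitly, apply the Poisson summation formula to re-express it as a Fourier series over the dual lattice $\Lambda^*$, peel off the zero-frequency term (which is exactly the uniform density), and bound the remainder using the smoothing-parameter hypothesis.

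First I would fix a fundamental domain $\mathcal{F}$ of $\Lambda$ of volume $\det(\Lambda)$. Since the density of the continuous Gaussian $D_r$ at $\vecx \in \R^n$ is $\rho_r(\vecx)/r^n$, the density of $\vecx \bmod \Lambda$ at a point $\vecy \in \mathcal{F}$ is obtained by summing over all preimages, namely
\[
f(\vecy) \;=\; \frac{1}{r^n}\,\rho_r(\Lambda + \vecy) \;=\; \frac{1}{r^n}\sum_{\vecv \in \Lambda} \rho_r(\vecv+\vecy).
\]
Our goal is to compare $f$ with the uniform density $1/\det(\Lambda)$ on $\mathcal{F}$.

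Next I would apply the Poisson summation formula to the function $\rho_r(\cdot + \vecy)$, whose Fourier transform is $r^n\,\rho_{1/r}(\vecw)\,e^{2\pi i \langle \vecw,\vecy\rangle}$. This yields
\[
\rho_r(\Lambda+\vecy) \;=\; \frac{r^n}{\det(\Lambda)} \sum_{\vecw \in \Lambda^*} \rho_{1/r}(\vecw)\, e^{2\pi i \langle \vecw,\vecy\rangle},
\]
so that
\[
f(\vecy) \;=\; \frac{1}{\det(\Lambda)}\Big(1 + \sum_{\veczero \ne \vecw \in \Lambda^*} \rho_{1/r}(\vecw)\, e^{2\pi i \langle \vecw,\vecy\rangle}\Big).
\]
The first term is precisely the uniform density on $\mathcal{F}$, so
\[
\Big|f(\vecy) - \tfrac{1}{\det(\Lambda)}\Big| \;\le\; \frac{1}{\det(\Lambda)}\sum_{\veczero \ne \vecw \in \Lambda^*} \rho_{1/r}(\vecw) \;=\; \frac{\rho_{1/r}(\Lambda^*\setminus\{\veczero\})}{\det(\Lambda)}.
\]
The hypothesis $r \ge \smootheps(\Lambda)$ gives, by Definition~\ref{def:smoothing}, $\rho_{1/r}(\Lambda^*\setminus\{\veczero\}) \le \eps$, hence $|f(\vecy) - 1/\det(\Lambda)| \le \eps/\det(\Lambda)$ pointwise on $\mathcal{F}$.

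Finally, the statistical distance to the uniform distribution on $\mathcal{F}$ is
\[
\frac{1}{2}\int_{\mathcal{F}}\Big|f(\vecy) - \tfrac{1}{\det(\Lambda)}\Big|\,d\vecy \;\le\; \frac{1}{2}\cdot \vol(\mathcal{F})\cdot \frac{\eps}{\det(\Lambda)} \;=\; \frac{\eps}{2},
\]
which is the claimed bound. The only real subtlety is the pointwise absolute-value bound together with the factor $1/2$ in the definition of statistical distance; the uniform pointwise bound is what gives us the improved $\eps/2$ rather than the $\eps$ one would obtain via Claim~\ref{clm:separationdist}. I expect no substantive obstacle beyond verifying the Poisson summation identity, which is standard for the Gaussian.
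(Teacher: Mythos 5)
Your proposal is correct and is essentially the standard argument for this lemma, which the paper does not reprove but cites from Micciancio--Regev (their Lemma~4.1 is proved exactly this way): express the density of $\vecx \bmod \Lambda$ as $\rho_r(\Lambda+\vecy)/r^n$, apply Poisson summation over $\Lambda^*$, identify the zero-frequency term with the uniform density, bound the remaining terms by $\rho_{1/r}(\Lambda^*\setminus\{\veczero\}) \le \eps$ via the smoothing-parameter hypothesis, and integrate to get $\eps/2$. No gaps; the pointwise bound plus the $1/2$ in the definition of statistical distance indeed yields the stated $\eps/2$.
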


\begin{lemma}[{{\cite[Claim 3.8]{DBLP:journals/jacm/Regev09}}}]
  \label{lem:smoothing}
  For any $n$-dimensional lattice $\Lambda$, $\epsilon>0$, $r \geq
  \eta_{\epsilon}(\Lambda)$, and $\vecc \in \R^n$, we have $\rho_{r}(\Lambda + \vecc) \in [
  \tfrac{1-\epsilon}{1+\epsilon}, 1 ] \cdot \rho_{r}(\Lambda)$.
\end{lemma}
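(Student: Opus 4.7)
The plan is to prove the lemma by Fourier-analytic means, specifically via the Poisson summation formula applied to the Gaussian function. Recall that $\rho_r$ is (up to the normalizing factor $r^n$) its own Fourier transform; more precisely, the Fourier transform of $\rho_r$ is $\widehat{\rho_r} = r^n \rho_{1/r}$. Poisson summation then gives, for any $n$-dimensional lattice $\Lambda$ and shift $\vecc \in \R^n$,
\[
\rho_r(\Lambda + \vecc) \;=\; \frac{r^n}{\det(\Lambda)} \sum_{\vecy \in \Lambda^*} \rho_{1/r}(\vecy)\, e^{2\pi i \inner{\vecc,\vecy}}.
\]
Since $\rho_r(\Lambda+\vecc)$ is real and the contributions from $\vecy$ and $-\vecy$ are complex conjugates, the right-hand side equals
\[
\frac{r^n}{\det(\Lambda)} \sum_{\vecy \in \Lambda^*} \rho_{1/r}(\vecy)\, \cos\bigl(2\pi \inner{\vecc,\vecy}\bigr).
\]
Specialising to $\vecc = \veczero$ yields $\rho_r(\Lambda) = \frac{r^n}{\det(\Lambda)}\,\rho_{1/r}(\Lambda^*)$.

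Next I would derive the two inequalities separately. For the upper bound, bounding each cosine by $1$ in absolute value immediately gives $\rho_r(\Lambda+\vecc) \leq \frac{r^n}{\det(\Lambda)}\,\rho_{1/r}(\Lambda^*) = \rho_r(\Lambda)$, which establishes the upper bound of~$1$ in the lemma. For the lower bound, I would isolate the $\vecy = \veczero$ term in the dual sum, which contributes exactly $1$, and bound the remaining terms in absolute value by $\sum_{\vecy \in \Lambda^* \setminus \{\veczero\}} \rho_{1/r}(\vecy) = \rho_{1/r}(\Lambda^* \setminus \{\veczero\})$. By the definition of the smoothing parameter and the hypothesis $r \geq \smootheps(\Lambda)$, this quantity is at most $\epsilon$. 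Therefore
\[
\rho_r(\Lambda + \vecc) \;\geq\; \frac{r^n}{\det(\Lambda)}\,(1 - \epsilon).
\]

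Finally, to get the ratio against $\rho_r(\Lambda)$, I use the same bound in the other direction: $\rho_{1/r}(\Lambda^*) = 1 + \rho_{1/r}(\Lambda^* \setminus \{\veczero\}) \leq 1 + \epsilon$, so $\rho_r(\Lambda) \leq \frac{r^n}{\det(\Lambda)}(1+\epsilon)$. Dividing the lower bound on $\rho_r(\Lambda+\vecc)$ by this upper bound on $\rho_r(\Lambda)$ yields $\rho_r(\Lambda+\vecc) \geq \tfrac{1-\epsilon}{1+\epsilon}\,\rho_r(\Lambda)$, which completes the lemma.

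The only mildly delicate step is the upper bound of $1$ (rather than $1+\epsilon$): it is tempting to apply the $(1\pm \epsilon)$ estimate symmetrically, but the crucial observation is that Poisson summation with the cosine form of the dual sum shows the shifted mass is \emph{pointwise} dominated by the centered mass term by term, giving the sharp bound for free and without using the smoothing hypothesis at all on that side.
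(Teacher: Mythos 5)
Your proof is correct, and it is essentially the standard argument: this paper states the lemma without proof, citing Claim~3.8 of Regev's JACM paper, whose proof is exactly this Poisson-summation computation (the shifted dual sum with the cosine weights, the sharp term-by-term upper bound of $1$, and the $1\pm\epsilon$ estimate on $\rho_{1/r}(\Lambda^*)$ for the lower bound). No gaps; the observation that the upper bound needs no smoothing hypothesis is also as in the original.
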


\begin{lemma}[{{\cite[Claim~3.9]{DBLP:journals/jacm/Regev09}}}]
\label{le:discrete_plus_cont_Gauss}
Let~$\Lambda$ be an $n$-dimensional lattice, let~$\vec{u}\in \R^n$ be
arbitrary, let $r, s > 0$ and let $t=\sqrt{r^2+ s^2}$. Assume that
$rs/t=1/\sqrt{1/r^2+1/s^2} \geq \eta_{\eps} (\Lambda)$ for some~$\eps <
1/2$. Consider the continuous distribution~$Y$ on~$\R^n$ obtained by
sampling from~$D_{\Lambda+\vec{u},r}$ and then adding a noise vector taken
from~$D_s$. Then, the statistical distance between~$Y$ and~$D_t$ is at
most~$4\eps$.
\end{lemma}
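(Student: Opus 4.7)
The plan is to bound the density ratio $Y(\vec{w})/D_t(\vec{w})$ pointwise from below by $1 - O(\eps)$, and then appeal to Claim~\ref{clm:separationdist}. Writing the density of $Y$ by conditioning on the discrete sample,
\[
Y(\vec{w}) = \sum_{\vec{x} \in \Lambda + \vec{u}} \frac{\rho_r(\vec{x})}{\rho_r(\Lambda+\vec{u})} \cdot \frac{\rho_s(\vec{w}-\vec{x})}{s^n}.
\]
The first step is a standard completion of the square on exponents, which gives the pointwise identity
\[
\rho_r(\vec{x})\,\rho_s(\vec{w}-\vec{x}) = \rho_t(\vec{w}) \cdot \rho_{r'}\bigl(\vec{x} - \tfrac{r^2}{t^2}\vec{w}\bigr),
\qquad r' := rs/t,
\]
so that after summing and using $D_t(\vec{w}) = \rho_t(\vec{w})/t^n$,
\[
\frac{Y(\vec{w})}{D_t(\vec{w})} = \frac{t^n}{s^n} \cdot \frac{\rho_{r'}(\Lambda + \vec{u} - (r^2/t^2)\vec{w})}{\rho_r(\Lambda+\vec{u})}.
\]

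Next I would apply the smoothing parameter to both factors. Note that $r' = rs/t \le r$, so the hypothesis $r' \ge \eta_\eps(\Lambda)$ gives $r \ge \eta_\eps(\Lambda)$ as well. Lemma~\ref{lem:smoothing} therefore bounds the numerator from below by $\tfrac{1-\eps}{1+\eps}\,\rho_{r'}(\Lambda)$ and the denominator from above by $\rho_r(\Lambda)$. What remains is the lattice-theoretic ratio $\tfrac{t^n \rho_{r'}(\Lambda)}{s^n \rho_r(\Lambda)}$, which a priori involves Gaussian masses at two different widths.

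The key (and I expect the main) obstacle is controlling this last ratio, since we cannot directly compare $\rho_{r'}(\Lambda)$ and $\rho_r(\Lambda)$ via Lemma~\ref{lem:smoothing}. The right tool is Poisson summation, which gives $\rho_r(\Lambda) = r^n\det(\Lambda^*)\,\rho_{1/r}(\Lambda^*)$ and similarly for $r'$. Using the algebraic identity $t r' = rs$, the widths $r^n/s^n$ and $\det(\Lambda^*)$ cancel, leaving just $\rho_{1/r'}(\Lambda^*)/\rho_{1/r}(\Lambda^*)$. Because $1/r, 1/r' \le 1/\eta_\eps(\Lambda)$, the definition of the smoothing parameter forces $\rho_{1/r}(\Lambda^*\setminus\{\veczero\}), \rho_{1/r'}(\Lambda^*\setminus\{\veczero\}) \le \eps$, hence both $\rho_{1/r}(\Lambda^*), \rho_{1/r'}(\Lambda^*) \in [1, 1+\eps]$, and the ratio is at least $1/(1+\eps)$.

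Combining, $Y(\vec{w}) \ge \tfrac{1-\eps}{(1+\eps)^2}\,D_t(\vec{w}) \ge (1-4\eps)\,D_t(\vec{w})$, where the last inequality is a routine calculation valid for $\eps < 1/2$. Claim~\ref{clm:separationdist} then yields statistical distance at most $4\eps$ between $Y$ and $D_t$, as desired.
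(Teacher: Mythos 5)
Your proof is correct, and it follows essentially the same route as the source this lemma is imported from: the paper itself does not prove the statement but cites it as \cite[Claim~3.9]{DBLP:journals/jacm/Regev09}, whose proof likewise proceeds by completing the square to get the density-ratio identity, applying the smoothing bound (Lemma~\ref{lem:smoothing}) to both Gaussian masses, and using Poisson summation over $\Lambda^*$ to cancel the width factors via $tr' = rs$ before invoking the separation-distance claim. All your intermediate steps check out, including $\tfrac{1-\eps}{(1+\eps)^2} \ge 1-4\eps$, so nothing further is needed.
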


\begin{lemma}[{{\cite[Corollary~3.10]{DBLP:journals/jacm/Regev09}}}]
\label{lem:smoothip}
Let~$\Lambda$ be an $n$-dimensional lattice, let~$\vc{u}, \vc{z}\in \R^n$ be arbitrary, and let $r, \alpha > 0$. Assume that
$\parens{1/r^2+\left(\|{\vc{z}}\|/\alpha\right)^2}^{-1/2} \geq \eta_{\eps} (\Lambda)$ for some~$\eps <
1/2$. Then the distribution of $\langle\vc{z},\vc{v}\rangle+e$ where $\vc{v} \getsr D_{\Lambda+\vc{u}, r}$ and $e \getsr D_{\alpha}$, is within statistical distance $4\eps$ of $D_{\beta}$ for $\beta = \sqrt{(r \|\vc{z}\|)^2+\alpha^2}$.
\end{lemma}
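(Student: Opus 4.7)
The strategy is to reduce Lemma~\ref{lem:smoothip} to the convolution statement Lemma~\ref{le:discrete_plus_cont_Gauss} by introducing an auxiliary continuous Gaussian noise vector $\vecn$ whose projection onto $\vecz$ reproduces the 1D noise $e$. First dispose of the degenerate case $\vecz = \veczero$: here $\beta = \alpha$, the claim reduces to $e \sim D_\alpha$, which is exact.

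Assume $\vecz \ne \veczero$ and set $s := \alpha/\|\vecz\|$. Let $\vecn \gets D_s$ be drawn independently of $\vecv \gets D_{\Lambda+\vecu, r}$. I would first check the smoothing hypothesis needed to apply Lemma~\ref{le:discrete_plus_cont_Gauss} to the pair $(r, s)$: writing $t = \sqrt{r^2 + s^2}$, a direct computation gives
\[
\frac{rs}{t} \;=\; \Bigl(\tfrac{1}{r^2} + \tfrac{1}{s^2}\Bigr)^{-1/2} \;=\; \Bigl(\tfrac{1}{r^2} + \bigl(\tfrac{\|\vecz\|}{\alpha}\bigr)^{2}\Bigr)^{-1/2},
\]
which is precisely the quantity assumed in the hypothesis to be at least $\eta_\epsilon(\Lambda)$. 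Hence Lemma~\ref{le:discrete_plus_cont_Gauss} applies and the distribution of $\vecv + \vecn$ on $\R^n$ is within statistical distance $4\epsilon$ of $D_t$.

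Next I would apply the linear functional $\vecx \mapsto \langle \vecz, \vecx \rangle$. Since statistical distance is non-increasing under any (deterministic) post-processing, $\langle \vecz, \vecv + \vecn \rangle$ is within $4\epsilon$ of the image of $D_t$ under this map. A short computation using the fact that a spherical Gaussian projects to a 1D Gaussian whose parameter scales with the norm of the projection direction shows that the image of $D_t$ is exactly $D_{t\|\vecz\|}$; and
\[
t\|\vecz\| \;=\; \|\vecz\|\sqrt{r^2 + \alpha^2/\|\vecz\|^2} \;=\; \sqrt{(r\|\vecz\|)^2 + \alpha^2} \;=\; \beta.
\]
The same computation shows that $\langle \vecz, \vecn\rangle$, being the projection of the spherical Gaussian $D_s$ in direction $\vecz$, is distributed exactly as $D_{s\|\vecz\|} = D_\alpha$, and is independent of $\vecv$. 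Therefore $\langle \vecz, \vecv\rangle + \langle \vecz, \vecn\rangle$ and $\langle \vecz, \vecv\rangle + e$ are identically distributed, and the former lies within $4\epsilon$ of $D_\beta$, finishing the argument.

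The only substantive step is verifying that the assumption on $(1/r^2 + (\|\vecz\|/\alpha)^2)^{-1/2}$ matches the smoothing condition $rs/t \ge \eta_\epsilon(\Lambda)$ after the specific choice $s = \alpha/\|\vecz\|$; everything else is either a direct identity, an application of the data-processing inequality for statistical distance, or an invocation of Lemma~\ref{le:discrete_plus_cont_Gauss}. I do not expect any real obstacle beyond being careful about the choice of $s$, which is forced by the requirement that the continuous part $\langle \vecz, \vecn\rangle$ have the same distribution as the independent noise $e$.
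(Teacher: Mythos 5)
Your proposal is correct, and it is essentially the argument behind the cited result: the paper itself does not reprove this lemma but imports it from Regev's Corollary~3.10, whose derivation is exactly your lift of the scalar noise $e$ to a spherical Gaussian $D_{\alpha/\|\vecz\|}$ in $\R^n$, an application of the convolution statement (the paper's Lemma~\ref{le:discrete_plus_cont_Gauss}) under the hypothesis $rs/t=(1/r^2+(\|\vecz\|/\alpha)^2)^{-1/2}\ge\eta_\eps(\Lambda)$, and projection along $\vecz$ via data processing. No gaps; the handling of $\vecz=\veczero$ and the independence bookkeeping are fine.
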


\begin{lemma}[{{Special case of~\cite[Theorem 3.1]{DBLP:conf/crypto/Peikert10}}}]
\label{lem:peikertdiscretegauss}
Let $\Lambda$ be a lattice and $r,s>0$ be such that $s \ge \eta_\eps(\Lambda)$ for some $\eps \le 1/2$.
Then if we choose $\vecx$ from the continuous Gaussian $D_{r}$ and then choose $\vecy$ from the
discrete Gaussian $D_{\Lambda-\vecx,s}$ then $\vecx+\vecy$ is within statistical distance $8 \eps$ of
the discrete Gaussian $D_{\Lambda,(r^2+s^2)^{1/2}}$.
\end{lemma}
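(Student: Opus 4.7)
The plan is to compute the probability mass function of $\vecx + \vecy$ explicitly on $\Lambda$ and to bound it pointwise from below by that of the target distribution $D_{\Lambda,t}$ with $t := \sqrt{r^2+s^2}$; the conclusion then follows from Claim~\ref{clm:separationdist}.

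First observe that $\vecy$ is supported on $\Lambda - \vecx$, so $\vecx + \vecy$ always lies in $\Lambda$. Conditioning on $\vecx$ and integrating, for any $\vec{v} \in \Lambda$ we have
\[
p(\vec{v}) \;:=\; \Pr[\vecx + \vecy = \vec{v}] \;=\; \int_{\R^n} \frac{\rho_r(\vecx)}{r^n} \cdot \frac{\rho_s(\vec{v} - \vecx)}{\rho_s(\Lambda - \vecx)}\, d\vecx.
\]
Because $s \ge \eta_\eps(\Lambda)$, Lemma~\ref{lem:smoothing} applied to the coset $\Lambda - \vecx$ gives $\rho_s(\Lambda - \vecx) \le \rho_s(\Lambda)$ for every $\vecx$, and so
\[
p(\vec{v}) \;\ge\; \frac{1}{r^n\, \rho_s(\Lambda)} \int_{\R^n} \rho_r(\vecx)\, \rho_s(\vec{v} - \vecx)\, d\vecx.
\]

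The remaining integral is a routine Gaussian convolution: completing the square in the exponent yields $\int_{\R^n} \rho_r(\vecx)\,\rho_s(\vec{v}-\vecx)\, d\vecx = (rs/t)^n \, \rho_t(\vec{v})$, so $p(\vec{v}) \ge s^n\, \rho_t(\vec{v})/\bigl(t^n\, \rho_s(\Lambda)\bigr)$. Comparing to the target $D_{\Lambda,t}(\vec{v}) = \rho_t(\vec{v})/\rho_t(\Lambda)$ gives
\[
\frac{p(\vec{v})}{D_{\Lambda, t}(\vec{v})} \;\ge\; \frac{s^n\, \rho_t(\Lambda)}{t^n\, \rho_s(\Lambda)}.
\]
Applying the Poisson summation identity $\rho_u(\Lambda) = u^n\, \rho_{1/u}(\Lambda^*)/\det(\Lambda)$ to both numerator and denominator, this ratio equals $\rho_{1/t}(\Lambda^*)/\rho_{1/s}(\Lambda^*)$. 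Since $\rho_{1/t}(\Lambda^*) \ge 1$ trivially, and since the definition of $\eta_\eps$ together with $s \ge \eta_\eps(\Lambda)$ gives $\rho_{1/s}(\Lambda^*) \le 1+\eps$, the ratio is at least $1/(1+\eps) \ge 1-\eps$.

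Hence $p(\vec{v}) \ge (1-\eps)\, D_{\Lambda,t}(\vec{v})$ for every $\vec{v} \in \Lambda$, and Claim~\ref{clm:separationdist} bounds the statistical distance by $\eps$, which is well within the stated $8\eps$ (the looser constant is an artifact of deriving the result from the more general theorem of Peikert). The delicate point I expect is keeping the directions of the smoothing-parameter inequalities straight so that the pointwise bound goes the right way for Claim~\ref{clm:separationdist}; the Gaussian convolution and the use of Poisson summation, while central to the argument, are mechanical once set up.
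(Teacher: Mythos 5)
Your proof is correct, and it is worth noting that it takes a genuinely different route from the paper: the paper offers no argument of its own for Lemma~\ref{lem:peikertdiscretegauss}, instead invoking it as a special case of Peikert's general convolution theorem (Theorem~3.1 of the cited work), which handles non-spherical Gaussians and more general parameter conditions and is where the $8\eps$ constant comes from. Your argument is a self-contained computation specialized to the spherical case: writing $p(\vecv)=\Pr[\vecx+\vecy=\vecv]$ as an integral, discarding the coset normalizer via the upper bound $\rho_s(\Lambda-\vecx)\le\rho_s(\Lambda)$ (the upper half of Lemma~\ref{lem:smoothing}; in fact this direction holds unconditionally), evaluating the Gaussian convolution $\int\rho_r(\vecx)\rho_s(\vecv-\vecx)\,d\vecx=(rs/t)^n\rho_t(\vecv)$, and then converting the resulting ratio $s^n\rho_t(\Lambda)/(t^n\rho_s(\Lambda))$ into $\rho_{1/t}(\Lambda^*)/\rho_{1/s}(\Lambda^*)\ge 1/(1+\eps)\ge 1-\eps$ by Poisson summation and the definition of $\eta_\eps$; the one-sided pointwise bound then yields statistical distance at most $\eps$ by Claim~\ref{clm:separationdist}, comfortably within the stated $8\eps$. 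All steps check out (the directions of the inequalities are indeed the delicate point, and you have them right: the coset mass appears in a denominator, so the upper bound is what you need, and the $w=\veczero$ term gives $\rho_{1/t}(\Lambda^*)\ge 1$). What each approach buys: the paper's citation inherits a much more general statement (arbitrary positive-definite covariances, joint correctness of the pair rather than just the sum) at the price of the looser constant, while your direct argument is more elementary, makes the hypotheses transparent ($\eps\le 1/2$ is not even needed), and gives a tighter bound for the special case actually used in the paper.
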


\subsection{Learning with Errors}
\label{sec:computational_problems}

For integers $n,q \ge 1$, an integer vector $\vec{s}
\in \Z^n$, and a probability distribution $\phi$ on~$\R$, let
$A_{q,\vecs, \phi}$ be the distribution over $\T_q^n \times \T$
obtained by choosing $\veca \in \T_q^n$ uniformly at random and an
error term $e$ from $\phi$, and outputting the pair $(\veca, b =
\inner{\veca, \vecs}+e) \in \T_q^n \times \T$.

\begin{definition}
  \label{def:lwe}
  For integers $n,q \ge 1$, an error distribution $\phi$
  over $\R$, and a distribution $\calD$ over $\Z^n$, the (average-case) decision variant of the $\lwe$
  problem, denoted $\lwe_{n,q,\phi}(\calD)$, is to distinguish 
	given arbitrarily many independent samples,
  the uniform distribution over $\T_q^n \times \T$ from
  $A_{q,\vec{s},\phi}$ for a fixed $\vec{s}$ sampled from $\calD$. The variant where the algorithm only gets a bounded number of samples $m\in\bbN$ is denoted $\lwe_{n,m,q,\phi}(\calD)$.
\end{definition}

Notice that the distribution $A_{q,\vec{s},\phi}$ only depends on $\vec{s} \bmod q$, and so one can assume
without loss of generality that $\vec{s} \in \{0,\ldots,q-1\}^n$. Moreover, using a standard
random self-reduction, for any distribution over secrets $\calD$, one can reduce $\lwe_{n,q,\phi}(\calD)$ to
$\lwe_{n,q,\phi}(U(\{0,\ldots,q-1\}^n))$, and we will occasionally use $\lwe_{n,q,\phi}$ to denote the latter
(as is common in previous work). 
When the noise is a Gaussian with parameter $\alpha>0$, i.e., $\phi=D_\alpha$,
we use the shorthand $\lwe_{n,q,\alpha}(\calD)$.
Since the case when $\calD$ is uniform over $\{0,1\}^n$ plays an important
role in this paper, we will denote it by $\zolwe_{n,q,\phi}$ (and by~$\zolwe_{n,m,q,\phi}$ when the algorithm only gets~$m$ samples). Finally, as we show in the following lemma, 
one can efficiently reduce $\lwe$ to the case in which the secret
is distributed according to the (discretized) error distribution and is hence somewhat short. This latter
form of \lwe, known as the ``normal form,'' was first shown hard in~\cite{DBLP:conf/crypto/ApplebaumCPS09}
for the case of prime $q$.
Here we observe that the proof extends to non-prime $q$, the new technical ingredient being Claim~\ref{clm:invertiblemodq} below.

\begin{lemma}\label{lem:normalform}
  For any $q \ge 25$, $n,m \geq 1$, $\alpha>0$, $\eps < 1/2$ and~$s \geq
  \sqrt{\ln(2n(1+1/\epsilon)/\pi)}/q$, there is an efficient
  (transformation) reduction from~$\lwe_{n,m,q,\alpha}$ to
  $\lwe_{n,m',q,\alpha}(\calD)$
  where~$m' = m - (16n + 4\ln \ln q)$ and $\calD=D_{\Z^n,q(\alpha^2+s^2)^{1/2}}$, that 
	turns advantage $\zeta$ into an advantage of at least $(\zeta-8\eps)/4$.
	In particular, assuming $\alpha \geq \sqrt{\ln(2n(1+1/\epsilon)/\pi)}/q$, we can take $s=\alpha$,
	in which case $\calD=D_{\Z^n,\sqrt{2}q\alpha}$.
\end{lemma}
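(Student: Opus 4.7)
The plan is to adapt the classical normal-form reduction of Applebaum, Cash, Peikert, and Sahai~\cite{DBLP:conf/crypto/ApplebaumCPS09} (originally designed for prime $q$) to arbitrary $q \ge 25$, combining the Applebaum-style algebraic re-randomization with a Peikert-style Gaussian convolution (Lemma~\ref{lem:peikertdiscretegauss}) that discretizes the induced continuous secret into one distributed essentially as $\calD = D_{\Z^n,q\sqrt{\alpha^2+s^2}}$.

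First, I would use Claim~\ref{clm:invertiblemodq} on the first $16n + 4\ln\ln q$ input samples to extract a sub-collection of $n$ samples $(\matA,\vec{b})$ whose matrix $q\matA \in \Z_q^{n\times n}$ is invertible modulo $q$, with probability at least some fixed constant. This is the new technical ingredient beyond~\cite{DBLP:conf/crypto/ApplebaumCPS09}: for composite $q$, a single random $n\times n$ submatrix may fail to be invertible modulo some small prime $p \mid q$ with substantial probability, so one invokes a CRT-based argument prime-by-prime, with a union bound over the $O(\log\log q)$ distinct prime divisors of $q$ accounting for the additive $4\ln\ln q$ overhead. If extraction fails, the reduction outputs a random bit. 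Writing $\vec{b} = \matA\vec{s}+\vec{e}\bmod 1$ with $\vec{e} \sim D_\alpha^n$, and letting $\matB := (q\matA)^{-1}\in \Z_q^{n\times n}$, one has $\vec{s}\equiv \matB(q\vec{b})-\matB(q\vec{e}) \pmod q$.

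For each of the remaining $m'$ samples $(\vec{a} = \vec{a}_{\mathrm{int}}/q,\, b)$, I define $\vec{a}'_{\mathrm{int}} := -\matB^T \vec{a}_{\mathrm{int}} \bmod q$ and $b_1 := b + \langle \vec{a}'_{\mathrm{int}},\, q\vec{b}\rangle/q \bmod 1$. Substituting the expression for $\vec{s}$ yields $qb_1 \equiv \langle \vec{a}'_{\mathrm{int}},\, q\vec{e}\rangle + qe \pmod q$, so the transformed sample encodes the continuous ``secret'' $q\vec{e}$ with fresh uniform $\vec{a}'_{\mathrm{int}}$ (since $\matB$ is a bijection on $\Z_q^n$) that are independent of the fixed $\vec{e}$. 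The key observation enabling discretization without knowing $\vec{s}$ is that $q\matA\vec{s} \in \Z^n$, hence $q\vec{b} \equiv q\vec{e} \pmod{\Z^n}$: the public $q\vec{b}$ already reveals the coset of $q\vec{e}$ modulo $\Z^n$. The assumed bound on $s$ ensures $qs \ge \eta_\eps(\Z^n)$ via Lemma~\ref{lem:boundonsmoothing}, so Lemma~\ref{lem:gpv} lets us sample once a vector $\vec{f} \sim D_{\Z^n - q\vec{e},\, qs}$. Setting $\vec{s}' := q\vec{e} + \vec{f} \in \Z^n$, Lemma~\ref{lem:peikertdiscretegauss} gives that $\vec{s}'$ is within statistical distance $8\eps$ of $\calD$. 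I then absorb $\vec{f}$ into each sample via $b_2 := b_1 + \langle \vec{a}'_{\mathrm{int}},\, \vec{f}\rangle/q \bmod 1$, producing the output $(\vec{a}'_{\mathrm{int}}/q,\, b_2)$ which satisfies $b_2 = \langle \vec{a}'_{\mathrm{int}}/q,\, \vec{s}'\rangle + e \bmod 1$: a valid $\lwe$ sample with secret $\vec{s}'$ and noise $e \sim D_\alpha$.

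In the YES case the joint distribution of the $m'$ output samples is thus within statistical distance $8\eps$ of $\lwe_{n,m',q,\alpha}(\calD)$; in the NO case, since $b$ is uniform on $\T$ and independent of $\vec{a}_{\mathrm{int}}$, both $\vec{a}'_{\mathrm{int}}/q$ and $b_2$ come out uniform and independent, so the reduction maps uniform samples to uniform samples. Combining with the extraction success probability and the random-guess fallback yields the stated advantage $(\zeta-8\eps)/4$. The main obstacle is really Claim~\ref{clm:invertiblemodq}: proving that $16n+4\ln\ln q$ samples suffice to find an invertible submatrix modulo an arbitrary composite $q$ requires a careful per-prime analysis using CRT together with a union bound over the distinct prime divisors of $q$, which is precisely what the prime-modulus analysis of Applebaum et al.\ did not need to address.
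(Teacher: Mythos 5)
Your reduction is correct and is essentially the paper's own proof: use Claim~\ref{clm:invertiblemodq} on the first $16n+4\ln\ln q$ samples to get an invertible-mod-$q$ batch, discretize that batch's noise on the publicly known coset via Lemmas~\ref{lem:gpv} and~\ref{lem:peikertdiscretegauss} (the paper works at scale $1/q$, adding $\vec{x}\sim D_{q^{-1}\Z^n-\vec{b}_0,s}$ and getting secret $-q\vec{e}_0$, while you work at scale $1$ with $\vec{s}'=q\vec{e}+\vec{f}$, the same construction up to sign and transpose conventions), and re-randomize the remaining samples so the effective secret is the discretized noise, giving the identical $8\eps$ statistical loss and the $(\zeta-8\eps)/4$ advantage via the success probability of the extraction step. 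The only inaccuracy is peripheral: your sketch of how Claim~\ref{clm:invertiblemodq} would be proved (CRT with a union bound over $O(\log\log q)$ prime divisors) is not how the paper proves it (a greedy gcd-based column selection analyzed with a Chernoff bound and the estimate $\varphi(q)/q\ge 1/(4\ln\ln q)$), but since you invoke the Claim as stated this does not affect the proof of the lemma.
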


\begin{proof}
  Consider the first $16n + 4\ln\ln q$ samples $(\vec{a},b)$.
	Using Claim~\ref{clm:invertiblemodq}, with probability at least
	$1-2e^{-1} \ge 1/4$, we can efficiently find a subsequence of the samples
	such that the matrix $A_0 \in \Z_q^{n \times n}$ whose columns
	are formed by the $\vec{a}$ in the subset (scaled up by $q$)
	has an inverse $A_0^{-1} \in \Z_q^{n \times n}$ modulo $q$. If we cannot find such a subsequence, we abort. 
	Let $\vec{b}_0 \in \T^n$ be the vector formed by the corresponding $b$ in the subsequence.
	Let also $\vec{b}'_0 \in \T_q^n$ be $\vec{b}_0 + \vec{x}$ where $\vec{x}$ is
	chosen from $D_{q^{-1}\Z^n - \vec{b}_0,s}$. 
	(Notice that the coset $q^{-1}\Z^n - \vecb_0$ is well defined
because $\vecb_0$ is a coset of $\Z^{n} \subseteq q^{-1}\Z^n$.)
  From each of the remaining $m'$ samples $(\vec{a},b) \in \T_q^n \times \T$ we produce a pair
	\[
	  \big(\vec{a}' = A_0^{-1} \vec{a}, b' = b - \inner{A_0^{-1} \cdot q \vec{a}, \vec{b}'_0}\big) \in \T_q^n \times \T.
	\]
	We then apply the given $\lwe$ oracle to the resulting $m'$ pairs and output its result. 
		
	We now analyze the reduction. First notice that the construction of $A_0$ depends
	only on the $\vec{a}$ component of the input samples, and hence the probability of 
	finding it is the same in case the input is uniform and in case it consists of $\lwe$ samples.
	It therefore suffices in the following to show that there is a distinguishing gap conditioned
	on successfully finding an $A_0$. 
	To that end, first observe that if the input samples $(\veca,b)$ are uniform in $\T_q^n \times \T$
	then so are the output samples $(\veca',b')$.
	Next consider the case that the input samples are distributed according to $A_{q,\vec{s},D_{\alpha}}$
	for some $\vec{s} \in \Z^n$. Then since~$s \geq \eta_{\epsilon}(q^{-1}\Z)$ by Lemma~\ref{lem:boundonsmoothing},
	using Lemma~\ref{lem:peikertdiscretegauss} we get that $\vec{b}_0' = q^{-1} A_0^T \vec{s} + \vece_0$
	where $\vece_0$ is distributed within statistical distance $8\eps$ from $D_{q^{-1}\Z^n, (\alpha^2+s^2)^{1/2}}$.
	Therefore, for each output sample $(\veca',b')$ we have
	\[
	   b' = b - \inner{A_0^{-1} \cdot q \vec{a}, \vec{b}'_0} 
		    = \inner{\veca,\vecs}+ e - \inner{\veca, \vecs} - \inner{A_0^{-1} q \veca, \vece_0}
				= \inner{-q \vec{e}_0, \veca'}+e,
	\]
	where $e$ is an independent error from $D_\alpha$. Therefore, the output samples
	are distributed according to $A_{q,-q\vec{e}_0,D_{\alpha}}$, completing the proof.
\end{proof}

\begin{claim}\label{clm:invertiblemodq}
For any $q \ge 25$, $n \ge 1$, and $t_1 \ge 4, t_2 \ge 1$, given a sequence of $t_1 n+t_2 \ln\ln q$
vectors $\vec{a}_1,\vec{a}_2,\ldots$ chosen uniformly and independently from $\Z_q^n$,
except with probability $e^{-t_1 n /16} + e^{-t_2/4}$, there exists a subsequence 
of $n$ vectors such that the $n \times n$ matrix they form is invertible modulo $q$. 
Moreover, such a subsequence can be found efficiently. 
\end{claim}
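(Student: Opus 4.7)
The plan is to analyze a greedy procedure that scans the vectors $\vec{a}_1,\vec{a}_2,\ldots$ in order while maintaining a subsequence $S$, initially empty, adding $\vec{a}_i$ to $S$ exactly when the resulting vectors are still linearly independent over $\Z_q$ (equivalently, their matrix has full column rank modulo every prime $p \mid q$). This test can be performed in polynomial time without factoring $q$, for example by maintaining a Hermite normal form of the matrix whose columns are the vectors currently in $S$, so the algorithm is efficient. It remains to bound the probability that $|S|$ fails to reach $n$ within the first $t_1 n + t_2 \ln\ln q$ samples.

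The key observation is that since $\vec{a}_i$ is uniform in $\Z_q^n$, its reductions modulo distinct primes $p \mid q$ are mutually independent by the Chinese Remainder Theorem. If $S$ currently has size $k$ (hence rank exactly $k$ modulo every prime $p \mid q$), then the probability that $\vec{a}_i$ extends $S$ is
\[
\prod_{p \mid q} \bigl(1 - p^{k-n}\bigr).
\]
For $k \le n-2$, each factor is at least $1 - p^{-2}$, so the product is at least $\prod_{p\text{ prime}} (1 - p^{-2}) = 6/\pi^2 > 1/2$. For $k = n-1$, the product equals $\varphi(q)/q$, which for $q \ge 25$ satisfies $\varphi(q)/q \ge 1/(4 \ln\ln q)$ by the Rosser--Schoenfeld effective form of Mertens' theorem (a short computation shows that their bound $q/\varphi(q) < e^\gamma \ln\ln q + 2.50637/\ln\ln q$ implies the desired inequality for all $q \ge 19$).

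\emph{Phase 1: growing $|S|$ from $0$ to $n-1$ using the first $t_1 n$ samples.} Conditioned on the past, each step succeeds with probability at least $1/2$ whenever $|S| \le n-2$, so by a standard coupling argument the number of successful extensions in $t_1 n$ trials stochastically dominates a $\mathrm{Binomial}(t_1 n, 1/2)$ variable. Hoeffding's inequality then bounds the probability of obtaining fewer than $n-1$ successes by $\exp(-n(t_1-2)^2/(2 t_1))$, and the elementary inequality $8(t_1-2)^2 \ge t_1^2$ for $t_1 \ge 4$ upgrades this into the desired $e^{-t_1 n/16}$.

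\emph{Phase 2: the final step from $n-1$ to $n$ using the next $t_2 \ln\ln q$ samples.} Conditioned on the state at the end of Phase 1, each of these samples independently extends $S$ with probability at least $1/(4 \ln\ln q)$, so the probability that none succeeds is at most $(1 - 1/(4 \ln\ln q))^{t_2 \ln\ln q} \le e^{-t_2/4}$. Taking a union bound over the two failure events yields the claimed overall failure bound $e^{-t_1 n/16} + e^{-t_2/4}$. The main obstacle, and the reason why the argument cleanly splits into two phases with different tail behaviors, is precisely this final step: the per-sample extension probability drops from a universal constant to $\Theta(1/\ln\ln q)$ when $q$ has many small prime factors, which is exactly what forces the extra $t_2 \ln\ln q$ ``budget'' and the separate Mertens-style analysis that fixes the constants $1/16$ and $1/4$ in the statement.
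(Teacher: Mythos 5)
Your proof is correct and follows essentially the same route as the paper's: a greedy pass maintaining a triangularized (Hermite-type) form to test whether each new vector keeps the columns full rank modulo every prime dividing $q$ — which is exactly the paper's unimodular-$U$ gcd-coprimality test — followed by the same two-phase analysis with per-step success probability at least $\zeta(2)^{-1} > 1/2$ while $k \le n-2$, the bound $\varphi(q)/q \ge 1/(4\ln\ln q)$ for $q \ge 25$ in the final step, and a Chernoff/Hoeffding tail plus a union bound. The minor differences (the exact product $\prod_{p \mid q}(1-p^{k-n})$ in place of the paper's two-coordinate gcd bound, and Hoeffding rather than multiplicative Chernoff) are cosmetic.
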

\begin{proof}
We consider the following procedure. 
Let $k$ be a counter, initialized to $0$, indicating the number of vectors currently 
in the subsequence, and let $A \in \Z_q^{n \times k}$ be the matrix whose columns are formed by the current
subsequence. We also maintain a unimodular matrix~$U \in \Z^{n \times n}$, 
initially set to the identity, satisfying the invariant that~$U \cdot A \in \Z_q^{n \times k}$ 
has the following form:
  its top $k \times k$ submatrix is upper triangular with each
  diagonal coefficient coprime with~$q$; its bottom $(n-k) \times
  k$ submatrix is zero.
The procedure considers the vectors $\vec{a}_i$ one by one.
For each vector $\vec{a}$, if it is such that the gcd of the last~$n-k$ entries of~$U
  \vec{a}$, call it $g$, is coprime with~$q$, then it does the following:
	it adds $\vec{a}$ to the subsequence, 
	computes (using, say, the extended GCD algorithm) a unimodular matrix $V$ that acts as identity on the first $k$ coordinates 
	and for which the last $n-k$ coordinates of $VU\vec{a}$ are $(g,0,\ldots,0)$,
	replaces $U$ with $VU$, and increments $k$. 

It is easy to see that the procedure's output is correct if it reaches $k=n$.
It therefore suffices to analyze the probability that this event happens. 
For this we use the following two facts to handle the cases~$k<n-1$ and~$k = n-1$, respectively. First, the probability that
the gcd of two uniformly random numbers modulo $q$ is coprime with $q$ is 
\[
\prod_{p|q,~p~\text{prime}} (1-p^{-2}) \ge \prod_{p~\text{prime}} (1-p^{-2}) = \zeta(2)^{-1} \approx 0.61,
\]
where $\zeta$ is the Riemann zeta function. 
Second, the probability that one uniformly random number modulo $q$ is coprime with $q$
is $\varphi(q)/q$, where $\varphi$ is Euler's totient function. By~\cite[Theorem 8.8.7]{BachShallit}, 
this probability is at least $(e^\gamma \ln \ln q + 3/(\ln\ln q))^{-1}$ where $\gamma$ is Euler's constant, which for $q \ge 25$
is at least $(4 \ln \ln q)^{-1}$. 

Using the (multiplicative) Chernoff bound, the first fact, and the fact that $U\vec{a}$ is uniform in $\Z_q^n$ since $U$ is unimodular, we see that the probability that $k < n-1$ after considering $t_1 n$ vector is at most $e^{-t_1 n /16}$. Moreover, once $k=n-1$, using the second fact we get that the probability that after considering $t_2 \ln \ln q$ additional vectors we still have $k=n-1$ is at most $e^{-t_2/4}$.
\end{proof}

\myparagraph{Unknown (Bounded) Noise Rate}
We also consider a variant of $\lwe$ in which the amount of noise is some unknown $\beta \le \alpha$ (as opposed
to exactly $\alpha$), with $\beta$ possibly depending on the secret $\vecs$. As the following lemma shows, this 
does not make the problem significantly harder.

\begin{definition}\label{def:lelwe}
For integers $n,q \ge 1$ and $\alpha \in (0,1)$, $\lwe_{n,q,\le\alpha}$ is the problem of solving $\lwe_{n,q,\beta}$ for any $\beta = \beta(\vc{s}) \le \alpha$.
\end{definition}

\begin{lemma}\label{lem:lelwe}
Let $\cA$ be an algorithm for $\lwe_{n, m, q, \alpha}$ with advantage at least $\epsilon>0$. Then there exists an algorithm $\cB$ for $\lwe_{n,m',q,\le\alpha}$ using oracle access to~$\cA$ and with advantage at least $1/3$, where both $m'$ and its running time are $\poly(m,1/\epsilon, n, \log q)$.
\end{lemma}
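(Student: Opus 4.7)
The plan is to guess the unknown noise parameter $\beta$ from a polynomial-sized grid and, for each guess $\tilde\beta \in [0,\alpha]$, add fresh continuous Gaussian noise of parameter $\sqrt{\alpha^2 - \tilde\beta^2}$ to the $b$-component of each input sample. Because independent continuous Gaussians compose additively in variance, an LWE sample with original noise $D_\beta$ is mapped to one with noise $D_{\sqrt{\beta^2 + \alpha^2 - \tilde\beta^2}}$, while a uniform sample remains uniform. For the grid point $\tilde\beta$ closest to the true $\beta$, the resulting noise parameter is within $O(\delta)$ of $\alpha$, so the transformed samples are statistically close to $\lwe_{n,q,\alpha}$ and can be handled by $\cA$.

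Concretely, fix a grid $G = \{0, \delta, 2\delta, \ldots\} \cap [0,\alpha]$ of size $O(\alpha/\delta)$, where $\delta$ will be chosen to be a small polynomial in $\epsilon/(n m \log q)$. For each $\tilde\beta \in G$, apply the noise-compensation transformation to a fresh block of input samples and run an amplified version of $\cA$ (call it $\cA^*$) on the result; $\cB$ then outputs ``LWE'' if any of the $|G|$ runs does, and ``uniform'' otherwise. On a truly uniform input, every transformed block is again uniform, so each run of $\cA^*$ errs with arbitrarily small probability $\delta'$, and a union bound gives $\Pr[\cB \text{ outputs ``LWE''}] \le |G|\delta'$. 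On an LWE input with secret $\vecs$ and noise rate $\beta(\vecs) \le \alpha$, the run corresponding to the $\tilde\beta$ closest to $\beta(\vecs)$ sees samples whose noise variance differs from $\alpha^2$ by at most $2\alpha\delta$; standard bounds on the statistical distance between two Gaussians then show this distribution is within total variation $O(m'\delta/\alpha)$ of $\lwe_{n,q,\alpha}$, so $\cA^*$ outputs ``LWE'' with probability at least $1 - \delta' - O(m'\delta/\alpha)$.

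The one subtlety is the amplification step: the hypothesis only gives $\cA$ an $\epsilon$ advantage on average over $\vecs \gets U(\Z_q^n)$, not for each particular $\vecs$, so naively repeating $\cA$ on the same input and taking a majority vote need not work. We get around this by the usual LWE random self-reduction: to each of $k = O(\log(1/\delta')/\epsilon^2)$ disjoint batches of input samples, we add the term $\inner{\veca, \vecs^*}$ for a freshly drawn uniform $\vecs^* \in \Z_q^n$ to each $b$, re-randomizing the LWE secret to an independent uniform one in every batch. Running $\cA$ on each re-randomized batch, and on the same number of self-generated uniform batches, yields independent Bernoulli samples from $\Pr[\cA(\lwe_{n,q,\alpha}) = 1]$ and $\Pr[\cA(U) = 1]$. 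A Chernoff bound then lets us estimate each quantity to within $\epsilon/4$, and hence decide whether they differ by $\ge \epsilon/2$, with confidence $1-\delta'$ using $k$ batches, defining $\cA^*$ with $m' = k \cdot m$ samples.

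Choosing $\delta' = 1/(6|G|)$ and $\delta$ small enough that $m'\delta/\alpha \le 1/12$ yields advantage at least $(1 - \delta' - 1/12) - |G|\delta' \ge 1/3$, while the total sample and time complexity remains $|G| \cdot k \cdot m = \poly(m, 1/\epsilon, n, \log q)$. The step I expect to require the most care is the amplification: converting an only-on-average guarantee about $\cA$ into genuine independent Bernoulli estimators via the self-reduction is what makes Chernoff apply, and without this move the noise-guessing layer would collapse. Assuming one is willing to assume $\alpha \ge 1/\poly(n)$ (otherwise the grid size can be controlled by first invoking a trivial lower bound on the interesting range of $\beta$), the rest of the argument is bookkeeping.
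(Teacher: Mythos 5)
Your proposal is correct and follows essentially the same route as the paper's (omitted) proof sketch — a polynomial grid of noise increments plus Chernoff estimation of acceptance probabilities — and your explicit re-randomization of the secret via $\vecs^*\gets\Z_q^n$ per batch is precisely the ingredient needed to turn $\cA$'s on-average-over-$\vecs$ advantage into i.i.d.\ Bernoulli estimators (and hence advantage $1/3$ rather than $\Omega(\epsilon)$), a point the paper's one-line sketch leaves implicit. One small parameter fix: choose the grid spacing proportionally to $\alpha$ (e.g.\ $\delta=\alpha/(Cm')$ for a suitable constant $C$), rather than as an absolute $\poly(\epsilon/(nm\log q))$ quantity, so that the per-sample statistical distance $O(\delta/\alpha)$ is small and the grid size stays $\poly(m,1/\epsilon,n,\log q)$ for every $\alpha>0$, with no extra assumption such as $\alpha\ge 1/\poly(n)$.
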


The proof is standard (see, e.g., \cite[Lemma 3.7]{DBLP:journals/jacm/Regev09} for
the analogous statement for the search version of $\lwe$). The idea is to use Chernoff bound to estimate $\cA$'s success probability on the uniform distribution, and then add noise in small increments to our given distribution and estimate $\cA$'s behavior on the resulting distributions. If there is a gap between any of these and the uniform behavior, the input distribution is deemed non-uniform. The full proof is omitted.

\znote{Commented out proof is down here. Fix or remove.
}

\myparagraph{Relation to Lattice Problems}
Regev~\cite{DBLP:journals/jacm/Regev09} and Peikert~\cite{DBLP:conf/stoc/Peikert09} showed quantum and classical reductions (respectively) from the worst-case hardness of the $\gapsvp$ problem to the \onote{added:}search version of $\lwe$. 
(We note that the quantum reduction in~\cite{DBLP:journals/jacm/Regev09} also shows a reduction from $\sivp$.)
As mentioned in the introduction, the classical reduction only works when the modulus $q$ is exponential
in the dimension~$n$.
This is summarized in the following theorem, which is derived from
{{\cite[Theorem~3.1]{DBLP:journals/jacm/Regev09}}} and
{{\cite[Theorem~3.1]{DBLP:conf/stoc/Peikert09}}}.

\begin{theorem}\label{thm:worstcaseaveragecase}
Let $n, q \ge 1$ be integers and let $\alpha \in (0,1)$ be such that $\alpha q \ge 2 \sqrt{n}$. Then there exists a quantum reduction from worst-case $n$-dimensional $\gapsvp_{\widetilde{O}(n/\alpha)}$ to $\lwe_{n,q,\alpha}$. If in addition $q \ge 2^{n/2}$ then there is also a classical reduction between those problems. 
\end{theorem}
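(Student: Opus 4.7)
The plan is to obtain the two halves of the statement by separately invoking the existing reductions of Regev~\cite{DBLP:journals/jacm/Regev09} and Peikert~\cite{DBLP:conf/stoc/Peikert09}, and then to check that the parameter regime $\alpha q \ge 2\sqrt{n}$ (together with $q \ge 2^{n/2}$ in the classical case) matches the hypotheses of those results. Since the theorem is explicitly presented as a consequence of \cite[Theorem~3.1]{DBLP:journals/jacm/Regev09} and \cite[Theorem~3.1]{DBLP:conf/stoc/Peikert09}, the ``proof'' is really an assembly step, but it is worth sketching why each cited reduction works in its claimed regime.

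For the quantum half, I would appeal to Regev's iterative framework. Given an oracle for $\lwe_{n,q,\alpha}$, one maintains samples from the discrete Gaussian $D_{\Lambda,r}$ on the input lattice $\Lambda$ and iteratively halves $r$. Each iteration combines a \emph{classical} step, which uses samples from $D_{\Lambda,r}$ to solve \bdd{} on the dual lattice $\Lambda^*$ at a radius below its smoothing parameter via the \lwe{} oracle, with a \emph{quantum} step, which uses the Fourier transform over $\Lambda^*$ to turn this \bdd{} solver into a sampler for $D_{\Lambda,r/2}$. The starting Gaussian width is supplied by LLL, and $\poly(n)$ iterations bring $r$ small enough to read off vectors witnessing $\lambda_1(\Lambda)$ within factor $\widetilde O(n/\alpha)$. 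The condition $\alpha q \ge 2\sqrt n$ is exactly what is needed so that the \lwe{} error is large compared to the \bdd{} displacement.

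For the classical half, I would invoke Peikert's non-iterative reduction, which goes directly through \bdd{} (and the equivalent \gapsvp{} via standard transference): one embeds a \gapsvp{} / \bdd{} instance into a batch of \lwe{} samples by multiplying a short basis of the dual lattice against uniform scalars modulo $q$ and adding Gaussian noise, and then feeds the resulting samples to the \lwe{} oracle. Because no iterative ``Gaussian narrowing'' is performed, there is no need for a quantum Fourier step, but in order to ensure that the samples so produced have the correct marginal distribution while still encoding a worst-case lattice, the modulus must be large enough to cover the dual lattice's covering radius: this forces $q \ge 2^{n/2}$. The approximation factor $\widetilde O(n/\alpha)$ is again dictated by the smoothing bound together with $\alpha q \ge 2\sqrt n$.

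The one concrete thing to verify is that both cited theorems are stated for the search \lwe{} problem, so to match Definition~\ref{def:lwe} one passes through the standard search-to-decision reduction; this is straightforward under our hypotheses, provided $q$ has a suitable factorization, and in any case the decision-to-search direction is free. The main potential obstacle is purely bookkeeping: confirming that the polynomial slack factors in $\alpha q$ between our statement and \cite{DBLP:journals/jacm/Regev09,DBLP:conf/stoc/Peikert09} can be absorbed into the $\widetilde O(\cdot)$ in the approximation ratio. No new ideas beyond the cited works are required.
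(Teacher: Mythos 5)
Your proposal matches the paper's own treatment: Theorem~\ref{thm:worstcaseaveragecase} is not proved in the paper at all but is imported directly from \cite[Theorem~3.1]{DBLP:journals/jacm/Regev09} and \cite[Theorem~3.1]{DBLP:conf/stoc/Peikert09}, with the statement understood as being about search \lwe{} and the passage to decision handled separately afterwards via Theorem~\ref{thm:searchtodecisionmicpei}, exactly as you note. Your sketches of the internals of the cited reductions are inessential and only slightly loose (e.g.\ the $q \ge 2^{n/2}$ requirement in Peikert's reduction comes from the non-iterative structure forcing the initial Gaussian samples to be generated from an LLL-reduced basis, rather than from covering the dual lattice's covering radius), but nothing load-bearing is missing.
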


In order to obtain hardness of the \emph{decision} version of \lwe, which is the one we consider throughout the paper, one employs a search-to-decision reduction. Several such reductions appear in the literature (e.g.,~\cite{DBLP:journals/jacm/Regev09,DBLP:conf/stoc/Peikert09,DBLP:conf/eurocrypt/MicciancioP12}). The most recent reduction by Micciancio and Peikert~\cite{DBLP:conf/eurocrypt/MicciancioP12}, which essentially subsumes all previous reductions, 
requires the modulus~$q$ to be smooth. Below we give the special case when the modulus is a power of~$2$, which suffices for our purposes. It follows from our results that (decision) $\lwe$ is hard not just for a smooth modulus~$q$, as follows from~\cite{DBLP:conf/eurocrypt/MicciancioP12}, but actually for all moduli~$q$, including prime moduli, with only a small deterioration in the noise
(see Corollary~\ref{cor:mod-reduction-2}).

\begin{theorem}[{{Special case of~\cite[Theorem 3.1]{DBLP:conf/eurocrypt/MicciancioP12}}}]\label{thm:searchtodecisionmicpei}
Let $q$ be a power of $2$, and $\alpha$ satisfy $1/q < \alpha < 1/\omega(\sqrt{\log n})$. Then there exists an efficient reduction from search $\lwe_{n,q,\alpha}$
to (decision) $\lwe_{n,q,\alpha'}$ for $\alpha'=\alpha \cdot \omega(\log n)$.
\end{theorem}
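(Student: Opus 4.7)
The plan is a standard search-to-decision reduction for $\lwe$, specialized to $q$ a power of $2$. I would recover $\vecs \in \Z_q^n$ bit-by-bit from the least significant bit upward, using the decision oracle to test a single bit hypothesis per call and then reducing to a residual instance of smaller modulus. More precisely, at the start of recursive level $j \in \{0, 1, \ldots, \log_2 q - 1\}$, I would have $\vecs \bmod 2^j$ in hand, which lets me transform each sample $(\veca, b)$ into a sample of a residual $\lwe$ instance with modulus $q' = q/2^j$ and secret $\vecs'$ whose parity encodes the $j$-th bit (by subtracting $\langle \veca, \vecs \bmod 2^j \rangle/q$ from $b$ and reducing $\veca$ modulo $q'$). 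The reduction finishes when $q' = 1$.

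To find $s'_i \bmod 2$ for a fixed coordinate $i$, I would use a ``randomize and test'' primitive: for each guess $g \in \{0, 1\}$, pick $r \in \Z_{q'}$ uniformly, form the sample $(\veca + r \vece_i, \; b + rg/q')$, flood by adding an independent continuous Gaussian of parameter $\alpha' = \alpha \cdot \omega(\log n)$ to the second coordinate, and feed the result to the decision oracle. A direct calculation shows the resulting noise equals $e + r(g-s'_i)/q'$ before flooding. When $g \not\equiv s'_i \pmod 2$ the difference $g - s'_i$ is odd and therefore coprime to the power-of-$2$ modulus $q'$, so $r(g-s'_i)/q' \bmod 1$ is uniform over $\T_{q'}$; Lemma~\ref{lem:smoothingcontinuous} applied to the one-dimensional lattice $(q')^{-1}\Z$ (whose smoothing parameter is $O(\sqrt{\log n}/q')$, smaller than $\alpha'$ thanks to the hypothesis $\alpha > 1/q$) then shows that the flooded sample is within $\negl(n)$ statistical distance of uniform on $\T_q^n \times \T$. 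When $g \equiv s'_i \pmod 2$, Lemma~\ref{le:discrete_plus_cont_Gauss} yields closeness to a clean $\lwe(\alpha')$ distribution. Amplifying the oracle's constant distinguishing advantage by majority voting on $\poly(n, 1/\epsilon)$ independent trials and union-bounding over the $O(n \log q)$ hypothesis tests produces a reduction with failure probability $\negl(n)$.

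The main obstacle is the sub-case $g \equiv s'_i \pmod 2$ but $g \ne s'_i$: here the extra noise $r(g-s'_i)/q'$ lives in a strict subgroup of $\T_{q'}$ whose smoothing parameter can be as large as $\Theta(1/2)$ (e.g.\ when $s'_i - g = q'/2$), far exceeding $\alpha'$, so Lemma~\ref{le:discrete_plus_cont_Gauss} does not apply directly and the ``correct guess'' case fails to look like clean $\lwe(\alpha')$. To resolve this I would pre-compose the reduction with the normal-form reduction of Lemma~\ref{lem:normalform}, guaranteeing that $\|\vecs\|_\infty$ is bounded by $\Otil(q\alpha)$ with overwhelming probability, and show by a careful tracking of the secret magnitude across recursive levels that the gap $|s'_i - g|$ remains small enough for flooding at rate $\alpha'$ to smooth the induced subgroup structure. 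The hypothesis $\alpha < 1/\omega(\sqrt{\log n})$ is what makes the normal-form secret sufficiently short, and the hypothesis $\alpha > 1/q$ is what guarantees that $\alpha'$ exceeds the smoothing parameter of the ambient lattice $q^{-1}\Z$ used to absorb the uniform part in the wrong-guess case.
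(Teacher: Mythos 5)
First, note that the paper does not prove this statement at all: it is imported verbatim as a special case of~\cite[Theorem~3.1]{DBLP:conf/eurocrypt/MicciancioP12}, so there is no internal proof to compare against; your proposal has to stand on its own, and it has a genuine gap at its core. The hypothesis test you build is not sound, because a decision oracle that distinguishes $\lwe$ from uniform gives no guarantee whatsoever on ``third'' distributions, and your transformation produces such distributions unavoidably. In the branch $g \equiv s'_i \pmod 2$ but $g \neq s'_i$, the added term $r(g-s'_i)/q'$ is uniform over a nontrivial subgroup of $\T_{q'}$, and convolving it with Gaussian flooding of rate $\alpha'$ can never yield anything close to a clean Gaussian of rate $\alpha'$: either $\alpha'$ is below the smoothing parameter of that subgroup, in which case the result is a visibly periodic mixture on which the oracle's behavior is uncontrolled, or $\alpha'$ is above it, in which case the sample becomes close to uniform --- which is exactly the signature of the \emph{opposite} answer. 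Your proposed repair therefore works against you: making the secret short via Lemma~\ref{lem:normalform} shrinks the subgroup spacing, making the correct-parity-but-unequal case smooth out to uniform, so the test confidently returns the wrong bit. No amount of ``careful tracking of the secret magnitude'' fixes this; the test itself must be redesigned, e.g.\ by keeping the modulus fixed at $q$ and randomizing coordinate $i$ by multiples of $q/2^{j+1}$ with the known low bits compensated exactly, so that the unknown high bits are annihilated modulo $1$ and the correct-guess branch is \emph{exactly} $\lwe$; even then the wrong-guess branch carries extra noise uniform over $\{0,1/2\}$, which cannot be smoothed with $\alpha' = o(1)$, and handling it is the real content of the cited reduction.

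There are two further holes. (i) Modulus mismatch: your residual instances at level $j$ have $\veca$ uniform over $\T_{q'}^n$ with $q' = q/2^j$, a proper subgroup of $\T_q^n$, yet you query an oracle for decision $\lwe_{n,q,\alpha'}$. Neither branch of your test then matches either of the oracle's two reference distributions (your claim of closeness to ``uniform on $\T_q^n \times \T$'' is false for the $\veca$-component), so the oracle's advantage guarantee does not apply at any level $j \ge 1$; you would need either oracles at every modulus $2^{k-j}$ or a lifting step you have not supplied. (ii) Even the wrong-parity branch fails quantitatively under your parameters: smoothing $(q')^{-1}\Z$ needs Gaussian parameter roughly $\sqrt{\log n}/q'$, and the hypothesis $\alpha > 1/q$ only gives $\alpha' = \alpha\,\omega(\log n) \gtrsim \sqrt{\log n}/q'$ when $2^j = O(\omega(\sqrt{\log n}))$, i.e.\ for the first $O(\log\log n)$ levels; for deeper levels the wrong-parity samples are not close to uniform either. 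So the recursion as described breaks almost immediately, independently of the obstacle you flagged.
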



\section{Hardness of LWE with Binary Secret}
\label{sec:shortsecret}

The following is the main theorem of this section. 


\begin{theorem}\label{thm:zolwe}
Let~$k, q \ge 1$, and $m \ge n \ge 1$ be integers, and let $\epsilon \in (0,1/2)$, $\alpha, \delta >0$,
be such that $n \ge (k+1) \log_2 q + 2 \log_2 (1/\delta)$,
$\alpha \ge \sqrt{\ln(2n(1+1/\eps))/\pi}/q$.
There exist three (transformation) reductions from $\lwe_{k, m, q, \alpha}$ to $\zolwe_{n, m, q, \le \sqrt{10n} \alpha}$,
such that for any algorithm for the latter problem with advantage $\zeta$, at least one of the reductions produces 
an algorithm for the former problem with advantage at least 
\begin{equation}\label{eq:advantagelossinzolwe}
(\zeta-\delta)/(3m) - 41 \eps/2 - \sum_{p|q,~p~\text{prime}} p^{-k-1}\,.
\end{equation}
\end{theorem}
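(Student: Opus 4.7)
The plan is to reduce $\lwe_{k,m,q,\alpha}$ to $\zolwe_{n,m,q,\le\sqrt{10n}\alpha}$ by sampling a target binary secret $\vecz \in \{0,1\}^n$ uniformly and producing transformed samples whose secret is $\vecz$, routed through the extended-$\lwe$ problem (Section~\ref{sec:extlwe}).

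The central trick uses the leftover hash lemma (Lemma~\ref{lem:lhl}). Given input samples $(\matA, \vecb = \matA \vecs + \vece)$ with $\vecs \in \Z_q^k$, I would sample $\matG \in \Z_q^{k \times n}$ uniformly. Because $n \ge (k+1)\log_2 q + 2\log_2(1/\delta)$, the pair $(\matG, \matG\vecz)$ is within statistical distance $\delta$ of $(\matG, \vecu)$ with $\vecu$ uniform in $\Z_q^k$. This lets the analysis pretend that $\vecs = \matG \vecz$, so that $\matA \vecs = \matA \matG \vecz$. A first-attempt output is $\matA' := \matA \matG + \matH$, where $\matH$ is uniform in $\{\matH \in \Z_q^{m\times n} : \matH \vecz = \veczero\}$; then $\matA' \vecz = \matA \matG \vecz$ and therefore $\vecb = \matA'\vecz + \vece$, while the marginal of $\matA'$ is close to uniform provided $\matG\vecz$ has a coordinate coprime to $q$. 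The failure probability is bounded by $\sum_{p\mid q,\, p \text{ prime}} p^{-k-1}$, which accounts for the corresponding term in~\eqref{eq:advantagelossinzolwe} (cf.\ the counting in the proof of Claim~\ref{clm:invertiblemodq}).

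To get the output error bounded by $\sqrt{10n}\,\alpha$ and, more importantly, to make it statistically independent of the transformation randomness (as the $\zolwe$ distribution demands), I would route through extended-$\lwe$ rather than through plain $\lwe$. An extended-$\lwe$ sample comes with a hint $\langle\vecz, \vece\rangle$, which is exactly what is needed to re-randomize $\vece$ into a fresh Gaussian using the noise-flooding template of Lemmas~\ref{lem:smoothip} and~\ref{lem:peikertdiscretegauss}. The resulting noise parameter is at most $\sqrt{10n}\,\alpha$: the factor $\sqrt{n}$ comes from $\|\vecz\|_2 \le \sqrt{n}$ for $\vecz \in \{0,1\}^n$, and the $\sqrt{10}$ from smoothing-parameter constants. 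The $41\eps/2$ term in~\eqref{eq:advantagelossinzolwe} collects the accumulated statistical errors from these Gaussian manipulations, each controlled by Lemmas~\ref{lem:smoothing}--\ref{lem:peikertdiscretegauss}.

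The main obstacle is that each extended-$\lwe$ call provides only a single hint, so swapping all $m$ samples at once is not possible. I would therefore use a sample-by-sample hybrid $H_0, H_1, \ldots, H_m$, where $H_i$ has its first $i$ samples produced from the extended-$\lwe$ route and the remaining $m - i$ produced directly from the $\lwe$ input. A standard averaging argument identifies some index $i$ at which the $\zolwe$ oracle distinguishes $H_{i-1}$ from $H_i$ with advantage at least $\zeta/m$, and for each such hybrid step the swap of a single sample can be implemented by any one of three natural transformation reductions (corresponding to three inequivalent embeddings of the extended-$\lwe$ sample into the $\zolwe$ tuple). At least one of these must succeed with advantage at least $\zeta/(3m)$, yielding the three transformation reductions promised in the statement.
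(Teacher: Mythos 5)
Your high-level ingredients are the right ones (the reduction picks the binary secret $\vecz$ itself, the leftover hash lemma under $n \ge (k+1)\log_2 q + 2\log_2(1/\delta)$, and a detour through $\extlwe$ to handle the correlation between the noise and $\vecz$), but the central step as you describe it cannot be implemented, and the bookkeeping behind the three reductions and the $1/(3m)$ loss is misattributed. A reduction can never ``pretend that $\vecs=\matG\vecz$'': $\vecs$ is the unknown secret of the input instance, so with your construction $\matA'=\matA\matG+\matH$ (where $\matH\vecz=\veczero$) one gets $\matA'\vecz=\matA(\matG\vecz)$, which need not equal $\vecb-\vece=\matA\vecs$; Lemma~\ref{lem:lhl} only says that $\matG\vecz$ and a uniform $\vecs$ have nearly the same distribution, not that the reduction can align its own $\matG,\vecz$ with the oracle's hidden secret. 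The paper runs the argument in the opposite direction, as a hybrid chain on the \emph{target} distribution (Lemma~\ref{lem:zolwe}): starting from $(\matA,\matA^T\vecz+\vece)$, the noise is rewritten as $-\matN^T\vecz+\hat{\vece}$ via Lemma~\ref{lem:smoothip}; then $\matA$ is replaced by $\hat{\matA}=q\matC^T\matB+\matN$ using $\extlwe$, where the hint $\matN^T\vecz$ serves precisely to \emph{cancel} the noise component along $\vecz$, so that afterwards the instance depends on $\vecz$ only through $q\matC\vecz$; only then does the leftover hash lemma replace $q\matC\vecz$ by a fresh uniform secret, at which point an embedded $k$-dimensional $\lwe$ instance (with noise $\sqrt{5n}\alpha$) appears, and a final $\extlwe$ step with hint vector $0^n$ switches $\hat{\matA}$ back to uniform. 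The three reductions of the theorem are exactly the three computational gaps of this chain --- two $\extlwe$-type and one plain $\lwe$ --- not ``three inequivalent embeddings of the extended-LWE sample''; and the factor $m$ comes from the single-secret-to-$m$-secret $\extlwe$ hybrid (Lemma~\ref{lem:multiinstanceextlwe}) over the $m$ columns of $\hat{\matA}$, which share one $\matC$ and one $\vecz$, not from a per-sample hybrid that mixes ``extended-LWE-route'' and ``direct'' samples of a single $\zolwe$ instance with a common secret --- it is unclear how such a mixed instance would even be generated consistently.

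Two further omissions. First, you never invoke the first-is-errorless variant, yet Lemma~\ref{lem:redtoextlwe} needs the first noise coordinate to vanish (the first column of the unimodular matrix is not orthogonal to $\vecz$, so any noise along it would make the hint incomputable); it is this step, Lemma~\ref{lem:firsterrorlesshard} applied in dimension $k+1$, that produces the $\sum_{p\mid q} p^{-k-1}$ term --- not your coprimality condition on $\matG\vecz$, whose failure probability would in any case only be bounded by $\sum_{p \mid q} p^{-k}$. Second, the mechanism is not ``noise flooding'' (that is exactly the Goldwasser et al.\ approach the paper is at pains to avoid) but a calibrated Gaussian combination: the added noise $\hat{\vece}$ has parameter $\gamma=\sqrt{n}\beta$, comparable to $\beta\length{\vecz}$, which is what yields the fixed $\sqrt{10n}$ blow-up independent of the adversary's running time.
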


By combining Theorem~\ref{thm:zolwe} with the reduction in Corollary~\ref{cor:mod-reduction} (and noting that $\{0,1\}^n$ is $(\sqrt{n},0)$ bounded), we can replace the $\zolwe$ problem above with $\zolwe_{n,m,q',\beta}$ for any $q' \ge 1$ and $\xi>0$ where
\[
\beta := \left(10 n \alpha^2 + \frac{4n}{\pi q'^2} \ln(2n(1+1/\xi))\right)^{1/2},
\]
while decreasing the advantage in~\eqref{eq:advantagelossinzolwe} by $14 \xi m$. Recalling that 
$\lwe$ of dimension $k=\sqrt{n}$ and modulus $q=2^{k/2}$ (assume $k$ is even) is known to be 
classically as hard as $\sqrt{n}$-dimensional
lattice problems (Theorems~\ref{thm:worstcaseaveragecase} and~\ref{thm:searchtodecisionmicpei}), 
this gives a formal statement of Theorem~\ref{thm:informal}. The modulus $q'$ can be taken
almost as small as $\sqrt{n}$.

For most purposes the sum over prime factors of $q$ in~\eqref{eq:advantagelossinzolwe} is
negligible. For instance, in deriving the formal statement of Theorem~\ref{thm:informal}
above, we used a $q$ that is a power of $2$, in which case the sum is $2^{-k-1}=2^{-\sqrt{n}-1}$, which
is negligible. If needed, one can improve this by applying the modulus switching reduction (Corollary~\ref{cor:mod-reduction-2})
before applying Theorem~\ref{thm:zolwe} in order to make $q$ prime. (Strictly speaking, one also needs to apply Lemma~\ref{lem:lelwe} to replace the ``unknown noise'' variant of $\lwe$ given by Corollary~\ref{cor:mod-reduction-2} with the fixed noise variant.) This improves the advantage loss
to $q^{-\sqrt{n}-1}$ which is roughly $2^{-n}$. 

In a high level, the proof of the theorem follows by combining three main steps. The first, given in
Section~\ref{sec:firstiserrorless}, reduces $\lwe$ to a variant in which the first
equation is errorless. The second, given in Section~\ref{sec:extlwe}, reduces
the latter to the intermediate problem $\extlwe$, another variant of $\lwe$ in which some information 
on the noise elements is leaked. Finally, in Section~\ref{sec:theshortsecretreduction},
we reduce $\extlwe$ to $\lwe$ with $\{0,1\}$ secret. We note that the first reduction is
relatively standard; it is the other two that we consider as the main contribution
of this section. We now proceed with more details (see also Figure~\ref{fig:summary}). 

\begin{proof}
First, since $m \ge n$, Lemma~\ref{lem:firsterrorlesshard} provides a transformation reduction
from $\lwe_{k, m, q, \alpha}$ to first-is-errorless $\lwe_{k+1, n, q, \alpha}$, while reducing the advantage by at most $2^{-k+1}$. 
Next, Lemma~\ref{lem:redtoextlwe} with $\cZ = \{0,1\}^n$, which is of quality $\xi=2$ by 
Claim~\ref{clm:qualityofzeroone}, reduces the latter problem to $\extlwe_{k+1, n, q,\sqrt{5}\alpha,\{0,1\}^n}$ while 
reducing the advantage by at most~$33 \eps/2$. Then, Lemma~\ref{lem:multiinstanceextlwe} reduces the latter problem
to $\extlwe^m_{k+1, n, q, \sqrt{5} \alpha,\{0,1\}^n}$, while losing a factor of~$m$ in the advantage. 
Finally, Lemma~\ref{lem:zolwe} provides three reductions to $\zolwe_{n, m, q, \le \sqrt{10n} \alpha}$: two from the latter problem, and one from
$\lwe_{k+1, m, q,\sqrt{5n} \alpha}$, guaranteeing that the sum of advantages is at least the original advantage minus $4 m \eps + \delta$.
Together with the trivial reduction from $\lwe_{k, m, q,\alpha}$ to $\lwe_{k+1, m, q, \sqrt{5n} \alpha}$ (which incurs no loss in advantage), this
completes the proof. 
\end{proof}

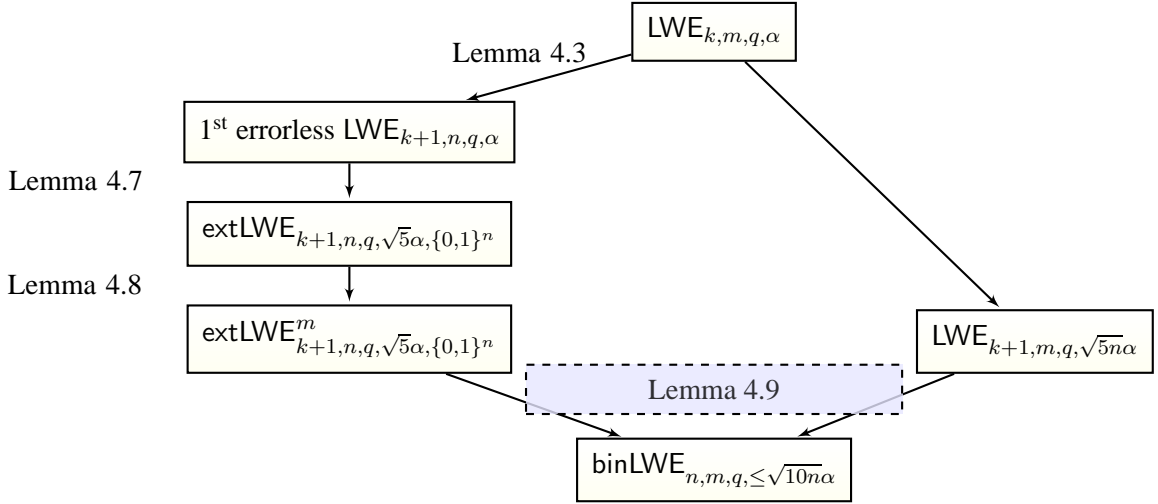
\begin{figure}[ht]%
\centering
\begin{tikzpicture}[node distance=1cm, auto]  
\tikzset{
    mynode/.style={rectangle,draw=black, top color=white, bottom color=yellow!5,thick, inner sep=0.5em, minimum height=2em, text centered},
    myarrow/.style={->, >=latex', shorten >=1pt, thick},
    mylabel/.style={text width=7em, text centered} 
}  
\node[mynode] (atmost) {$\zolwe_{n, m,q, \le \sqrt{10n} \alpha}$}; 
\node[mynode, above right=1.2cm of atmost] (second) {$\lwe_{k+1, m, q, \sqrt{5n} \alpha}$};
\node[mynode, above left=1.2cm of atmost] (first) {$\extlwe^m_{k+1, n, q,\sqrt{5} \alpha,\{0,1\}^n}$};  
\node[mynode, above=0.5cm of first] (firstabove) {$\extlwe_{k+1, n, q, \sqrt{5} \alpha,\{0,1\}^n}$};
\node[mynode, above=0.5cm of firstabove] (firstaboveabove) {1$^{\mbox{\scriptsize{st}}}$ errorless $\lwe_{k+1, n,q,\alpha}$};
\node[mynode, above=5cm of atmost] (firstaboveaboveabove) {$\lwe_{k, m, q,\alpha}$};

\draw[myarrow] (first)  --  (atmost);	
\draw[myarrow] (second)  --  (atmost);	
\draw[myarrow] (firstabove)  --  (first);	
\draw[myarrow] (firstaboveabove)  --  (firstabove);	
\draw[myarrow] (firstaboveaboveabove)  --  (firstaboveabove);	
\draw[myarrow] (firstaboveaboveabove)  --  (second);	

\node[mylabel, above left=0cm of first] (label3) {Lemma~\ref{lem:multiinstanceextlwe}};  
\node[mylabel, above left=0cm of firstabove] (label4) {Lemma~\ref{lem:redtoextlwe}};  
\node[mylabel, below left=0cm of firstaboveaboveabove.west] (label5) {Lemma~\ref{lem:firsterrorlesshard}};  
\node[rectangle,draw=black, thick, inner sep=0.5em, minimum width=5cm, fill=blue!10!white, fill opacity=0.8, text opacity=1, dashed, above=0.3cm of atmost] (emphasize) {Lemma~\ref{lem:zolwe}};
\end{tikzpicture} 
\caption{Summary of reductions used in Theorem~\ref{thm:zolwe}}%
\label{fig:summary}%
\end{figure}

\subsection{First-is-errorless LWE}
\label{sec:firstiserrorless}

We first define a variant of LWE in which the first equation is given without error, and then show in
Lemma~\ref{lem:firsterrorlesshard} that it is still hard. 

\begin{definition}\label{def:firsterrorless}
For integers $n,q \ge 1$ and an error distribution $\phi$ over $\R$, the ``first-is-errorless'' variant of the $\lwe$ problem
is to distinguish between the following two scenarios.
In the first, the first sample is uniform over $\T_q^n \times \T_q^{}$ and the rest are uniform
over $\T_q^n \times \T$. In the second, there is an unknown uniformly distributed $\vecs \in \{0,\ldots,q-1\}^n$,
the first sample we get is from $A_{q,\vec{s},\{0\}}$ (where $\{0\}$ denotes the distribution that is deterministically zero)
and the rest are from $A_{q,\vec{s},\phi}$.
\end{definition}

\begin{lemma}\label{lem:firsterrorlesshard}
For any $n \ge 2$, $m, q \ge 1$, and error distribution $\phi$,
there is an efficient (transformation) reduction from $\lwe_{n-1,m,q,\phi}$ to the first-is-errorless variant of $\lwe_{n,m,q,\phi}$
that reduces the advantage by at most $\sum_p p^{-n}$, with the sum going over all prime factors of $q$.
\end{lemma}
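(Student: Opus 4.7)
The plan is to embed each $(n-1)$-dimensional LWE sample into an $n$-dimensional first-is-errorless sample by introducing a fresh auxiliary coordinate $s_0$ into the secret, with the errorless first output sample constructed precisely to pin down this new coordinate. Concretely, the reduction samples $\vec{u} \in \Z_q^n$ uniformly and aborts (outputting a random guess) unless $\vec{u}$ is \emph{primitive}, i.e., $\gcd(u_1,\ldots,u_n,q) = 1$. When $\vec{u}$ is primitive, an extended-Euclidean/HNF computation produces a matrix $\matM \in \mathrm{GL}_n(\Z_q)$ satisfying $\matM^T \vec{u} = \vec{e}_1$ in time $\poly(n, \log q)$. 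The reduction then picks $s_0 \in \Z_q$ uniformly, emits $(\vec{u}/q,\, s_0/q) \in \T_q^n \times \T_q$ as the errorless first output sample, and transforms each remaining input sample $(\veca'_i, b'_i) \in \T_q^{n-1} \times \T$ into $(\veca_i, b_i) := \bigl((\matM^T)^{-1} \hat{\veca}_i / q,\; b'_i + s^*_i s_0 / q\bigr)$, where $\hat{\veca}_i := (s^*_i,\, q \veca'_i) \in \Z_q^n$ and $s^*_i \in \Z_q$ is fresh uniform randomness.

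Conditioned on not aborting, this is a perfect simulation. Define the implicit new secret $\vecs := \matM(s_0, \vecs')^T \in \Z_q^n$, which is uniform since $\matM$ is invertible and $s_0, \vecs'$ are independent and uniform. The identity $\matM^T \vec{u} = \vec{e}_1$ gives $\inner{\vec{u}, \vecs} \equiv s_0 \pmod q$, so the first output sample is truly errorless with respect to $\vecs$. For each remaining sample, the same identity yields $\inner{\veca_i, \vecs} \equiv \hat{\veca}_i^T (s_0, \vecs')^T / q \equiv s^*_i s_0/q + \inner{\veca'_i, \vecs'} \pmod 1$, so adding $s^*_i s_0 / q$ to $b'_i$ produces exactly $\inner{\veca_i, \vecs} + e_i$; uniformity of $\veca_i$ on $\T_q^n$ follows from that of $\hat{\veca}_i$ on $\Z_q^n$ combined with the invertibility of $\matM^T$. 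The analogous check for the uniform input case is immediate.

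The only loss is the abort event. For each prime $p \mid q$, a uniform $\vec{u}$ lies in $(p\Z_q)^n$ with probability exactly $p^{-n}$, so a union bound gives $\Pr[\text{abort}] \le \sum_{p \mid q} p^{-n}$. Writing $\zeta = \Pr[\text{prim}] \cdot \delta_{\text{prim}} + \Pr[\overline{\text{prim}}] \cdot \delta_{\overline{\text{prim}}}$, where $\delta_{\text{prim}}$ and $\delta_{\overline{\text{prim}}}$ are the distinguisher's signed advantages conditional on $\vec{u}$ being or not being primitive, and observing that the reduction's output on abort is independent of YES/NO, the reduction's advantage equals $\Pr[\text{prim}] \cdot |\delta_{\text{prim}}|$, which by the total-probability identity above together with $|\delta_{\overline{\text{prim}}}| \le 1$ is at least $\zeta - \Pr[\overline{\text{prim}}] \ge \zeta - \sum_{p \mid q} p^{-n}$. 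The main subtlety to flag is that conditioning on primitivity biases the marginal of $\vec{u}$ away from the uniform distribution used in the definition of first-is-errorless LWE; this bias is precisely what the $\sum_{p \mid q} p^{-n}$ term accounts for.
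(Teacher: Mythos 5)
Your proposal is correct and is essentially the paper's own reduction: sampling a uniform vector, aborting unless its gcd with $q$ is $1$ (probability at most $\sum_{p\mid q}p^{-n}$ by a union bound), completing it to a matrix invertible modulo $q$ (your $\matM$ is exactly $(\matU^{-1})^T$ for the paper's $\matU$ whose first column is the sampled vector), introducing a fresh uniform coordinate $s_0$ for the errorless equation, and padding each sample with fresh uniform randomness so the implicit secret $(\matU^{-1})^T(s_0\mid\vecs')$ is uniform. Your advantage accounting, splitting on the primitivity event and explicitly noting the bias of the first vector's marginal, is if anything a slightly more careful rendering of the same loss the paper attributes to the abort/statistical-distance terms.
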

Notice that if $q$ is prime the loss in advantage is at most $q^{-n}$. Alternatively, for any number $q$ we can bound it by
\[
\sum_{k \ge 2} k^{-n} \le 2^{-n} + \int_{2}^\infty t^{-n} \mathrm{d} t \le 2^{-n+2},
\]
which might be good enough when $n$ is large.
\begin{proof}
The reduction starts by choosing a
  vector $\veca'$ uniformly at random from
  $\{0,\ldots,q-1\}^n$. Let $r$ be the greatest common divisor of the coordinates of $\veca'$. 
	If it is not coprime to $q$, we abort. The probability that this
  happens is at most 
\[
 \sum_{p~\text{prime}, \ p|q} p^{-n}.
\]
Assuming we do not abort, we proceed by finding a matrix $\matU \in \Z^{n \times n}$ that is invertible modulo $q$ and
whose leftmost column is $\veca'$.
Such a matrix exists, and can be found efficiently. For instance, using the extended GCD algorithm, we find an $n\times n$ unimodular matrix
$\matR$ such that $\matR \veca' = (r,0,\ldots,0)^T$. Then $\matR^{-1} \cdot \mathrm{diag}(r,1,\ldots,1)$ is the desired matrix.
We also pick a uniform element $s_0 \in \{0,\ldots,q-1\}$. The reduction now proceeds as follows. The first sample it outputs is
$(\veca'/q, s_0/q)$. The remaining samples are produced by taking a sample $(\veca,b)$ from the given oracle, picking
a fresh uniformly random $d \in \T_q$, and outputting $(\matU(d|\veca),b+(s_0 \cdot d))$ with the vertical bar denoting concatenation.
It is easy to verify correctness: given uniform samples, the reduction outputs uniform samples
(with the first sample's $b$ component uniform over $\T_q$), up to statistical distance $2^{-n+1}$;
and given samples from $A_{q,\vec{s},\phi}$, the reduction outputs one sample from $A_{q,\vecs',\{0\}}$
and the remaining samples from $A_{q,\vecs',\phi}$, up to statistical distance $2^{-n+1}$,
where $\vecs' = (\matU^{-1})^T (s_0|\vecs) \bmod q$.
This proves correctness since $\matU$, being invertible modulo $q$, induces a bijection on $\Z_q^n$, and so $\vecs'$ is uniform in $\{0,\ldots,q-1\}^n$.
\end{proof}

\subsection{Extended LWE}
\label{sec:extlwe}

We next define the intermediate problem $\extlwe$. (This definition is
of an easier problem than the one considered in previous
work~\cite{DBLP:conf/pkc/Alperin-SheriffP12}, which makes our hardness result stronger.)

\begin{definition}\label{def:extlwe}
For $n,m,q,t \ge 1$, $\calZ \subseteq \Z^m$, and a distribution $\chi$ over $\frac{1}{q}\Z^m$, the $\extlwe^t_{n,m,q,\chi,\calZ}$ problem is as follows.
The algorithm gets to choose $\vecz \in \calZ$ and then receives a tuple
\[
(\matA,(\vecb_i)_{i \in [t]},(\inner{\vece_i,\vecz})_{i \in [t]}) \in \T_q^{n \times m} \times (\T_q^m)^t \times ({\textstyle \frac 1 q} \Z)^t.
\]
Its goal is to distinguish between the following two cases.
In the first, $\matA \in \T_q^{n \times m}$ is chosen uniformly,
$\vece_i \in {\textstyle \frac 1 q}\Z^m$ are chosen from $\chi$,
and $\vecb_i = \matA^T \vecs_i + \vece_i \bmod 1$ where $\vecs_i \in \{0,\ldots,q-1\}^n$ are
chosen uniformly.
The second case is identical, except that the $\vecb_i$ are chosen uniformly in $\T_q^m$ independently of everything else. 
\end{definition}

When $t=1$, we omit the superscript $t$. Also, when $\chi$ is $D_{q^{-1}\Z^{m},\alpha}$ for some $\alpha>0$, we replace the
subscript $\chi$ by $\alpha$.
We note that a discrete version of $\lwe$ can be defined as a special case of $\extlwe$ by setting $\cZ = \{ 0^m \}$.
We next define a measure of quality of sets $\calZ$.

\begin{definition}\label{def:qualityofhintset}
For a real $\xi>0$ and a set $\calZ \subseteq \Z^m$ we say that $\calZ$ is of \emph{quality} $\xi$
if given any $\vecz \in \calZ$, we can efficiently find a unimodular matrix
$\matU \in \Z^{m \times m}$ such that if $\matU'\in \Z^{m \times (m-1)}$ is the matrix obtained
from $\matU$ by removing its leftmost column then all of the columns of $\matU'$ are orthogonal to $\vecz$
and its largest singular value is at most~$\xi$.
\end{definition}

The idea in this definition is that the columns of $\matU'$ 
form a basis of the lattice of integer points that are orthogonal
to $\vecz$, i.e., the lattice $\{\vecb \in \Z^m:
  \inner{\vecb,\vecz} = 0\}$. The quality measures
	how ``short'' we can make this basis.

\begin{claim}\label{clm:qualityofzeroone}
The set $\calZ=\{0,1\}^m$ is of quality $2$.
\end{claim}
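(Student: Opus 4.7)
The plan is to build the unimodular matrix $\matU$ explicitly. Given $\vecz \in \{0,1\}^m$, let $S = \{i_1 < i_2 < \cdots < i_k\}$ be its support. The case $\vecz = \veczero$ is trivial: take $\matU = \matI$, and every column is orthogonal to $\vecz$ with singular value~$1$. Otherwise, I would define $\matU$ by declaring its leftmost column to be $\vece_{i_1}$, and its remaining $m-1$ columns to be $\vece_j$ for each $j \notin S$ together with the consecutive differences $\vece_{i_l} - \vece_{i_{l+1}}$ for $l = 1, \ldots, k-1$. All of these can clearly be written down in polynomial time given $\vecz$.

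Next I would verify the two defining properties of quality. Orthogonality to $\vecz$ for the columns of $\matU'$ is immediate: $\inner{\vece_j,\vecz}=z_j=0$ for $j \notin S$, and $\inner{\vece_{i_l}-\vece_{i_{l+1}},\vecz}=z_{i_l}-z_{i_{l+1}}=1-1=0$. For unimodularity, observe that $\vece_{i_1}$ is a column of $\matU$, and by telescoping with the consecutive differences, every $\vece_{i_l}$ lies in the integer span of the columns; combined with the $\vece_j$ for $j \notin S$, this shows the columns of $\matU$ generate $\Z^m$, so $\det\matU = \pm 1$.

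The main computation is the singular-value bound. Because the columns $\vece_j$ ($j \notin S$) are supported on coordinates disjoint from those touched by the differences, the Gram matrix $\matU'^{T}\matU'$ is block diagonal: an $(m-k)\times(m-k)$ identity block from the $\vece_j$ columns, and a $(k-1)\times(k-1)$ tridiagonal block $\matT$ with diagonal entries $2$ and off-diagonal entries $-1$ coming from the differences. The eigenvalues of $\matT$ are the classical values $2 - 2\cos(j\pi/k)$ for $j=1,\ldots,k-1$, whose maximum is $2 + 2\cos(\pi/k) < 4$. Therefore the largest singular value of $\matU'$ is $\sqrt{\max(1,\,2+2\cos(\pi/k))} < 2$, establishing quality $\xi = 2$.

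The only non-routine ingredient is the eigenvalue formula for the tridiagonal block $\matT$, which is a standard closed-form result (it is the discrete Laplacian with Dirichlet boundary conditions), so I do not anticipate any real obstacle; the rest of the argument is a direct verification of definitions.
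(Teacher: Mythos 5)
Your construction is correct and is essentially the paper's: after permuting coordinates so that the support is an initial segment, your columns ($\vece_{i_1}$, the consecutive differences supported on $S$, and the standard basis vectors off $S$) are exactly the columns of the upper bidiagonal matrix the paper uses, up to sign and ordering, and your unimodularity and orthogonality checks match. The only divergence is the norm estimate: the paper dispatches it with a one-line triangle inequality (writing $\matU$ as the identity plus an off-diagonal matrix, each of operator norm at most $1$), whereas you diagonalize the tridiagonal Gram block exactly, which yields the marginally sharper bound $\sqrt{2+2\cos(\pi/k)}<2$ but is more machinery than the claim requires.
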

\begin{proof}
Let $\vecz \in \calZ$ and assume without loss of generality that its first $k \ge 1$ coordinates are $1$
and the remaining $m-k$ are $0$. Then consider the upper bidiagonal matrix $\matU$ whose diagonal is all $1$s and whose
diagonal above the main diagonal is $(-1,\ldots,-1,0,\ldots,0)$ with $-1$ appearing $k-1$ times. The matrix is clearly
unimodular and all the columns except the first one are orthogonal to $\vecz$. Moreover, by the triangle inequality,
we can bound the operator norm of $\matU$ by the sum of that of the diagonal $1$ matrix and the off-diagonal matrix, both
of which clearly have norm at most $1$.
\end{proof}

\onote{todo: add more quality bounds such as vectors of bounded norm}

\begin{lemma}\label{lem:redtoextlwe}
Let $\calZ \subseteq \Z^m$ be of quality $\xi>0$. Then for any $n,q \ge 1$, $\eps \in (0,1/2)$, and $\alpha,r  \ge (\ln(2m(1+1/\eps))/\pi)^{1/2}/q$,
there is a (transformation) reduction from the first-is-errorless variant of $\lwe_{n,m,q,\alpha}$ to
$\extlwe_{n,m,q,(\alpha^2 \xi^2 + r^2)^{1/2},\calZ}$ that reduces the advantage by at most $33 \eps/2$. 
\end{lemma}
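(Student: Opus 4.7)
The plan is to construct a transformation reduction that, given the $\extlwe$ adversary's choice $\vec z\in\calZ$, first invokes the quality-$\xi$ property to obtain a unimodular $\matU=[\vec u_1\mid\matU']\in\Z^{m\times m}$ with $\matU'^T\vec z=\vec 0$ and $\|\matU'\|_{\mathrm{op}}\le\xi$; note in particular that $\matU^T\vec z=(\vec u_1^T\vec z)\,\vec{e}_1$ is supported on the first coordinate. The reduction then queries the first-is-errorless oracle for $m$ samples, packs them as $\matA'\in\T_q^{n\times m}$ and $\vec b'\in\T^m$ (so that in the \yes case $\vec b'=(\matA')^T\vec s+\vec e'\bmod 1$ with $\vec e'=(0,e_2,\ldots,e_m)^T$ and $e_i\sim D_\alpha$), and sets $\matA:=\matA'\matU^T$, uniform in $\T_q^{n\times m}$ since $\matU$ is unimodular. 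Finally, it samples a continuous Gaussian $\vec g\sim D_\matB$ with $\matB\matB^T=\alpha^2(\xi^2 I_m-\matU'\matU'^T)$ (PSD by the quality bound), then a discrete Gaussian $\vec d\sim D_{q^{-1}\Z^m-\matU\vec b'-\vec g,\,r}$, and outputs $\vec b:=\matU\vec b'+\vec g+\vec d\bmod 1\in\T_q^m$ along with the hint $h:=\langle\vec g+\vec d,\vec z\rangle$.

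The \yes-case analysis will hinge on the identity $\vec b=\matA^T\vec s+\vec e$ for $\vec e:=\matU\vec e'+\vec g+\vec d$. Because the parameter matrices of $\matU\vec e'$ and $\vec g$ satisfy $\alpha^2\matU'\matU'^T+\matB\matB^T=\alpha^2\xi^2 I_m$, the continuous piece $\matU\vec e'+\vec g$ is a spherical $D_{\alpha\xi}$. Lemma~\ref{lem:peikertdiscretegauss}, applied with $\Lambda=q^{-1}\Z^m$ (whose smoothing parameter is at most $r$ by our hypothesis on $r$ together with Lemma~\ref{lem:boundonsmoothing}), then places $\vec e$ within $8\eps$ statistical distance of $D_{q^{-1}\Z^m,\sigma}$ for $\sigma=\sqrt{\alpha^2\xi^2+r^2}$. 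Moreover, since $\matU^T\vec z$ is supported on the first coordinate and $e_1'=0$, we have $\langle\matU\vec e',\vec z\rangle=0$, so $h=\langle\vec e,\vec z\rangle$ exactly, matching the hint in the ideal \yes distribution.

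In the \no case, the goal will be to show that $(\matA,\vec b,h)$ is close to the ideal $\extlwe$ \no distribution—$\matA$ uniform, $\vec b$ uniform on $\T_q^m$ independent of $h$, and $h$ distributed as $\langle\vec e,\vec z\rangle$ for a fresh $\vec e\sim D_{q^{-1}\Z^m,\sigma}$. The design ensures that in the $\vec z$-direction the added noise $\vec g$ contributes variance $\alpha^2\xi^2$ (because $\matU'\matU'^T\vec z=\vec 0$) and $\vec d$ contributes $r^2$, matching the $\sigma$-scale of the ideal hint. Simultaneously, the uniformity of $\vec b'_{[2..m]}$ on $\T^{m-1}$ renders $\matU'\vec b'_{[2..m]}\bmod 1$ uniform on the codimension-one subtorus annihilating $\vec z$; combined with the smoothing provided by $\vec g+\vec d$ (using $r\ge\eta_\eps(q^{-1}\Z^m)$ and invoking Lemma~\ref{lem:smoothing}), this will push $\vec b$ close to uniform on $\T_q^m$ while keeping it approximately independent of $h$.

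The main obstacle will be making this \no-case decoupling rigorous, since $\vec b$ and $h$ arise from the same $\vec g+\vec d$. I plan to separate them by invoking smoothing on two different sublattices—the codimension-one sublattice $\{\vec w\in q^{-1}\Z^m:\langle\vec w,\vec z\rangle=0\}$ (to handle the $\vec b$-marginal orthogonal to the $\vec z$-projection) and $q^{-1}\Z$ itself (to handle the $h$-marginal)—applying Lemmas~\ref{lem:smoothing} and~\ref{lem:smoothingcontinuous} in tandem with Lemma~\ref{lem:peikertdiscretegauss}. Accumulating the $O(\eps)$ error terms from these steps, together with the $8\eps$ from the \yes-case application of Lemma~\ref{lem:peikertdiscretegauss}, should yield the claimed $33\eps/2$ loss in distinguishing advantage.
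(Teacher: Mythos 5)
Your construction and your \yes-case analysis coincide with the paper's proof: same unimodular $\matU$ from the quality property, same skew continuous Gaussian with covariance completing $\alpha^2\matU'\matU'^T$ to $\alpha^2\xi^2\matI$, same discrete rounding via a sample from $D_{q^{-1}\Z^m-\cdot,\,r}$, and the same application of Lemma~\ref{lem:peikertdiscretegauss} giving the $8\eps$ loss, with the hint exact because $\inner{\matU\vece',\vecz}=0$. The genuine gap is the \no\ case, which you leave as a plan rather than a proof, and the plan as stated faces a concrete obstruction: your $\vecb$ and your hint $h$ are both functions of the \emph{same} vector $\vecg+\vecd$, and conditioning on $h$ biases $\vecg+\vecd$ precisely in the $\vecz$-direction, which is exactly the transverse direction you must smooth to make $\vecb$ uniform over $\T_q^m$. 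So invoking smoothing ``in tandem'' on the sublattice orthogonal to $\vecz$ and on $q^{-1}\Z$ does not by itself yield joint closeness of $(\vecb,h)$ to (uniform)\,$\times$\,(ideal hint); you would need a careful conditional argument, plus a separate comparison of the marginal of $\inner{\vecg+\vecd,\vecz}$ with the ideal hint $\inner{\vece,\vecz}$ for $\vece\sim D_{q^{-1}\Z^m,(\alpha^2\xi^2+r^2)^{1/2}}$ (note $\vecg+\vecd$ alone is \emph{not} close to that discrete Gaussian, since its covariance is deficient in the column span of $\matU'$; only its $\vecz$-projection has the right width). Each of these steps would cost additional smoothing losses, and it is not clear you would land within the claimed $33\eps/2$.

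The missing idea is the paper's much simpler treatment of the uniform case: since $\alpha\ge\eta_{\eps/m}(q^{-1}\Z)$, replace the uniform input $\vecb\in\T_q\times\T^{m-1}$ by $\vece'+\vece$ where $\vece'$ is uniform over $\T_q^m$ and $\vece$ is a fresh noise vector with errorless first coordinate and $D_\alpha$ elsewhere, at statistical cost $\eps/2$ (Lemma~\ref{lem:smoothingcontinuous}, coordinatewise). Because $\matU\vece'\in\T_q^m$, the coset defining $\vecd$ (your notation) is unchanged, so the entire \yes-case computation applies verbatim to $\matU\vece+\vecg+\vecd$: the hint is automatically within $8\eps$ of the ideal one, and the leftover term $\matU\vece'$, which is uniform on $\T_q^m$ (unimodularity of $\matU$) and independent of everything else, makes the second component exactly uniform and independent. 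This removes any need to decouple $\vecb$ from $h$, and yields the advantage loss $8\eps+8\eps+\eps/2=33\eps/2$ exactly as claimed. I recommend you replace your sketched decoupling argument by this rewriting of the uniform distribution; as it stands, your proposal does not establish the \no\ case.
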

\begin{proof}
We first describe the reduction.
Assume we are asked to provide samples for some $\vecz \in \calZ$.
We compute a unimodular $\matU \in \Z^{m \times m}$ for $\vecz$ as in Definition~\ref{def:qualityofhintset}, and let $\matU' \in \Z^{m \times (m-1)}$ be
the matrix formed by removing the first column of $\matU$.
We then take $m$ samples from the given distribution, resulting in $(\matA, \vecb) \in \T_q^{n \times m} \times (\T_q^{} \times \T^{m-1})$.
We also sample a vector $\vecf$ from the $m$-dimensional continuous Gaussian distribution
$D_{\alpha(\xi^2 \matI - \matU' \matU'^T)^{1/2}}$,
 which is well defined since $\xi^2 \matI - \matU' \matU'^T$ is a positive semidefinite matrix
by our assumption on $\matU$.
The output of the reduction is the tuple
\begin{equation}\label{eq:outputofextlwereduction}
(\matA'= \matA \matU^T,
  \vecb' + \vecc,
	\inner{\vecz , \vecf + \vecc})
	\in \T_q^{n \times m} \times \T_q^m \times {\textstyle{\frac{1}{q}}}\Z,
\end{equation}
where $\vecb'=\matU \vecb + \vecf$, and $\vecc$ is chosen from the discrete
Gaussian distribution $D_{q^{-1}\Z^m - \vecb',r}$ (using Lemma~\ref{lem:gpv}). 

We now prove the correctness of the reduction. Consider first the case that we get valid LWE equations, i.e.,
$\matA$ is uniform in~$\T_q^{n \times m}$ and $\vecb = \matA^T \vecs + \vece \in \T^m$
where $\vecs \in \{0,\ldots,q-1\}^n$ is uniformly chosen, the first coordinate of $\vece \in \R^m$ is~$0$, 
and the remaining $m-1$ coordinates are chosen from~$D_\alpha$.
Since $\matU$ is unimodular, $\matA' = \matA \matU^T$ is uniformly distributed in $\T_q^{n \times m}$ as required.
From now on we condition on an arbitrary $\matA'$ and analyze the distribution of the remaining two
components of~\eqref{eq:outputofextlwereduction}.
Next,
\[
\vecb' =
\matU \vecb + \vecf =
\matA'^T \vecs + \matU \vece + \vecf.
\]
Since $\matU \vece$ is distributed as a continuous Gaussian $D_{\alpha \matU'}$,
the vector~$\matU \vece + \vecf$ is a distributed as a \emph{spherical} continuous 
Gaussian~$D_{\alpha \xi}$.
Moreover, since $\matA'^T \vecs \in \T_q^m$, the coset $q^{-1}\Z^m - \vecb'$ is identical to
$q^{-1}\Z^m - (\matU \vece + \vecf)$, so we can see $\vecc$ as being chosen from
$D_{q^{-1}\Z^m - (\matU \vece + \vecf),r}$. Therefore, by Lemma~\ref{lem:peikertdiscretegauss}
and using that $r \ge \eta_\eps(q^{-1}\Z^m)$ 
by Lemma~\ref{lem:boundonsmoothing},
the distribution of $\matU \vece + \vecf + \vecc$ is within statistical distance $8 \eps$ 
of $D_{q^{-1}\Z^m,(\alpha^2 \xi^2 + r^2)^{1/2}}$. This shows that the second
component in~\eqref{eq:outputofextlwereduction} is also distributed correctly.
Finally, for the third component, by our assumption on $\matU$ and the fact that the
first coordinate of $\vece$ is zero,
\[
\inner{\vecz, \vecf + \vecc} = \inner{\vecz, \matU\vece + \vecf + \vecc},
\]
and so the third component gives the inner product of the noise with $\vecz$, as desired.

We now consider the case where the input is uniform, i.e.,
that $\matA$ is uniform in~$\T_q^{n\times m}$
and $\vecb$ is independent and uniform in $\T_q \times \T^{m-1}$.
We first observe that by Lemma~\ref{lem:smoothingcontinuous}, since $\alpha \ge \eta_{\eps/m}(q^{-1}\Z)$ (by Lemma~\ref{lem:boundonsmoothing}),
the distribution of~$(\matA,\vecb)$ is within statistical distance $\eps/2$
of the distribution of $(\matA, \vece' + \vece)$ where $\vece'$ is
chosen uniformly in~$\T_q^m$, the first coordinate of $\vece$ is zero,
and its remaining $m-1$ coordinates are chosen independently from~$D_\alpha$.
So from now on assume our input is $(\matA,\vece'+\vece)$.
The first component of~\eqref{eq:outputofextlwereduction} is uniform
in $\T_q^{n \times m}$ as before, and moreover, it is clearly independent
of the other two. Moreover, since $\vecb' = \matU \vece' + \matU \vece + \vecf$ and $\matU \vece' \in \T_q^m$,
the coset $q^{-1}\Z^m - \vecb'$ is identical to
$q^{-1}\Z^m - (\matU \vece + \vecf)$, and so $\vecc$ is distributed
identically to the case of a valid LWE equation, and in particular
is independent of $\vece'$. This establishes that the third component
of~\eqref{eq:outputofextlwereduction} is correctly distributed;
moreover, since $\vece'$ is independent of the first and third components,
and $\matU \vece'$ is uniform in $\T_q^m$ (since $\matU$ is unimodular),
we get that the second component is uniform and independent of the other two,
as desired.
\end{proof}

We end this section by stating the standard reduction to the multi-secret ($t \ge 1$) case of extended LWE. 


\onote{removed $\calD$ from the following lemma since we never  use any distribution on secrets but the uniform one}

\begin{lemma}\label{lem:multiinstanceextlwe}
Let $n, m, q, \chi, \cZ$ be as in Definition~\ref{def:extlwe} with $\chi$ efficiently sampleable, and let $t \ge 1$ be an integer. 
Then there is an efficient (transformation) reduction from $\extlwe_{n,m,q,\chi,\cZ}$ to $\extlwe^t_{n,m,q,\chi,\cZ}$ that reduces the advantage by 
a factor of~$t$.
\end{lemma}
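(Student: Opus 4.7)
The plan is a standard hybrid argument over the $t$ secret/error pairs. I will interpolate between the ``all LWE'' and ``all uniform'' cases of $\extlwe^t_{n,m,q,\chi,\cZ}$ one coordinate at a time, and then build a single-instance reduction that embeds its challenge at a uniformly random coordinate, so that the two oracle cases correspond to two consecutive hybrids in the chain.

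Define, for each $i \in \{0, 1, \ldots, t\}$, the hybrid distribution $H_i$ as follows. Sample $\matA \getsr \T_q^{n \times m}$, and let the distinguisher's chosen $\vecz \in \cZ$ be fixed. For every $j \in [t]$ independently, draw $\vece_j \getsr \chi$ and reveal the hint $\kappa_j := \inner{\vece_j, \vecz}$. For $j \le i$, set $\vecb_j \getsr \T_q^m$ uniformly; for $j > i$, draw $\vecs_j \getsr \{0,\ldots,q-1\}^n$ and set $\vecb_j := \matA^T \vecs_j + \vece_j$. Then $H_0$ coincides with the first case of $\extlwe^t_{n,m,q,\chi,\cZ}$ and $H_t$ coincides with the second, so any adversary $\cA$ with advantage $\zeta$ satisfies $|\Pr[\cA(H_0) = 1] - \Pr[\cA(H_t) = 1]| = \zeta$.

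The reduction $\cB$ for $\extlwe_{n,m,q,\chi,\cZ}$ works as follows. On its chosen hint $\vecz \in \cZ$, it forwards the same $\vecz$ to its own oracle, receives $(\matA, \vecb^*, \kappa^*)$, and picks $i^* \getsr [t]$. Reusing $\matA$, it generates the other $t-1$ coordinates internally: for $j < i^*$ it samples $\vecb_j \getsr \T_q^m$, $\vece_j \getsr \chi$ and sets $\kappa_j := \inner{\vece_j, \vecz}$; for $j > i^*$ it samples $\vecs_j \getsr \{0,\ldots,q-1\}^n$ and $\vece_j \getsr \chi$, and sets $\vecb_j := \matA^T \vecs_j + \vece_j$ together with $\kappa_j := \inner{\vece_j, \vecz}$. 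It then slots $(\vecb^*, \kappa^*)$ into position $i^*$, feeds the assembled tuple to $\cA$, and echoes $\cA$'s verdict. Efficient sampleability of $\chi$ is exactly what makes $\cB$ polynomial time, which is why this hypothesis appears in the statement.

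The key observation is that if $\cB$'s oracle is in the ``LWE'' case, the simulated tuple is distributed exactly as $H_{i^* - 1}$, while if it is in the ``uniform'' case, the simulated tuple is distributed exactly as $H_{i^*}$. Writing $p_i := \Pr[\cA \to 1 \text{ on } H_i]$, averaging over $i^*$ yields a telescoping sum
\[
\adv[\cB] \;=\; \Big| \tfrac{1}{t} \sum_{i=1}^{t} (p_{i-1} - p_i) \Big| \;=\; \tfrac{|p_0 - p_t|}{t} \;=\; \zeta/t,
\]
as required. I do not foresee a real obstacle here: the only subtlety is verifying that the hint coordinate $\inner{\vece_j, \vecz}$ has the same distribution in both cases of $\extlwe$, which is immediate from Definition~\ref{def:extlwe}, where $\vece_j$ is always drawn from $\chi$ regardless of whether $\vecb_j$ is an honest LWE sample or uniform. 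The shared matrix $\matA$ is handled automatically because $\cB$ simply reuses the $\matA$ supplied by its oracle across all $t$ positions.
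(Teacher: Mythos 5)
Your proof is correct and is essentially identical to the paper's own argument: the same hybrid chain over the $t$ coordinates, the same embedding of the single-instance challenge at a uniformly random position $i^*$ with the remaining coordinates simulated using the sampleability of $\chi$, and the same telescoping average giving a loss factor of exactly $t$. The only cosmetic difference is that your hybrids run from all-LWE to all-uniform while the paper indexes them in the opposite order, which has no effect on the argument.
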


The proof is by a standard hybrid argument. We bring it here for the sake of completeness. We note that the distribution of the secret vector $\vc{s}$ needs to be sampleable but otherwise it plays no role in the proof. The lemma therefore naturally extends to any (sampleable) distribution of $\vc{s}$.

\begin{proof}
Let $\cA$ be an algorithm for $\extlwe^t_{n,m,q,\chi,\cZ}$, let $\vc{z}$ be the vector output by $\cA$ in the first step (note that this is a random variable) and let $H_i$ denote the distribution
\[
\left(\mx{A}, \{\vc{b}_1, \ldots, \vc{b}_i, \vc{u}_{i+1}, \dots, \vc{u}_t\}, \vc{z}, \{\langle\vc{z}, \vc{e}_i\rangle\}_{i\in[t]} \right)~,\]
where~$\vc{u}_{i+1}, \dots, \vc{u}_t$ are sampled independently and uniformly in~$\T_q^m$.
Then by definition $\adv[\cA] = \abs{\Pr[\cA(H_0)]- \Pr[\cA(H_t)]}$.

We now describe an algorithm $\cB$ for $\extlwe_{n,m,q,\chi,\cZ}$: First, $\cB$ runs $\cA$ to obtain $\vc{z}$ and sends it to the challenger as its own $\vc{z}$. Then, given an input $(\mx{A}, \vc{d}, \vc{z}, y)$ for $\extlwe_{n,m,q,\chi,\cZ}$, the distinguisher $\cB$ samples $i^* \getsr [t]$, and in addition $\vc{s}_1, \ldots, \vc{s}_{i^*-1}\getsr \bbZ^n_q$, $\vc{e}_1, \ldots, \vc{e}_{i^*-1}, \vc{e}_{i^*+1}, \ldots, \vc{e}_t \getsr \chi^m$\onote{replaced $\phi$ with $\chi$}, $\vc{u}_{i^*+1}, \ldots, \vc{u}_t \getsr \bbT_q^m$. It sets $\vc{b}_i = \mx{A}^T\cdot\vc{s}_i+\vc{e}_i\pmod{1}$, and sends the following to $\cA$:
\[
\left(\mx{A}, \{ \vc{b}_1, \ldots, \vc{b}_{i^*-1}, \vc{d}, \vc{u}_{i^*+1}, \ldots, \vc{u}_t\}, \vc{z}, \{\langle\vc{z},\vc{e}_1\rangle, \ldots,\langle\vc{z}, \vc{e}_{i^*-1}\rangle, y, \langle\vc{z}, \vc{e}_{i^*+1}\rangle, \ldots, \langle\vc{z}, \vc{e}_t\rangle  \} \right)~.
\]
Finally, $\cB$ outputs the same output as $\cA$ did.

Note that when the input to $\cB$ is distributed as $P_0=(\mx{A}, \vc{b}, \vc{z}, \vc{z}^T \cdot \vc{e})$ with~$\vc{b} = \mx{A}^T\cdot\vc{s}+\vc{e}\pmod{1}$, then $\cB$ feeds $\cA$ with exactly the distribution $H_{i^*}$. On the other hand, if the input to $\cB$ is $P_1 = (\mx{A}, \vc{u}, \vc{z}, \vc{z}^T \cdot \vc{e})$ with~$\vc{u} \getsr \bbT_q^m$, then $\cB$ feeds $\cA$ with~$H_{i^*-1}$.

Since $i^*$ is uniform in $[t]$, we get that
\begin{eqnarray*}
t \, \adv[\cB] & = & t \, \abs{\Pr[\cB(P_0)] - \Pr[\cB(P_1)]}\\
  & = &
	\abs[\bigg]{{\sum_{i^*\in[t]} \Pr[\cA(H_{i^*})] - \sum_{i^*\in[t]} \Pr[\cA(H_{i^*-1})]}} \\
	& = &
	\abs{\Pr[\cA(H_{t})]- \Pr[\cA(H_{0})] } \\
	& = & \adv[\cA]~,
\end{eqnarray*}
and the result follows.
\end{proof}


\subsection{Reducing to binary secret}\label{sec:theshortsecretreduction}

\begin{lemma}\label{lem:zolwe}
Let $k,n,m,q \in \bbN$, $\epsilon \in (0,1/2)$, and $\delta,\alpha,\beta,\gamma>0$ be such that $n \ge k \log_2 q + 2 \log_2 (1/\delta)$, $\beta \ge \sqrt{2 \ln(2n(1+1/\eps))/\pi}/q$, $\alpha = \sqrt{2n} \beta$, $\gamma = \sqrt{n}\beta$. 
Then there exist three efficient (transformation) reductions to $\zolwe_{n,m,q,\le\alpha}$ from 
$\extlwe^m_{k,n,q,\beta,\{0,1\}^n}$, $\lwe_{k,m,q,\gamma}$, and  $\extlwe^m_{k,n,q,\beta,\{0^n\}}$,
such that if $\cB_1$, $\cB_2$, and $\cB_3$ are the algorithms obtained by applying these reductions (respectively) 
to an algorithm $\cA$, then
\[
\adv[\cA] \le \adv[\cB_1]+\adv[\cB_2]+\adv[\cB_3]+4m\epsilon + \delta \,.
\]
\end{lemma}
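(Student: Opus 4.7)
The plan is a hybrid argument through four distributions $P_0, P_1, P_2, P_3$ of the samples seen by $\cA$, where $P_0$ is the real $\zolwe$ distribution and $P_3$ is the uniform one, matching the three transitions to the three stated reductions (plus a small statistical loss $4m\epsilon+\delta$). Explicitly, $P_0$ consists of samples $(\mathbf{a}^*_i, \inner{\mathbf{a}^*_i, \mathbf{z}} + e^*_i)$ with $\mathbf{a}^*_i \in \T_q^n$ uniform, $\mathbf{z} \in \{0,1\}^n$ uniform, and $e^*_i \sim D_\alpha$. In $P_1$ I replace $\mathbf{a}^*_i$ by the LWE-like vector $\mathbf{A}^T \mathbf{s}_i + \mathbf{e}_i$ for a shared uniform $\mathbf{A} \in \T_q^{k\times n}$, fresh $\mathbf{s}_i \in \Z_q^k$, and $\mathbf{e}_i \sim D_{q^{-1}\Z^n, \beta}$, so that $b^*_i$ expands as $\inner{\mathbf{s}_i, \mathbf{A}\mathbf{z}} + \inner{\mathbf{e}_i, \mathbf{z}} + e^*_i$. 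In $P_2$ I keep the LWE-shaped $\mathbf{a}^*_i$ but replace $b^*_i$ by a fresh uniform element of $\T$; in $P_3$ both $\mathbf{a}^*_i$ and $b^*_i$ are uniform.

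The reduction $\cB_1$ will handle $P_0 \leftrightarrow P_1$: declare a fresh uniform $\mathbf{z}\in\{0,1\}^n$ to the $\extlwe^m_{k,n,q,\beta,\{0,1\}^n}$ oracle, receive $(\mathbf{A}, (\mathbf{b}_i), (y_i))$, and feed $\cA$ the samples $(\mathbf{b}_i, \inner{\mathbf{b}_i, \mathbf{z}} + e^*_i)$ with fresh $e^*_i \sim D_\alpha$. The uniform case of $\extlwe$ reproduces $P_0$ exactly and the real case reproduces $P_1$ exactly, giving $|p_0 - p_1| \le \adv[\cB_1]$. Symmetrically, $\cB_3$ will handle $P_2 \leftrightarrow P_3$ by declaring $\mathbf{z} = \mathbf{0}$ to the $\extlwe^m_{k,n,q,\beta,\{0^n\}}$ oracle and feeding $\cA$ the samples $(\mathbf{b}_i, v_i)$ with fresh uniform $v_i \in \T$, giving $|p_2 - p_3| \le \adv[\cB_3]$.

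The middle transition $P_1 \leftrightarrow P_2$ will be carried by $\cB_2$ reducing to $\lwe_{k,m,q,\gamma}$, and this is where the leftover hash lemma and Gaussian smoothing enter. Given an LWE sample $(\mathbf{a}_i, c_i)$ I interpret it in the \emph{swapped} form: $\mathbf{s}_i := q\mathbf{a}_i \in \Z_q^k$ becomes a per-sample secret and the hidden LWE secret $\mathbf{s}^\dagger$ plays the role of a uniform $\mathbf{w} \in \T_q^k$, so $c_i = \inner{\mathbf{s}_i, \mathbf{w}} + e_i$ with $e_i \sim D_\gamma$ in the real case, or $c_i$ is uniform in the uniform case. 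I then sample fresh $\mathbf{A}, \mathbf{e}_i, \mathbf{z}$, and $e^{**}_i \sim D_\gamma$, and output $\bigl(\mathbf{A}^T \mathbf{s}_i + \mathbf{e}_i,\; c_i + \inner{\mathbf{e}_i, \mathbf{z}} + e^{**}_i\bigr)$. The uniform case yields exactly $P_2$. In the real case $b^*_i = \inner{\mathbf{s}_i, \mathbf{w}} + \inner{\mathbf{e}_i, \mathbf{z}} + (e_i + e^{**}_i)$, where the Gaussian sum $e_i + e^{**}_i$ has parameter $\sqrt{2}\gamma = \alpha$. I will then swap the uniform $\mathbf{w}$ for $\mathbf{A}\mathbf{z}$ via the leftover hash lemma (costing statistical distance $\delta$ thanks to $n \ge k\log_2 q + 2\log_2(1/\delta)$) and invoke Lemma~\ref{lem:smoothip} once per $i$ to smooth $\inner{\mathbf{e}_i, \mathbf{z}} + e^{**}_i$ into a continuous Gaussian of parameter $\le \alpha$ (costing $4\epsilon$ each, for $4m\epsilon$ total). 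The hypotheses $\beta \ge \sqrt{2\ln(2n(1+1/\epsilon))/\pi}/q$, $\gamma = \sqrt{n}\beta$, $\|\mathbf{z}\|\le\sqrt{n}$, and $\alpha = \sqrt{2n}\beta$ are exactly what is needed to make the smoothing-parameter condition of Lemma~\ref{lem:smoothip} hold and its output parameter stay below $\alpha$. This yields $|p_1 - p_2| \le \adv[\cB_2] + 4m\epsilon + \delta$.

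The delicate point will be the LHL step inside $\cB_2$: naively one worries that the terms $\inner{\mathbf{e}_i, \mathbf{z}}$ leak enough about $\mathbf{z}$ to destroy the conditional min-entropy needed by Lemma~\ref{lem:lhl}. The resolution is that $\cA$ never sees $\mathbf{z}$ directly and that the per-sample Gaussian smoothing replaces each discrete combination $\inner{\mathbf{e}_i, \mathbf{z}} + e^{**}_i$ by a Gaussian whose marginal is essentially independent of $\mathbf{z}$, so the only remaining $\mathbf{z}$-dependence in the view is through $\mathbf{A}\mathbf{z}$, and the basic LHL on the uniform $\mathbf{A}$ applies cleanly. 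Combining the three bounds gives the claimed inequality $\adv[\cA] \le \adv[\cB_1] + \adv[\cB_2] + \adv[\cB_3] + 4m\epsilon + \delta$.
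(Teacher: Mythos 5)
Your outer hybrids $P_0\leftrightarrow P_1$ and $P_2\leftrightarrow P_3$ are fine (and in fact your $\cB_1$ never uses the extLWE hint, which is already a warning sign), but the middle transition $P_1\leftrightarrow P_2$ has a genuine gap. In both $P_1$ and the real case of your $\cB_2$ simulation, the view consists of pairs $\bigl(\mathbf{A}^T\mathbf{s}_i+\mathbf{e}_i,\ \langle\mathbf{s}_i,\mathbf{t}\rangle+\langle\mathbf{e}_i,\mathbf{z}\rangle+\text{noise}_i\bigr)$ with $\mathbf{t}\in\{\mathbf{A}\mathbf{z},\mathbf{w}\}$, and the \emph{same} error vector $\mathbf{e}_i$ appears both in the visible first component and inside $\langle\mathbf{e}_i,\mathbf{z}\rangle$. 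Hence the two distributions you want to compare are \emph{not} a common function of $(\mathbf{A},\mathbf{A}\mathbf{z})$ versus $(\mathbf{A},\mathbf{w})$ plus independent randomness, so Lemma~\ref{lem:lhl} plus data processing does not give the $\delta$ bound. Your proposed fix — smoothing $\langle\mathbf{e}_i,\mathbf{z}\rangle+e^{**}_i$ via Lemma~\ref{lem:smoothip} — only controls the \emph{marginal} of that term (and even its width $\sqrt{\beta^2\|\mathbf{z}\|^2+\gamma^2}$ still depends on $\mathbf{z}$); it says nothing about its joint distribution with $\mathbf{A}^T\mathbf{s}_i+\mathbf{e}_i$. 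Conditioned on the first component, $\mathbf{e}_i$ lives on a coset of the much sparser lattice $\mathbf{A}^T\Z^k+\Z^n$, whose smoothing parameter is on the order of $\sqrt{n}\,q^{k/n-1}$, far above $\beta\approx\sqrt{\log n}/q$ and $\gamma/\|\mathbf{z}\|\approx\beta$; so the hint is genuinely statistically correlated with the samples. Indeed, if your independence claim held, the extLWE hint would be statistically useless and the paper's entire extLWE hardness machinery (Section~\ref{sec:extlwe}) would be unnecessary.

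The paper's proof dodges exactly this by a different ordering with a cancellation: it uses the hint $\mathbf{N}^T\mathbf{z}$ to replace the public matrix $\mathbf{A}$ by the LWE-shaped $\hat{\mathbf{A}}=q\mathbf{C}^T\mathbf{B}+\mathbf{N}$ \emph{while keeping the same} $\mathbf{N}$ in the $b$-component, so that $\hat{\mathbf{A}}^T\mathbf{z}-\mathbf{N}^T\mathbf{z}=q\mathbf{B}^T\mathbf{C}\mathbf{z}$ and all residual dependence on $\mathbf{z}$ collapses to $\mathbf{C}\mathbf{z}$; only then is the leftover hash lemma applied (to $(\mathbf{C},q\mathbf{C}\mathbf{z})$ vs.\ $(\mathbf{C},\mathbf{s})$), followed by the $\lwe_{k,m,q,\gamma}$ step and the zero-hint extLWE step. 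To repair your argument you would have to use the hint in the first transition to effect this cancellation (i.e., subtract $\mathbf{N}^T\mathbf{z}$ from the $b$-components rather than recomputing $\langle\mathbf{b}_i,\mathbf{z}\rangle$ yourself), which essentially reproduces the paper's hybrid sequence; as written, the claimed bound $|p_1-p_2|\le\adv[\cB_2]+4m\epsilon+\delta$ is not established.
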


Pointing out the trivial (transformation) reduction from $\extlwe^m_{k,n,q,\beta,\{0,1\}^n}$ to $\extlwe^m_{k,n,q,\beta,\{0^n\}}$, the lemma implies the hardness of $\zolwe_{n,m,q,\le\alpha}$ based on the hardness of $\extlwe^m_{k,n,q,\beta,\{0,1\}^n}$ and $\lwe_{k,m,q,\gamma}$.

We note that our proof is actually more general, and holds for any binary distribution of
min-entropy at least $k \log_2 q + 2 \log_2 (1/\delta)$, and not just a uniform binary secret as in the definition of $\zolwe$.

\def\veh{\hat{\vc{e}}}

\begin{proof}
The proof follows by a sequence of hybrids. 
Let $k,n,m,q,\eps, \alpha, \beta, \gamma$ be as in the lemma statement. 
We consider $\vc{z} \getsr \binset^n$ and $\vc{e} \getsr D_{\alpha'}^m$ for $\alpha' = \sqrt{\beta^2\|\vc{z}\|^2+ \gamma^2} \le \sqrt{2n}\beta=\alpha$. In addition, we let $\mx{A}\getsr \bbT_q^{n \times m}$, $\vc{u} \getsr \bbT^m$, and define $\vc{b} \getsd \mx{A}^T\cdot\vc{z}+\vc{e}\pmod{1}$. We consider an algorithm~$\cA$ that distinguishes between $(\mx{A}, \vc{b})$ and $(\mx{A}, \vc{u})$.

We let $H_0$ denote the distribution $(\mx{A}, \vc{b})$ and $H_1$ the distribution 
\[ H_1 = (\mx{A}, \mx{A}^T\vc{z}-\mx{N}^T\vc{z}+\veh \bmod 1 ),\]
where $\mx{N}\getsr D^{n \times m}_{q^{-1}\bbZ, \beta}$ and $\veh \getsr D_{\gamma}^m$.
Using $\| \vc{z} \| \le \sqrt{n}$ and that $\beta \ge \sqrt{2} \eta_\eps(\Z^n)/q$ (by Lemma~\ref{lem:boundonsmoothing}), it follows by Lemma~\ref{lem:smoothip} that the statistical distance between $-\mx{N}^T\vc{z}+\veh$ and $D^m_{\alpha'}$ is at most $4 m \epsilon$.
It thus follows that
\begin{equation}\label{eq:h0}
\abs{\Pr[\cA(H_0)] - \Pr[\cA(H_1)]} \le 4m\epsilon~.
\end{equation}

We define a distribution $H_2$ as follows. Let $\mx{B} \getsr \bbT_q^{k \times m}$ and $\mx{C} \getsr \bbT_q^{k \times n}$.
Let $\mAh \getsd q\mx{C}^T\cdot \mx{B} + \mx{N}\pmod{1}$. Finally,
\[
H_2 = (\mAh, \mAh^T\cdot\vc{z} - \mx{N}^T\vc{z}+\veh) = (\mAh, q\mx{B}^T\cdot \mx{C}\cdot\vc{z}+\veh)~.
\]

We now argue that there exists an adversary $\cB_1$ for problem $\extlwe^m_{k,n,q,\beta,\{0,1\}^n}$, such that
\begin{equation}\label{eq:h1}
\adv[\cB_1] = \abs{\Pr[\cA(H_1)]-\Pr[\cA(H_2)]}~.
\end{equation}
This is because $H_1, H_2$ can be viewed as applying the same efficient transformation on the distributions $(\mx{C}, \mx{A}, \mx{N}^T\vc{z})$ and $(\mx{C}, \mAh, \mx{N}^T\vc{z})$ respectively. Since distinguishing the latter distributions is exactly the $\extlwe^m_{k,n,q,\beta,\{0,1\}^n}$ problem (where the columns of $q\cdot\mx{B}$ are interpreted as the $m$ secret vectors), the distinguisher $\cB_1$ follows by first applying the aforementioned transformation and then applying~$\cA$.

For the next hybrid, we define $H_3 = (\mAh, \mx{B}^T\cdot \vc{s}+\veh)$, for $\vc{s} \getsr \bbZ_q^k$.
It follows that
\begin{equation}\label{eq:h2}
\abs{\Pr[\cA(H_2)]- \Pr[\cA(H_3)]} \le \delta
\end{equation}
by the leftover hash lemma (see Lemma~\ref{lem:lhl}), since $H_2, H_3$ can be derived from $(\mx{C}, q\mx{C}\cdot\vc{z})$ and $(\mx{C}, \vc{s})$ respectively, whose statistical distance is at most $\delta$.

Our next hybrid makes the second component uniform: $H_4=(\mAh, \vc{u})$. There exists an algorithm $\cB_2$ for $\lwe_{k,m,q,\gamma}$ such that
\begin{equation}\label{eq:h3}
\adv[\cB_2] = \abs{\Pr[\cA(H_3)]- \Pr[\cA(H_4)]}~,
\end{equation}
since $H_3, H_4$ can be computed efficiently from $(\mx{B}, \mx{B}^T\vc{s}+\veh), (\mx{B}, \vc{u})$.

Lastly, we change $\mAh$ back to uniform: $H_5 = (\mx{A}, \vc{u})$. There exists an algorithm $\cB_3$ for $\extlwe^m_{k,n,q,\beta,\{0^n\}}$ such that
\begin{equation}\label{eq:h4}
\adv[\cB_3] = \abs{\Pr[\cA(H_4)]- \Pr[\cA(H_5)]}~.
\end{equation}
Eq.~\eqref{eq:h4} is derived very similarly to Eq.~\eqref{eq:h1}: We notice that $H_4$, $H_5$ can be viewed as applying the same efficient transformation on the distributions $(\mx{C}, \mAh)$ and $(\mx{C}, \mx{A})$ respectively. Since distinguishing the latter distributions is exactly the $\extlwe^m_{k,n,q,\beta,\{0^n\}}$ problem (where the columns of $q\cdot\mx{B}$ are interpreted as the $m$ secret vectors), the distinguisher $\cB_3$ follows by first applying the aforementioned transformation and then applying~$\cA$.

Putting together Eq.~\eqref{eq:h0},~\eqref{eq:h1},~\eqref{eq:h2},~\eqref{eq:h3},~\eqref{eq:h4}, the lemma follows.
\end{proof}


\section{Modulus-Dimension Switching}
\label{sec:modexpansion}

The main results of this section are Corollaries~\ref{cor:mod-reduction} and~\ref{cor:mod-dim-tradeoff} below.
Both are special cases of the following technical theorem.  
We
say that a distribution~$\calD$ over~$\Z^{n}$ is $(B,\delta)$-bounded
for some reals~$B,\delta\geq 0$ if the probability that~$\vec{x} \gets
\calD$ has norm greater than $B$ is at most $\delta$. 

\begin{theorem}
  \label{thm:mod-dim-switch}
  Let $m,n, n',q,q' \geq 1$ be integers, let
  $\matG \in \Z^{n' \times n}$ be such that the lattice $\Lambda =
  \frac{1}{q'} \matG^{T} \Z^{n'} + \Z^{n}$ has a known basis $\matB$,
  and let $\calD$ be an arbitrary $(B,\delta)$-bounded distribution over
  $\Z^{n}$.  Let $\alpha, \beta > 0$ and~$\eps \in (0,1/2)$ satisfy 
	\[\beta^{2} \geq
  \alpha^{2} + (4/\pi) \ln(2n(1+1/\eps)) \cdot (\max\set{q^{-1}, \length{\gs{\matB}}} \cdot B)^{2}.\]  
	Then there is an efficient (transformation) reduction
  from~$\lwe_{n,m,q,\le \alpha}(\calD)$ to $\lwe_{n',m,q',\leq \beta}(\matG
  \cdot \calD)$ that reduces the advantage by at most $\delta + 14 \eps m$.
\end{theorem}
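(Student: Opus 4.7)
The plan is to transform each LWE sample $(\veca,b)\in\T_q^n\times\T$ with $b\equiv\langle\veca,\vecs\rangle+e\pmod 1$ into a sample $(\veca',b')\in\T_{q'}^{n'}\times\T$ with $b'\equiv\langle\veca',\matG\vecs\rangle+e'\pmod 1$ via Gaussian randomized rounding on the lattice $\Lambda=\frac{1}{q'}\matG^T\Z^{n'}+\Z^n$. First, since $\calD$ is $(B,\delta)$-bounded, I would condition on $\|\vecs\|\le B$, losing $\delta$ in distinguishing advantage. The per-sample transformation is then: (i)~use Lemma~\ref{lem:gpv} with the basis $\matB$ to sample $\vecx\gets D_{\Lambda-\veca,r}$, taking $r$ of order $\max\{q^{-1},\|\gs{\matB}\|\}\sqrt{\ln(2n(1+1/\eps))/\pi}$ (which by Lemma~\ref{lem:boundonsmoothing} majorizes $\eta_\eps(\Lambda)$ and, via the $q^{-1}$ term, also smooths out the $q^{-1}$-spacing of $\veca\in\T_q^n$); (ii)~sample an independent continuous Gaussian $f\gets D_s$; (iii)~set $\vecy:=\veca+\vecx\in\Lambda$, fix a canonical decomposition $\vecy=\frac{1}{q'}\matG^T\vecc+\vecd$ with $\vecc\in\Z^{n'}$, $\vecd\in\Z^n$ (e.g., via the Hermite Normal Form of $\matG^T$ modulo $q'$), and output $\veca':=\vecc/q'\bmod\Z^{n'}$ and $b':=b+f\bmod 1$. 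I choose $r$ and $s$ so that $r^{2}B^{2}+s^{2}\le\beta^{2}-\alpha^{2}$, which the hypothesis permits.

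For the completeness case, because $\vecs,\vecd\in\Z^n$ we have $\langle\vecd,\vecs\rangle\in\Z$, so modulo $1$
\[
b'-\langle\veca',\matG\vecs\rangle\equiv(b+f)-\langle\vecy,\vecs\rangle\equiv(e+f)-\langle\vecx,\vecs\rangle,
\]
so the new noise is $e'=(e+f)-\langle\vecx,\vecs\rangle$. Applying Lemma~\ref{lem:smoothip} to the lattice $\Lambda$ with combined continuous-noise parameter $\sqrt{\alpha^{2}+s^{2}}$ and vector $-\vecs$ of length at most $B$, I get that $e'$ is within statistical distance $4\eps$ of a Gaussian of parameter $\sqrt{r^{2}\|\vecs\|^{2}+\alpha^{2}+s^{2}}\le\beta$, provided the effective parameter $(1/r^{2}+B^{2}/(\alpha^{2}+s^{2}))^{-1/2}$ exceeds $\eta_\eps(\Lambda)$. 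A smoothing argument via Lemma~\ref{lem:smoothingcontinuous}, applied to a lattice absorbing both $\Lambda$ and $\frac{1}{q}\Z^{n}$, shows that $\vecy$ is nearly uniform modulo the kernel of the decomposition map, so $\veca'$ is close to uniform in $\T_{q'}^{n'}$. The uniform-input case is immediate: $b+f$ remains uniform and independent of $\veca'$, and the marginal of $\veca'$ is unchanged. Summing per-sample errors and the conditioning loss yields the claimed advantage loss $\delta+14\eps m$.

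The principal obstacle is the joint parameter choice for $r$ and $s$. They must simultaneously satisfy the sampling hypothesis of Lemma~\ref{lem:gpv}, the effective-parameter condition of Lemma~\ref{lem:smoothip}, and the noise budget $r^{2}B^{2}+s^{2}\le\beta^{2}-\alpha^{2}$. Taking $r$ and $s$ each of order $\sqrt{2}\,B\,\eta_\eps(\Lambda)$ balances these constraints, giving $\beta^{2}-\alpha^{2}\approx 4B^{2}\eta_\eps(\Lambda)^{2}$; the bound $\eta_\eps(\Lambda)^{2}\le(\max\{q^{-1},\|\gs{\matB}\|\})^{2}\ln(2n(1+1/\eps))/\pi$ then recovers the theorem's hypothesis on $\beta$. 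A secondary subtlety, handled by a canonical decomposition rule, is making $\veca'$ a deterministic function of $\vecy$ despite the non-uniqueness of the decomposition $\vecy=\frac{1}{q'}\matG^T\vecc+\vecd$.
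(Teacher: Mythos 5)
Your transformation is essentially the paper's (sample $\vecf \gets D_{\Lambda-\veca,r}$ via Lemma~\ref{lem:gpv}, re-encode the coset $\vecv=\veca+\vecf \in \Lambda/\Z^n$ as some $\veca' \in \T_{q'}^{n'}$, add fresh continuous noise, invoke Lemma~\ref{lem:smoothip}), but two steps would fail as written. First, your ``canonical decomposition'' resolving the non-uniqueness of $\vecy=\frac{1}{q'}\matG^T\vecc+\vecd$ is the wrong resolution: the map $\veca' \mapsto \matG^T\veca' \bmod \Z^n$ from $\T_{q'}^{n'}$ onto $\Lambda/\Z^n$ can have a nontrivial kernel (the theorem does not assume injectivity; it fails whenever $(q')^{n'} > \abs{\Lambda/\Z^n}$), and then a deterministic choice confines $\veca'$ to a set of only $\abs{\Lambda/\Z^n}$ coset representatives. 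On uniform input the output would then be far from uniform on $\T_{q'}^{n'}\times\T$, and on LWE input it would not be of the form $A_{q',\matG\vecs,\phi}$, so the oracle's advantage is not guaranteed to transfer. The correct move (the paper's) is to choose a \emph{uniformly random} solution $\veca'$ of $\matG^T\veca'=\vecv \bmod \Z^n$, using that every coset has the same number of solutions; for the specific $\matG$'s in Corollaries~\ref{cor:mod-reduction} and~\ref{cor:mod-dim-tradeoff} the map happens to be a bijection, but the theorem as stated needs the randomized choice.

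Second, the noise analysis must be carried out \emph{conditioned on} $\veca'$: both $\veca'$ and the new error $e+f-\inner{\vecx,\vecs}$ are functions of $\vecx$, so your unconditional application of Lemma~\ref{lem:smoothip} to $D_{\Lambda-\veca,r}$ only controls the marginal of the error and does not establish the (near-)independence from $\veca'$ that a genuine LWE sample requires. Conditioned on the coset $\vecv$ (equivalently on $\veca'$), the law of $-\vecx$ is, by the joint pmf computation using Lemma~\ref{lem:smoothing} with $r\ge\eta_\eps(\Lambda)$, within $2\eps$ of $D_{q^{-1}\Z^n-\vecv,\,r}$ --- a discrete Gaussian over a coset of $q^{-1}\Z^n$, \emph{not} of $\Lambda$. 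Lemma~\ref{lem:smoothip} must therefore be invoked for the lattice $q^{-1}\Z^n$, requiring $r \ge \sqrt{2}\,\eta_\eps(q^{-1}\Z^n)$ and fresh noise of parameter about $rB$; this is exactly where the $q^{-1}$ term of $\max\set{q^{-1},\length{\gs{\matB}}}$ genuinely enters, and since $q^{-1}\Z^n \not\subseteq \Lambda$ in general, your stated condition ``exceeds $\eta_\eps(\Lambda)$'' is not the right one (even though your numerical choice of $r$ happens to cover it). The same conditional computation is what yields near-uniformity of $\veca'$; Lemma~\ref{lem:smoothingcontinuous} concerns continuous Gaussians modulo a lattice and does not apply to the discrete $\vecv$, so the ``absorbing lattice'' argument needs to be replaced by the explicit estimate of $\Pr[\vecv=\bar\vecv]$ via Lemma~\ref{lem:smoothing} applied to both $\Lambda$ and $q^{-1}\Z^n$. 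Finally, a parameter slip: taking $r$ ``of order $\sqrt{2}\,B\,\eta_\eps(\Lambda)$'' would make $r^2\length{\vecs}^2$ as large as roughly $2B^4\eta_\eps^2$ and exceed the budget $\beta^2-\alpha^2 \approx 4B^2\eta_\eps^2$ once $B$ is larger than a constant; the consistent choice is the one from your step (i), $r \approx \sqrt{2}\max\set{q^{-1},\length{\gs{\matB}}}\sqrt{\ln(2n(1+1/\eps))/\pi}$, with fresh-noise parameter about $rB$.
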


Here we use the notation
$\|\widetilde \matB\|$ from Lemma~\ref{lem:gpv}. We also note that if needed, the distribution on secrets produced
by the reduction can always be turned into the uniform distribution on $\Z_{q'}^{n'}$, as mentioned after Definition~\ref{def:lwe}.
Also, we recall that there exists an elementary reduction from $\lwe_{n',q',\leq \beta}$ 
to~$\lwe_{n',q',\beta}$ (see Lemma~\ref{lem:lelwe}).

Here we state two important corollaries of the theorem.  The first
corresponds to just modulus reduction (the \lwe dimension is
preserved), and is obtained by letting $n'=n$, $\matG=\matI$ be the
$n$-dimensional identity matrix, and $\matB = \matI/q'$.  
For example, we can take $q \ge q' \ge \sqrt{2\ln(2n(1+1/\eps))} \cdot
(B/\alpha)$ and $\beta=\sqrt{2} \alpha$, which corresponds to reducing
an arbitrary modulus to almost~$B/\alpha$, while increasing the
initial error rate~$\alpha$ by just a small constant factor.

\begin{corollary}
  \label{cor:mod-reduction}
  For any $m,n \geq 1$, $q \geq q' \geq 1$, $(B,\delta)$-bounded distribution~$\calD$ over~$\Z^{n}$, $\alpha,\beta>0$ and~$\eps \in (0,1/2)$ such that
  \[
	\beta^{2} \geq \alpha^{2} + (4/\pi) \ln(2n(1+1/\eps))
  \cdot (B/q')^2,
	\]
	there is an efficient reduction from
  $\lwe_{n,m,q,\le \alpha}(\calD)$ to 
	$\lwe_{n,m,q',\le \beta}(\calD)$ that reduces the advantage by at most $\delta + 14 \eps m$.
\end{corollary}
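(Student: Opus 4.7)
The plan is to derive Corollary~\ref{cor:mod-reduction} as an immediate specialization of Theorem~\ref{thm:mod-dim-switch}, since modulus reduction without dimension change corresponds exactly to the choice of a trivial ``embedding'' map $\matG$. Concretely, I would set $n' = n$ and $\matG = \matI_n$, the $n \times n$ identity matrix, so that the secret distribution $\matG \cdot \calD$ in the conclusion of the theorem collapses back to $\calD$, exactly as required by the corollary.

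Next I would verify that the lattice hypothesis of the theorem is satisfied for these choices. With $\matG = \matI_n$, the lattice from the theorem becomes
\[
\Lambda \;=\; \tfrac{1}{q'} \matI_n^T \Z^n + \Z^n \;=\; \tfrac{1}{q'}\Z^n + \Z^n \;=\; \tfrac{1}{q'}\Z^n,
\]
where the last equality holds because $q' \ge 1$ implies $\Z^n \subseteq \tfrac{1}{q'}\Z^n$. This lattice admits the obvious basis $\matB = \matI_n/q'$, which is already orthogonal, so its Gram--Schmidt orthogonalization is itself and $\|\gs{\matB}\| = 1/q'$. Since we are given $q \ge q' \ge 1$, we have $q^{-1} \le 1/q' = \|\gs{\matB}\|$, hence $\max\{q^{-1}, \|\gs{\matB}\|\} = 1/q'$.

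Substituting this into the parameter condition of Theorem~\ref{thm:mod-dim-switch},
\[
\beta^{2} \;\ge\; \alpha^{2} + (4/\pi)\ln(2n(1+1/\eps)) \cdot \bigl(\max\{q^{-1}, \|\gs{\matB}\|\} \cdot B\bigr)^{2},
\]
becomes exactly the hypothesis of the corollary, namely $\beta^2 \ge \alpha^2 + (4/\pi)\ln(2n(1+1/\eps)) \cdot (B/q')^2$. Applying Theorem~\ref{thm:mod-dim-switch} then yields an efficient transformation reduction from $\lwe_{n,m,q,\le\alpha}(\calD)$ to $\lwe_{n,m,q',\le\beta}(\matI \cdot \calD) = \lwe_{n,m,q',\le\beta}(\calD)$ with advantage loss at most $\delta + 14\eps m$, which is precisely the statement of the corollary.

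Since this proof is nothing more than bookkeeping of parameters in the general theorem, there is no genuine obstacle: the only minor point to double-check is the identification $\tfrac{1}{q'}\Z^n + \Z^n = \tfrac{1}{q'}\Z^n$ (which uses $q'\ge 1$ and hence $\Z^n \subseteq \tfrac{1}{q'}\Z^n$) and the fact that $q \ge q'$ guarantees the maximum in the parameter expression is attained by $\|\gs{\matB}\|$. Everything else is a direct instantiation, and the resulting reduction is indeed a transformation reduction because Theorem~\ref{thm:mod-dim-switch} already provides one.
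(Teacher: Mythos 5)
Your proposal is correct and matches the paper's own derivation exactly: the paper obtains Corollary~\ref{cor:mod-reduction} by instantiating Theorem~\ref{thm:mod-dim-switch} with $n'=n$, $\matG=\matI$, and $\matB=\matI/q'$, just as you do. Your additional checks (that $\Lambda = \tfrac{1}{q'}\Z^n$, that $\|\gs{\matB}\|=1/q'$, and that $q \ge q'$ makes the maximum equal $1/q'$) are the same bookkeeping the paper leaves implicit.
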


In particular, by using the normal form of \lwe (Lemma~\ref{lem:normalform}), in which the secret has distribution $\calD = D_{\Z^{n}, \sqrt{2} \alpha q}$, we can switch to a power-of-2 modulus with only a small loss in the noise rate, as described in the following
corollary. Together with the known search-to-decision reduction (Theorem~\ref{thm:searchtodecisionmicpei}), this extends the known hardness
of (decision) $\lwe$ to \emph{any} modulus $q$.
Here we use that $\calD = D_{\Z^{n}, r}$
is $(C r \sqrt{n \log(n/\delta)}, \delta)$-bounded for some universal constant $C>0$, which
follows by taking union bound over the $n$ coordinates.
(Alternatively, one could use that it is~$(r\sqrt{n},2^{-n})$-bounded, as follows
from~\cite[Lemma~1.5]{banaszczyk93:_new}, leading to a slightly tighter statement for large $n$.)

\begin{corollary}
  \label{cor:mod-reduction-2}
  Let $\delta \in (0,1/2)$, $m \ge n \geq 1$, $q' \ge 25$. Let also $q \in [q',2q')$ be the smallest power of $2$ not smaller than $q'$
	and $\alpha \ge \sqrt{\ln(2n(1+16/\delta)/\pi)}/q$.
	There exists an efficient (transformation) reduction from $\lwe_{n,m,q, \alpha}$ to $\lwe_{n,m',q', \le \beta}$
	where $m'=m-(16 n + 4 \ln\ln q)$ and 
	\[
	\beta = C \alpha \sqrt{n} \sqrt{\log(n/\delta)\log(m/\delta)}
	\]
	for some universal constant $C>0$, that turns advantage of $\zeta$ into an advantage of at least $(\zeta-\delta)/4$.
\end{corollary}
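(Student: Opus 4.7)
\medskip\noindent\textbf{Proof plan for Corollary~\ref{cor:mod-reduction-2}.}
The plan is to compose two reductions already established in the paper: the normal-form reduction (Lemma~\ref{lem:normalform}), which converts uniform-secret $\lwe$ into $\lwe$ whose secret is drawn from a discrete Gaussian $\calD = D_{\Z^n,\sqrt{2}q\alpha}$, followed by the modulus-switching reduction (Corollary~\ref{cor:mod-reduction}), which exploits the shortness of the secret to drop the modulus from $q$ to $q'$. Concretely, starting from an algorithm $\cA$ for $\lwe_{n,m',q',\leq\beta}$ with advantage $\zeta$, I would first invoke Corollary~\ref{cor:mod-reduction} to obtain an algorithm for $\lwe_{n,m',q,\alpha}(\calD)$, and then invoke Lemma~\ref{lem:normalform} to obtain an algorithm for $\lwe_{n,m,q,\alpha}$.

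The key parameter choices, made to force the total advantage loss down to $(\zeta-\delta)/4$, would be the following. In Lemma~\ref{lem:normalform} take $\eps_1 = \delta/16$, so that the condition $\alpha \geq \sqrt{\ln(2n(1+1/\eps_1))/\pi}/q$ becomes the hypothesis $\alpha \geq \sqrt{\ln(2n(1+16/\delta))/\pi}/q$ stated in the corollary, and the advantage shrinks by a factor of $4$ after losing an additive $8\eps_1 = \delta/2$. In Corollary~\ref{cor:mod-reduction} take $\delta' = \delta/4$ (so $\calD$ is $(B,\delta/4)$-bounded with $B = O(q\alpha\sqrt{n\log(n/\delta)})$ by the hint given just before the statement), and $\eps_2 = \delta/(56 m)$, so that $14\eps_2 m = \delta/4$. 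The total additive loss before the factor of $4$ is then $\delta/2 + \delta/4 + \delta/4 = \delta$, as required.

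It remains to verify that the stated $\beta$ satisfies the hypothesis $\beta^2 \geq \alpha^2 + (4/\pi)\ln(2n(1+1/\eps_2))(B/q')^2$ of Corollary~\ref{cor:mod-reduction}. Since $q < 2q'$, one has $(B/q')^2 \leq 4B^2/q^2 = O(\alpha^2 n \log(n/\delta))$. The factor $\ln(2n(1+1/\eps_2)) = \ln(2n(1+56m/\delta))$ is $O(\log(nm/\delta))$, which (using $m \ge n$) is at most $O(\log(m/\delta))$. Multiplying gives $\beta^2 = O(\alpha^2 n \log(n/\delta)\log(m/\delta))$, yielding the claimed bound for a suitable universal constant~$C$.

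I do not expect any serious obstacle: the main reductions are already in place, and the proof is essentially parameter bookkeeping. The only subtlety is that one must apply the two reductions in the \emph{right direction} (the modulus-reduction step is applied first when going backwards from $\cA$), and one must be careful that the advantage-loss factor of $4$ introduced by the normal-form step is applied only to the post-modulus-reduction advantage. Keeping this composition straight, and checking that $q$ being a power of $2$ (so that Theorem~\ref{thm:searchtodecisionmicpei} applied in prior steps is consistent) is the only mildly nontrivial point.
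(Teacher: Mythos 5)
Your proposal is correct and follows essentially the same route the paper intends: compose the normal-form reduction (Lemma~\ref{lem:normalform}, with $\eps=\delta/16$ so the hypothesis on $\alpha$ matches) with the modulus-switching reduction (Corollary~\ref{cor:mod-reduction}, using the $(Cr\sqrt{n\log(n/\delta)},\delta)$-boundedness of $D_{\Z^n,\sqrt{2}q\alpha}$), and your advantage bookkeeping $\delta/2+\delta/4+\delta/4=\delta$ and verification of the bound on $\beta$ are exactly the intended calculation. The only detail left implicit (as in the paper) is the standard re-randomization of the secret distribution $\matG\cdot\calD$ back to uniform, which the paper notes can always be done.
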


  Another corollary illustrates a modulus-dimension tradeoff.  Assume
  $n=kn'$ for some $k \ge 1$, and let~$q' = q^{k}$.  Let~$\vec{G} =
  \vec{I}_{n'} \otimes \vec{g}$, where~$\vec{g} = (1,q,q^2, \ldots,
  q^{k-1})^T \in \Z^k$.  We then have $\Lambda = q^{-k} \vec{G}^T
  \Z^{n'} + \Z^n$.  A basis of~$\Lambda$ is given by
\[
\vec{B}= \vec{I}_{n'} \otimes 
\begin{bmatrix}
q^{-1} & q^{-2} & \cdots & q^{-k} \\
       & q^{-1} & \cdots & q^{1-k} \\
       &        & \ddots & \vdots \\
       &		&		 & q^{-1}
     \end{bmatrix} \in \R^{n \times n};
\]
this is since the column vectors of~$\vec{B}$ belong to~$\Lambda$ and
the determinants match.  Orthogonalizing from left to right, we have
$\gs{\matB} = q^{-1} \matI$ and so $\length{\gs{\matB}}=q^{-1}$.  We
therefore obtain the following corollary, showing that we can trade
off the dimension against the modulus, holding~$n \log q = n' \log q'$
fixed.  For example, letting~$\calD = D_{\Z^{n}, \alpha q}$
(corresponding to a secret in normal form, see Lemma~\ref{lem:normalform}), which is $(\alpha q
\sqrt{n}, 2^{-n})$-bounded, 
the reduction increases the error rate by about
a~$\sqrt{n}$ factor. 

\begin{corollary}
  \label{cor:mod-dim-tradeoff}
  For any $n, m,q \ge 1$, $k \ge 1$ that divides~$n$,  $(B,\delta)$-bounded distribution~$\calD$ over~$\Z^{n}$, 
	$\alpha,\beta>0$, and~$\eps \in (0,1/2)$ such 
that
\[
\beta^{2} \geq \alpha^{2} + (4/\pi)
  \ln(2n(1+1/\eps)) \cdot (B/q)^{2},
\]
  there is an efficient reduction from~$\lwe_{n,m,q,\le \alpha}(\calD)$
  to~$\lwe_{n/k,m,q^k,\le \beta}(\vec{G} \cdot \calD)$ that reduces the advantage by at most $\delta+14 \eps m$, 
where~$\vec{G} = \vec{I}_{n/k} \otimes  (1,q,q^2, \ldots, q^{k-1})^T$.
\end{corollary}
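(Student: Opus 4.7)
The plan is to realize Corollary~\ref{cor:mod-dim-tradeoff} as the specialization of Theorem~\ref{thm:mod-dim-switch} to $n' = n/k$, $q' = q^{k}$, and the specific Kronecker-structured matrix $\vec{G} = \vec{I}_{n/k} \otimes (1,q,\ldots,q^{k-1})^T$ described in the statement. Virtually all of the work is already done by the theorem; what remains is to exhibit an explicit short basis $\vec{B}$ of the lattice $\Lambda = q^{-k}\vec{G}^{T}\Z^{n/k} + \Z^{n}$ so that the abstract parameter $\max\{q^{-1},\|\widetilde{\vec{B}}\|\}$ in Theorem~\ref{thm:mod-dim-switch} simplifies to $q^{-1}$, matching the hypothesis stated in the corollary.

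First, I would verify that the block-diagonal matrix $\vec{B} = \vec{I}_{n/k} \otimes \vec{B}_k$ displayed in the excerpt, whose $k\times k$ block $\vec{B}_k$ is upper triangular with entries $(\vec{B}_k)_{i,j} = q^{j-k-1}$ for $j \ge i$ and $0$ otherwise, is indeed a basis of $\Lambda$. Because both $\vec{G}$ and the candidate basis split as tensor products, it suffices to work in one block of size $k$, i.e., to check that the lattice $q^{-k}(1,q,\ldots,q^{k-1})\Z + \Z^{k}$ is generated by the columns of $\vec{B}_k$. Each column of $\vec{B}_k$ is visibly of the form (integer combination)~$\cdot \,q^{-k}(1,q,\ldots,q^{k-1})^{T}$ plus an integer vector, so it lies in $\Lambda$; and $|\det \vec{B}_k| = q^{-k}$ equals the covolume of the block (the sublattice has index $q^{k}$ in $\Z^{k}$), giving equality.

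Second, I would compute the Gram-Schmidt orthogonalization. Since $\vec{B}$ is block diagonal it is enough to orthogonalize each $\vec{B}_k$ left-to-right. The $i$-th column of $\vec{B}_k$ lies in $\spn(\vec{e}_{1},\ldots,\vec{e}_{i})$ and its $i$-th coordinate equals $q^{-1}$, so its component orthogonal to the span of the earlier columns is exactly $q^{-1}\vec{e}_{i}$. Hence $\widetilde{\vec{B}} = q^{-1}\vec{I}_{n}$ and $\|\widetilde{\vec{B}}\| = q^{-1}$.

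Finally, I would invoke Theorem~\ref{thm:mod-dim-switch} with these parameters: since $\max\{q^{-1},\|\widetilde{\vec{B}}\|\} = q^{-1}$, the constraint $\beta^{2} \ge \alpha^{2} + (4/\pi)\ln(2n(1+1/\eps))\cdot(\max\{q^{-1},\|\widetilde{\vec{B}}\|\}\cdot B)^{2}$ becomes exactly the hypothesis $\beta^{2}\ge \alpha^{2} + (4/\pi)\ln(2n(1+1/\eps))(B/q)^{2}$ of the corollary, and the advantage loss of $\delta + 14\eps m$ carries over unchanged. The only delicate point in this verification is the Gram-Schmidt computation; the basis $\vec{B}_k$ has been chosen precisely so that orthogonalization in the standard order collapses to a constant-diagonal matrix. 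Had one instead used the naive basis $\{\vec{e}_1,\ldots,\vec{e}_{k-1},q^{-k}(1,q,\ldots,q^{k-1})^{T}\}$, the last Gram-Schmidt vector would have norm $\asymp q^{-1}$ as well, but the triangular form above makes the bound transparent and avoids any case analysis.
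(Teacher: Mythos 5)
Your proof is correct and follows exactly the paper's own route: specialize Theorem~\ref{thm:mod-dim-switch} to $n'=n/k$, $q'=q^{k}$, $\matG=\matI_{n/k}\otimes(1,q,\ldots,q^{k-1})^{T}$, exhibit a block upper-triangular basis of $\Lambda=q^{-k}\matG^{T}\Z^{n/k}+\Z^{n}$, and note that left-to-right Gram--Schmidt gives $\gs{\matB}=q^{-1}\matI$, so the theorem's condition collapses to the stated hypothesis with loss $\delta+14\eps m$. One indexing slip: the block entries should be $(\matB_k)_{i,j}=q^{i-j-1}$ for $j\ge i$ (so column $j$ is $q^{-j}(1,q,\ldots,q^{j-1},0,\ldots,0)^{T}$), not $q^{j-k-1}$; your subsequent checks (columns lie in $\Lambda$, $\lvert\det\matB_k\rvert=q^{-k}$, each diagonal entry $q^{-1}$) are consistent only with the former, so this is a typo rather than a gap. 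Also, the closing aside is inaccurate: $\{\vece_1,\ldots,\vece_{k-1},q^{-k}(1,q,\ldots,q^{k-1})^{T}\}$ is not a basis of the block lattice (its determinant is $q^{-1}$, not $q^{-k}$) and its Gram--Schmidt maximum is $1$ rather than $q^{-1}$; but that remark plays no role in your argument.
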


\noindent Theorem~\ref{thm:mod-dim-switch} follows immediately from
the following lemma.

\begin{lemma}
  \label{lem:reverse-main}
  Adopt the notation of Theorem~\ref{thm:mod-dim-switch}, and let 
\[
	r
  \geq \max\set{q^{-1}, \length{\gs{\matB}}} \cdot
  \sqrt{2\ln(2n(1+1/\epsilon))/\pi}. 
	\]
There is an efficient 
  mapping from $\T_{q}^{n} \times \T$ to $\T_{q'}^{n'} \times \T$, which
  has the following properties:
  \begin{itemize}[itemsep=0pt]
  \item If the input is uniformly random, then the output is within
statistical distance $4 \epsilon$ from the uniform distribution.
  \item If the input is distributed according to
    $A_{q,\vecs,D_{\alpha}}$ for some $\vecs \in \Z^{n}$ with
    $\length{\vecs} \leq B$, then the output distribution is within
    statistical distance $10 \epsilon$ from $A_{q',\matG\vecs,D_{\alpha'}}$, where
    $(\alpha')^{2} = \alpha^{2} + r^{2}(\length{\vecs}^{2}+B^{2}) \leq
    \alpha^{2} + 2(rB)^{2}$. 
  \end{itemize}
\end{lemma}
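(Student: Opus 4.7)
The plan is to construct the mapping explicitly using Gaussian randomized rounding on the lattice $\Lambda$, then analyze the LWE and uniform input cases separately by combining Lemma~\ref{lem:smoothip} with the smoothing bounds of Lemma~\ref{lem:smoothing}.

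Description of the mapping: Given input $(\veca, b) \in \T_q^n \times \T$, use Lemma~\ref{lem:gpv} together with the supplied basis $\matB$ of $\Lambda$ to sample $\vecx \sim D_{\Lambda - \veca, r}$ (the parameter bound on $r$ ensures applicability). Set $\vecf := \veca + \vecx \in \Lambda$, decompose $\vecf = q'^{-1} \matG^T \veca'' + \vecw$ with $\veca'' \in \Z^{n'}$ and $\vecw \in \Z^n$, and define $\veca' := \veca''/q' \bmod \Z^{n'} \in \T_{q'}^{n'}$. Independently sample $e_{\mathrm{extra}} \sim D_{rB}$ and output $(\veca', b + e_{\mathrm{extra}} \bmod 1)$. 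By construction $\matG^T \veca' \equiv \vecf \pmod{\Z^n}$.

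For LWE input $b = \langle \veca,\vecs\rangle + e$ with $e \sim D_\alpha$, the key algebraic identity is
\[
b + e_{\mathrm{extra}} - \langle \veca', \matG\vecs\rangle \equiv (e + e_{\mathrm{extra}}) - \langle \vecs, \vecx\rangle \pmod 1.
\]
Writing $N := e + e_{\mathrm{extra}} \sim D_{\sqrt{\alpha^2 + r^2 B^2}}$, which is independent of $\vecx$, Lemma~\ref{lem:smoothip} applied to the lattice $\Lambda$ with center $-\veca$ and $\vecz = -\vecs$ shows that the distribution of $N - \langle \vecs, \vecx\rangle$ is within $4\eps$ of $D_{\alpha'}$; the required smoothing bound $(1/r^2 + \|\vecs\|^2/(\alpha^2 + r^2 B^2))^{-1/2} \ge r/\sqrt 2 \ge \eta_\eps(\Lambda)$ follows from $\|\vecs\| \le B$, our lower bound on $r$, and Lemma~\ref{lem:boundonsmoothing}. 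For the marginal of $\veca'$, the identity
\[
\Pr[\veca' = \veca_0'] \;=\; q^{-n} \sum_{\veca_{\mathrm{rep}}} \rho_r\bigl((\vecf_0 - \veca_{\mathrm{rep}}) + \Z^n\bigr)\,/\,\rho_r(\Lambda - \veca_{\mathrm{rep}})
\]
(for any representative $\vecf_0 \in \Lambda$ of $\veca_0'$, summing over representatives $\veca_{\mathrm{rep}}$ of $\T_q^n$ in $[0,1)^n \cap q^{-1}\Z^n$) together with two applications of Lemma~\ref{lem:smoothing}—first to $\Lambda$ since $r \ge \eta_\eps(\Lambda)$, and then, after the change of variables $\vecy = \veca_{\mathrm{rep}} - \vecw$, to $q^{-1}\Z^n$ since $r \ge q^{-1}\sqrt{2\ln(2n(1+1/\eps))/\pi} \ge \eta_\eps(q^{-1}\Z^n)$—shows that $\Pr[\veca' = \veca_0']$ lies within a $((1+\eps)/(1-\eps))^2$-factor of $(q')^{-n'}$, hence is within $4\eps$ of uniform by Claim~\ref{clm:separationdist}.

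For the uniform input case, $b + e_{\mathrm{extra}} \bmod 1$ is uniform in $\T$ and independent of $\veca'$ (which depends only on $\veca$ and the independent sample $\vecx$), so the $4\eps$ bound on the $\veca'$-marginal gives the claim. For the LWE case, one must combine the noise-distribution and $\veca'$-marginal analyses into a bound on the joint distribution; the crucial observation is that $\langle \vecs, \vecw\rangle \in \Z$ for every $\vecw \in \Z^n$, which implies that $\langle \vecs, \vecx\rangle \bmod 1$ depends on $\vecx$ only through the coset $\vecf \bmod \Z^n$. This permits a Fourier-analytic strengthening of Lemma~\ref{lem:smoothip} applied to the joint $(\vecx \bmod \Z^n,\, N - \langle\vecs, \vecx\rangle)$, yielding the claimed $10\eps$ total statistical distance from $A_{q', \matG\vecs, D_{\alpha'}}$. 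The principal technical obstacle is precisely this joint-distribution bookkeeping: verifying that every smoothing condition invoked (on both $\Lambda$ and $q^{-1}\Z^n$) follows from the single hypothesis on $r$, and that the correlations between $\veca'$ and the effective noise wash out via the $\Z$-integrality of $\langle\vecs,\vecw\rangle$.
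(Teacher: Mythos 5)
Your mapping, algebraic identity, and the analysis of the uniform-input case and of the two \emph{marginals} in the LWE case all match the paper's argument. But the heart of the lemma is not established: you need the \emph{joint} distribution of $(\veca', b')$ to be close to $A_{q',\matG\vecs,D_{\alpha'}}$, i.e.\ the effective noise must be nearly Gaussian \emph{and nearly independent of $\veca'$}, and for this you only gesture at ``a Fourier-analytic strengthening of Lemma~\ref{lem:smoothip}'' which you never state or prove. Your unconditional application of Lemma~\ref{lem:smoothip} over $\Lambda$ controls only the marginal of $N-\inner{\vecs,\vecx}$, which does not suffice, since $\veca'$ and $\vecx$ are correlated. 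The paper closes exactly this gap by conditioning: from the two-sided estimate $\Pr[\veca=\bar\veca \wedge \vecf=\bar\vecf] \in C\,[1,\tfrac{1+\eps}{1-\eps}]\,\rho_r(\bar\vecf)$ (your own formula, read conditionally) one gets via Claim~\ref{clm:separationdist} that, \emph{given} $\veca'=\overline{\veca'}$ with $\bar\vecv=\matG^T\overline{\veca'}\bmod\Z^n$, the conditional law of $-\vecx$ is within $2\eps$ of $D_{q^{-1}\Z^n-\bar\vecv,\,r}$ --- a discrete Gaussian over a coset of $q^{-1}\Z^n$, not of $\Lambda$. One then applies Lemma~\ref{lem:smoothip} to the lattice $q^{-1}\Z^n$ (this is where the $q^{-1}$ term in the hypothesis on $r$ is used, via $r\ge\sqrt{2}\,\eta_\eps(q^{-1}\Z^n)$), with the extra noise $e'\gets D_{rB}$, to conclude that conditionally on $\veca'$ the noise is within $6\eps$ of $D_{(\alpha^2+r^2(\|\vecs\|^2+B^2))^{1/2}}$; together with the $4\eps$ for the marginal of $\veca'$ this gives the stated $10\eps$. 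Without carrying out this conditional step (or an actual proof of your proposed strengthening), the LWE case is unproven.

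A second, smaller gap concerns how $\veca'$ is chosen. You define $\veca'$ by ``decomposing'' $\vecf=q'^{-1}\matG^T\veca''+\vecw$, which is only well defined (and only yields a $\veca'$ supported on all of $\T_{q'}^{n'}$) when the map $\Z_{q'}^{n'}\to\Lambda/\Z^n$, $\veca''\mapsto q'^{-1}\matG^T\veca''\bmod\Z^n$, is a bijection. Theorem~\ref{thm:mod-dim-switch} allows arbitrary $\matG$, for which this map can have a nontrivial kernel; then your deterministic decomposition produces an $\veca'$ concentrated on a proper subset of $\T_{q'}^{n'}$, and your formula for $\Pr[\veca'=\veca_0']$ (which implicitly identifies the event $\veca'=\veca_0'$ with a single coset of $\Lambda/\Z^n$) breaks down. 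The paper avoids this by sampling $\veca'$ \emph{uniformly among all solutions} of $\matG^T\veca'=\vecv\bmod\Z^n$, so that uniformity of the coset $\vecv$ transfers to uniformity of $\veca'$ regardless of injectivity; you should do the same.
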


\begin{proof}
  The main idea behind the reduction is to encode $\T_{q}^{n}$ into
  $\T_{q'}^{n'}$, so that the mod-$1$ inner products between vectors
  in $\T_{q}^{n}$ and a short vector $\vecs \in \Z^{n}$, and between
  vectors in $\T_{q'}^{n'}$ and $\matG \vecs \in \Z^{n'}$, are nearly
  equivalent. In a bit more detail, the
  reduction will map its input vector $\veca \in \T_{q}^{n}$ (from the
  given LWE-or-uniform distribution) to a vector $\veca' \in
  \T_{q'}^{n'}$, so that
  \[ \inner{\veca', \matG \vecs} = \inner{\matG^{T} \veca', \vecs}
  \approx \inner{\veca,\vecs} \bmod 1 \] for any (unknown) $\vecs \in
  \Z^{n}$.  To do this, it randomly samples $\veca'$ so that
  $\matG^{T} \veca' \approx \veca \bmod \Z^{n}$, where the
  approximation error will be a discrete Gaussian of parameter $r$.

  
  We can now formally define the reduction, which works as follows.
  On an input pair $(\veca, b) \in \T_{q}^{n} \times \T$, it does the
  following:
  \begin{itemize}
  \item Choose $\vecf \gets D_{\Lambda-\veca,r}$ using Lemma~\ref{lem:gpv} with basis
    $\matB$, and let $\vecv = \veca + \vecf \in
    \Lambda/\Z^{n}$. (The coset $\Lambda-\veca$ is well defined since $\veca =
      \bar{\veca}+\Z^{n}$ is some coset of $\Z^{n} \subseteq \Lambda$.) Choose a uniformly
    random solution $\veca' \in \T_{q'}^{n'}$ to the equation
    $\matG^{T} \veca' = \vecv \bmod \Z^{n}$. This can be done by computing a basis of the solution set $\matG^{T}
      \veca' = \veczero \bmod \Z^{n}$, and adding a uniform element from that set to an arbitrary solution to the equation
    $\matG^{T} \veca' = \vecv \bmod \Z^{n}$.
  \item Choose $e' \gets D_{rB}$ and let $b' = b + e' \in \T$.  
  \item Output $(\veca',b')$.
  \end{itemize}

  We now analyze the reduction.  First, if the distribution of the
  input is uniform, then it suffices to show that $\veca'$ is
  (nearly) uniformly random,
  because both $b$ and~$e'$ are independent of~$\veca'$, and $b \in
  \T$ is uniform.  To prove this claim, notice that it suffices to
  show that the coset $\vecv \in \Lambda/\Z^{n}$ is (nearly) uniformly
  random, because each $\vecv$ has the same number of solutions
  $\veca'$ to $\matG^{T} \veca' = \vecv \bmod \Z^{n}$.  Next, observe that for any $\bar{\veca} \in \T_q^n$ and $\bar{\vecf} \in \Lambda - \bar{\veca}$, we have by Lemma~\ref{lem:smoothing} 
	(using that $r \ge \smootheps(\Lambda)$ by Lemma~\ref{lem:boundonsmoothing}) that
	\begin{align}\label{eq:roundingonelatticeanother}
	  \Pr[\veca = \bar{\veca} \wedge \vecf = \bar{\vecf}] &= q^{-n} \cdot \rho_r(\bar{\vecf}) / \rho_r(\Lambda-\bar{\veca}) \nonumber \\
		& \in C \bracks[\big]{1,\tfrac{1+\epsilon}{1-\epsilon}}\cdot \rho_r(\bar \vecf).
	\end{align}
	where $C=q^{-n}/\rho_{r}(\Lambda)$ is a normalizing
  value that does not depend on $\bar{\veca}$ or $\bar{\vecf}$.
	Therefore, by summing over all $\bar \veca, \bar \vecf$ satisfying $\bar \veca + \bar \vecf = \bar \vecv$, we obtain that 
	for any $\bar{\vecv} \in \Lambda/\Z^{n}$,
  \begin{align*}
    \Pr[\vecv = \bar{\vecv}] 
    &\in C \bracks[\big]{1,\tfrac{1+\epsilon}{1-\epsilon}} \cdot
    \rho_{r}(q^{-1} \Z^{n} + \bar{\vecv}).
  \end{align*}
  Since $r
  \geq \smootheps(q^{-1} \Z^{n})$ (by
  Lemma~\ref{lem:boundonsmoothing}), Lemma~\ref{lem:smoothing} 
implies that~$\Pr[\vecv = \bar{\vecv}] \in \bracks[\big]{\tfrac{1-\epsilon}{1+\epsilon},\tfrac{1+\epsilon}{1-\epsilon}} C'$
for a constant~$C'$ that is independent of~$\bar{\vec{v}}$. By Claim~\ref{clm:separationdist}, this shows that~$\veca'$
is within statistical distance~$1-((1-\eps)/(1+\eps))^2 \le 4\epsilon$ of the uniform distribution.

  It remains to show that the reduction maps $A_{q,\vecs,D_{\alpha}}$
  to $A_{q', \matG \vecs, D_{\beta}}$.  Let the input sample from the
  former distribution be $(\veca,b=\inner{\veca,\vecs}+e)$, where $e
  \gets D_{\alpha}$.  As argued above, the output $\veca'$ is (nearly)
  uniform over~$\T_{q'}^{n'}$.  So condition now on any fixed value $\overline{\veca'} \in \T_{q'}^{n'}$ of $\veca'$,
	and let $\bar \vecv = \matG^{T} \overline{\veca'} \bmod \Z^{n}$. 	
	We have
  \[ b' = \inner{\veca,\vecs} + e + e' = \inner{\overline{\veca'}, \matG \vecs} +
  e + \inner{-\vecf, \vecs} + e' \bmod 1. \] 
	By Claim~\ref{clm:separationdist} and~\eqref{eq:roundingonelatticeanother} (and noting that if $\vecf=\bar \vecf$ then $\veca = \bar \vecv - \bar \vecf \bmod \Z^n$),
	the distribution of $-\vecf$ is within statistical distance~$1-(1-\eps)/(1+\eps) \le 2 \eps$ of
	$D_{q^{-1}\Z^{n}-\bar \vecv, r}$.  By Lemma~\ref{lem:smoothip} (using $r
  \geq \sqrt{2} \smootheps(q^{-1} \Z^{n})$ and~$\length{\vecs} \le B$), the
  distribution of $\inner{-\vecf,\vecs} + e'$ is within statistical 
distance~$6 \eps$ 
  from~$D_{t}$, where $t^{2} = r^{2}(\length{\vecs}^{2}+B^{2})$.  It
  therefore follows that $e+\inner{-\vecf,\vecs} + e'$ is within statistical distance~$6 \eps$ 
from $D_{(t^2+\alpha^2)^{1/2}}$, as required.
\end{proof}

\bibliographystyle{abbrv}
\bibliography{lattices,crypto,ibe}



\end{document}
